\title{Uniformity Testing under User-Level Local Privacy}
\author{Cl\'ement L. Canonne\thanks{The University of Sydney. Email: \email{clement.canonne@sydney.edu.au}.} 
\and Abigail Gentle\thanks{The University of Sydney. Email: \email{abigail.gentle@sydney.edu.au}.} 
\and Vikrant Singhal\thanks{OpenDP, Harvard University. Email: \email{vikrant@seas.harvard.edu}.}}
\begin{document}

\allowdisplaybreaks

\maketitle
\begin{abstract}
   We initiate the study of distribution testing under \emph{user-level} local differential privacy, where each of $n$ users contributes $m$ samples from the unknown underlying distribution. This setting, albeit very natural, is significantly more challenging that the usual locally private setting, as for the same parameter $\varepsilon$ the privacy guarantee must now apply to a full batch of $m$ data points. While some recent work consider distribution \emph{learning} in this user-level setting, nothing was known for even the most fundamental testing task, uniformity testing (and its generalization, identity testing).

We address this gap, by providing (nearly) sample-optimal user-level LDP algorithms for uniformity and identity testing. Motivated by practical considerations, our main focus is on the private-coin, symmetric setting, which does not require users to share a common random seed nor to have been assigned a globally unique identifier.
\end{abstract}

\section{Introduction}
\label{sec:introduction}

We consider the problem of uniformity testing (equivalently, identity testing~\cite{Goldreich:16, CanonneTopicsDT2022}) of distributions in the setting where each of $\ns$ distributed users hold $\ms$ independent and identically distributed observations from some unknown common distribution. This naturally captures many real-world statistical scenarios, such as when data is distributed among many users' personal devices.

Consider the testing equivalent of the problem specified in~\cite{apple_differential_privacy_team_learning_2017}, where the goal is to learn users' most-used emoji. In this practical deployment, users are queried once per-day with a prespecified privacy budget. Of course, people typically use a lot more than a single emoji during that time period, and so one would hope to obtain information about many emojis at once, from each user. Yet, despite users ``sampling'' from the ``distribution'' of emoji's multiple times per day, the $\ms=1$ setting explored in earlier literature can only make use of \emph{one} sample.

\begin{framed}\itshape
    \noindent Can we leverage the fact that each user holds \emph{many} samples to test the underlying distribution more efficiently, while still preserving the privacy of each individual \emph{as a whole}?
\end{framed}

To formalize this question, we work in the framework of Differential Privacy (DP)~\cite{DMNS:06}, specifically Local Differential Privacy (LDP)~\cite{Kasiviswanathan11,Warner:65}, where data is made private before it leaves the device. This setting is of practical interest as data collection has grown massively, and many users are hesitant to trust a central curator with collecting and storing their data non-privately. Furthermore it has received theoretical attention as a well-parameterized model of learning under constrained information per-sample~\cite{AcharyaCFST21}. However, the usual setting of ``item-level'' (\ie single-sample) LDP is ill-suited to our goal, which is to capture the fact that each user can contribute many samples: na\"ively, this would correspond to viewing the data of each user as an $\ms$-tuple, blowing up the domain size from $\ab$ to $\ab^{\ms}$ and leading to severely suboptimal algorithms. Instead, we will work in the more stringent setting of \emph{user-level} LDP, whereby the privacy guarantee applies to the whole data held by any single user (see~\cref{fig:placeholder} and~\cref{sec:prelims} for an illustration and definition).

\begin{figure}[htbp]
    \centering
    \includegraphics[width=0.5\linewidth]{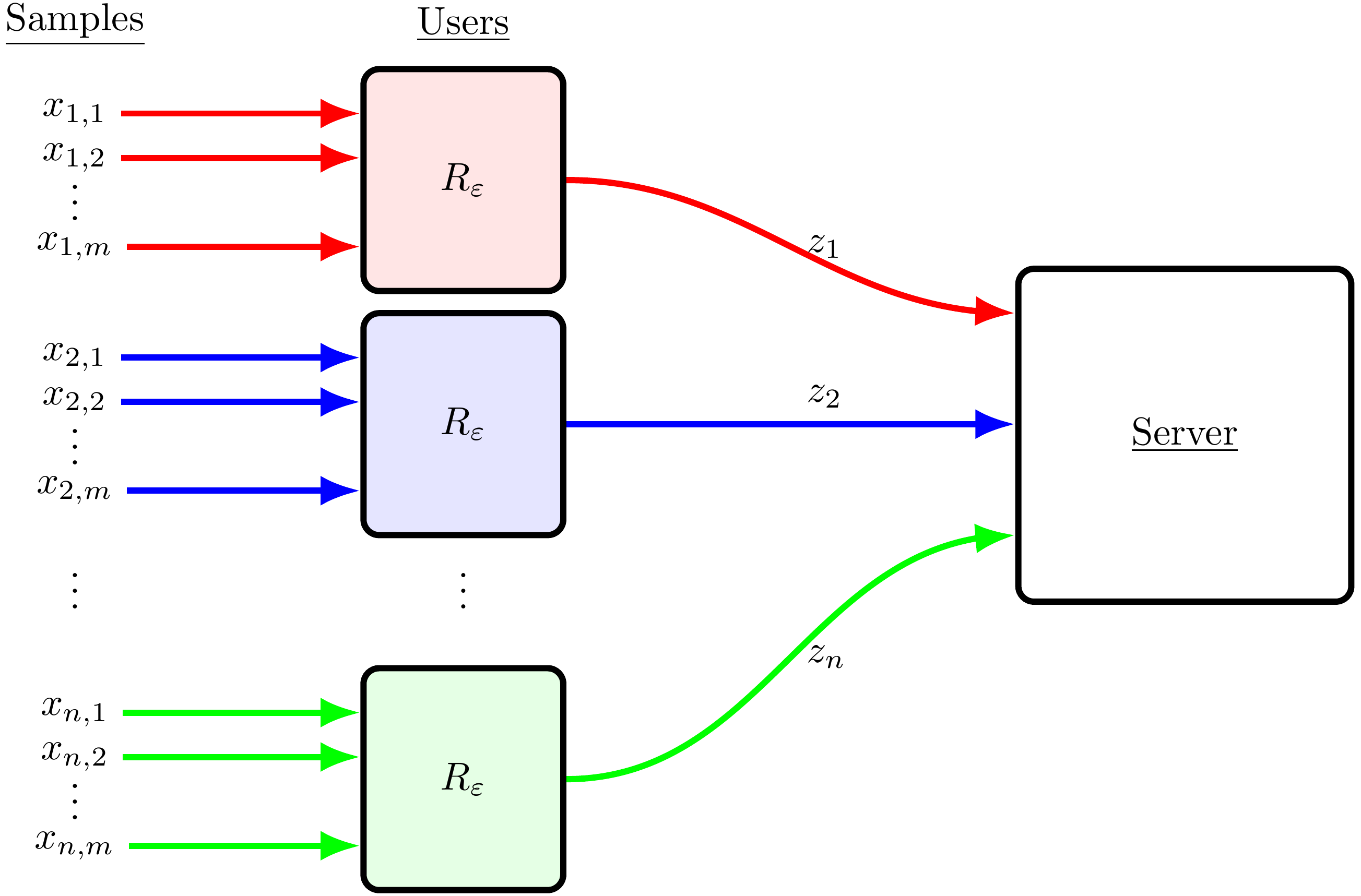}
    \caption{Graphical representation of \emph{user-level} local differential privacy. Each user holds $\ms$ samples of some unknown distribution $\p$, and must be guaranteed privacy of all $\ms$ samples at once. As pictured, each user sends a single private message to the server, which must somehow carry information about \emph{all} of their samples.}
    \label{fig:placeholder}
\end{figure}

As we elaborate in~\cref{ssec:overview}, user-level LDP is a much more stringent and challenging setting, as (roughly speaking) the algorithm's output must be remain similar even when the data of an entire user~--~a full batch of $\ms$ samples!~--~is changed arbitrarily.

As our focus is on capturing practically motivated settings, we also pay special attention to the \emph{type} of algorithm we design, and how realistic their deployment would be. For instance, \emph{adaptive} protocols, in which users interact sequentially and adapt their output to the previous message observed, are typically undesirable, as they introduce latency and a host of technical implementation challenges.

For this reason, we consider \emph{non-adaptive} protocols for this problem; which themselves come in several varieties. In particular a protocol can be \emph{public-coin} or \emph{private-coin}, and \emph{symmetric} or \emph{asymmetric}. A public-coin protocol assumes the existence of a shared random seed between all participants, while a private-coin protocol does not. Symmetric protocols are such that each user runs the same algorithm locally with the same parameters, while an asymmetric protocol allows some variation between users, as decided by the central data collector or curator. These models capture the level of coordination between the curator and the distributed users such that a \emph{public-coin asymmetric} protocol is the most coordinated, and a \emph{private-coin symmetric} protocol is the least. For this reason we will progress through the models from most-coordination to least, as the latter model is more practical and thus preferable. Of course, if the $\ms=1$ case is any indication, this practicality may come at a cost, in terms of per-user communication~\cite{AcharyaS:19} required or overall sample complexity~\cite{AcharyaCFST21}.

\begin{framed}\itshape
    \noindent Can we obtain \emph{symmetric}, \emph{private-coin} testing algorithms for user-level LDP uniformity testing? What is the cost of achieving these two desirable features, in terms of sample and communication~complexity?
\end{framed}

To ground these distinctions and motivate the question further, one can consider the following tasks that are easy in one model, but hard in its complement. A public-coin protocol may assume that all users can apply the same randomly sampled hash function to their data, while a private-coin protocol requires that the hash is either determined in advance, or sampled independently and sent by each user to the server. An asymmetric protocol may easily partition the users into groups of equal size, while to achieve the same effect a symmetric protocol must have users randomly partition themselves and send which group they landed in. This runs into the canonical coupon collector problem, where some groups will be under-sampled, requiring a more thorough analysis, or more samples.

\subsection{Overview of our Results and Techniques}
\label{ssec:overview}

In what follows, we focus on \emph{uniformity} testing, that is, the task of testing whether the unknown distribution $\p$ is uniform on the domain, or at distance at least $\dst$ (in total variation distance) from uniform. As discussed in~\cref{ssec:related}, by a now-standard reduction this directly implies the analogous statements for \emph{identity} testing, where the reference distribution is a fixed, known reference $\q$ instead of the uniform distribution $\uniform_{\ab}$.

Our setting (formally described in~\cref{sec:prelims}) involves $\ns$ users, each holding $\ms\geq 1$ \iid samples from an unknown distribution $\p$ over a known domain of size $\ab \gg 1$. They engage in a distributed protocol, either public- or private-coin, to enable a central server to perform uniformity testing on this underlying distribution in a locally private manner, with privacy parameter $\priv > 0$. Our specific focus will be on the regime $1\leq \ms \leq \sqrt{\ab}/\dst^2$, arguably the most natural and relevant, where each user has a possibly large number of observations, but not enough that they could run a uniformity testing algorithm by themselves:\footnote{Indeed, if $\ms \gg \sqrt{\ab}/\dst^2$, then each user has enough samples to perform uniformity testing by themselves, and so the question boils down to communicating the answer to the server in a locally private manner, for which $\ns = O(1/\priv^2)$ users is sufficient. When $\ms$ is even larger (specifically, at least the sample complexity of uniformity testing in the central DP model, then $\ns=1$ is enough: a single user can run the DP algorithm and send the outcome.} we implicitly place ourselves in this regime throughout.

\bigskip

Our first ``warm-up'' result is a public-coin algorithm\footnote{Throughout, and in line with the local privacy literature, we will use ``algorithm'' and ``protocol'' interchangeably.} which shows that, for uniformity testing, user-level LDP with $\ms$ samples for each of $\ns$ user behaves similarly to LDP  with \emph{one} sample for each of $\ms\ns$ users:
\begin{theorem}[{See~\cref{thm:pubcoin:ldp:1bit}}]
    \label{thm:pubcoin:ldp:1bit:intro}
There exists an \emph{asymmetric, public-coin, user-level} locally differentially private algorithm for uniformity testing (over domain of size $\ab$) which on privacy and distance parameters $\priv>0$ and $\dst\in(0,1]$ takes
\begin{equation*}
    \ns = \bigO{\frac{\ab}{\ms\dst^2\priv^2}}
\end{equation*}
users, each holding $\ms$ \iid samples from the unknown distribution and sending \emph{one} bit of communication. %
\end{theorem}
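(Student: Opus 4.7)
The approach is a \emph{public-coin subset test} adapted to the user-level regime: each user summarizes their $\ms$ samples into a single bit whose bias with respect to the shared subset carries a $\sqrt{\ms}$-amplified signal compared with the $\ms=1$ baseline. The public coin draws a uniformly random balanced subset $S\subseteq[\ab]$ of size $\ab/2$, which we encode as a sign vector $\sigma\in\{-1,+1\}^{\ab}$ with $\sum_x \sigma_x = 0$. Each user~$i$ computes $Z_i := \sum_{j=1}^{\ms}\sigma_{X_{i,j}}$, outputs $b_i := \mathbf{1}\{Z_i > 0\}$ (tie-broken uniformly at random when $Z_i=0$), and transmits $\tilde{b}_i$, the $\priv$-randomized-response version of $b_i$. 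Because $b_i$ is a function of user~$i$'s entire batch and randomized response on a single bit is $\priv$-LDP regardless of its (possibly worst-case) input, this protocol is user-level $\priv$-LDP with one bit of communication per user.

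For the accuracy analysis, let $Y_i := 2\tilde{b}_i - 1 \in \{-1,+1\}$ and consider the statistic $T := \bigl(\sum_{i=1}^{\ns} Y_i\bigr)^2$. Conditional on $\sigma$, the $Y_i$ are independent with common mean $\mu(\sigma) := \rho\cdot\bigl(2\Pr[b_i = 1 \mid \sigma] - 1\bigr)$, where $\rho := (e^{\priv}-1)/(e^{\priv}+1) = \Theta(\priv)$ for $\priv \in (0,1]$. Writing $\delta(\sigma) := \p(S) - \tfrac{1}{2} = \tfrac{1}{2}\langle \p,\sigma\rangle$, a Berry--Esseen expansion of the ``binomial-like'' variable $Z_i$ yields
\[
    \Pr[Z_i > 0 \mid \sigma] = \tfrac{1}{2} + c\sqrt{\ms}\,\delta(\sigma) + O\bigl(\ms\,\delta(\sigma)^2\bigr)
\]
for a universal constant $c>0$, valid whenever $\sqrt{\ms}\,|\delta(\sigma)| = O(1)$. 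Because $\sigma$ is balanced, under the null ($\p = \uniform_{\ab}$) we have $\delta(\sigma) \equiv 0$, so $\mu(\sigma)\equiv 0$ and $\E[T] = \ns$ exactly. Under the alternative, a direct computation using $\E[\sigma_x\sigma_y] = -1/(\ab-1)$ for $x\neq y$ gives $\E_\sigma[\delta(\sigma)^2] = \tfrac{\ab}{4(\ab-1)}\|\p - \uniform_{\ab}\|_2^2 = \Omega(\dst^2/\ab)$ by Cauchy--Schwarz, so by the tower property $\E[T] \geq \ns + \Omega(\ns^2\priv^2\ms\dst^2/\ab)$.

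It then remains to show $\Var[T] = O(\ns^2)$ under both hypotheses, after which Chebyshev's inequality separates the two means as soon as $\ns^2\priv^2\ms\dst^2/\ab \gtrsim \ns$, i.e., $\ns = O(\ab/(\ms\priv^2\dst^2))$, which is the stated bound. Under the null this is a standard fourth-moment computation for a sum of independent $\pm 1$ variables. Under the alternative, the law of total variance additionally requires control of $\Var_\sigma[\E[T\mid\sigma]]$, for which the deterministic bound $|\delta(\sigma)|\le 1$ together with the second-moment estimate above suffices. \textbf{The main technical hurdle} is justifying the Berry--Esseen linearization uniformly in~$\sigma$: it holds only on the event $\{\sqrt{\ms}\,|\delta(\sigma)| = O(1)\}$, so one must argue that the ``atypical'' $\sigma$ contribute negligibly to the moments of $T$. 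In the working regime $\ms \le \sqrt{\ab}/\dst^2$, a sub-Gaussian tail bound on $\delta(\sigma) = \tfrac12\langle \p,\sigma\rangle$ shows $\sqrt{\ms}\,|\delta(\sigma)| = o(1)$ with probability $1 - o(1)$ over the public coin, and a truncation argument (coupled with the trivial saturated estimate $|\mu(\sigma)|\le\rho$ on the atypical event, which only \emph{helps} when $p$ is far from uniform) disposes of the tail. The remainder of the proof is a routine moment computation, followed by choosing the rejection threshold for $T$ half-way between the two means.
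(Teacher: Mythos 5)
Your approach follows the paper's high-level blueprint—use a random balanced subset from the public coin to compress the domain to two elements, then privately estimate the bias of the induced coin from each user's batch of $\ms$ samples—but you reimplement the bias test from scratch (via thresholded sign bits and a squared aggregate) rather than invoking the black-box user-level coin-learning protocol of Acharya--Liu--Sun as the paper does. Both routes, correctly executed, yield the same rate; yours is more self-contained but exposes a subtlety that the paper's modular route hides (and then handles explicitly).

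The genuine gap is in the variance claim under the alternative. You assert that $\Var[T]=O(\ns^2)$ holds after invoking the law of total variance, but $\Var_\sigma\bigl[\E[T\mid\sigma]\bigr]$ is not of that order: we have $\E[T\mid\sigma]=\ns+(\ns^2-\ns)\mu(\sigma)^2$, so $\Var_\sigma[\E[T\mid\sigma]]\asymp \ns^4\,\Var[\mu(\sigma)^2]$. For generic far-from-uniform $\p$ this is comparable to $\ns^4\,\E[\mu(\sigma)^2]^2\asymp(\E[T]-\ns)^2$, i.e.\ to the square of the very mean shift you are trying to detect. As a concrete witness, take $\p$ to be uniform plus two equal point masses: then $\delta(\sigma)$ vanishes with probability about $1/2$ over the public coin, so no threshold on $T$ can reject with probability $2/3$ for that $\p$ using a single shared $\sigma$, no matter how large $\ns$ is. This is precisely the ``small-constant-probability'' phenomenon of the domain-compression lemma, and it is why the theorem can only be stated for \emph{asymmetric} public-coin protocols. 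The fix---which the paper spells out and your proof omits---is to partition the $\ns$ users into $O(1)$ disjoint batches, run the test independently on each batch with a fresh public coin, and take a majority vote; this yields success probability $2/3$ at the same asymptotic cost, while making the protocol asymmetric. Your Berry--Esseen linearization concern is real but secondary: the paper's Lemma~\ref{lemma:epluribusunum} shows the one-sided bound $|\beta|=\Omega(\min(\sqrt{\ms}|\delta|,1))$ by a three-regime case analysis, and that qualitative bound is all the argument actually needs, so a careful truncation would dispose of it as you anticipate.
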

Whether this result seems unexpected or not to the reader, one interesting aspect is that this result essentially follows from combining, in a very simple way, two known algorithmic building blocks: the first is \emph{domain compression}~\cite{ACT:19,ACHST:20}, a type of hashing which, using shared randomness, allows us to to reduce the domain size without increasing the total variation distance ``too much''. The second is the user-level LDP coin \emph{learning} (asymmetric) protocol of~\cite{AcharyaLS23}, which lets us learn the bias of a single Bernoulli (and so, \emph{a fortiori}, test it). Combining the two is then rather straightforward, modulo some bookkeeping details: (1) use domain compression to reduce the uniformity testing instance from $(\ab,\dst)$ to $(2,\dst/\sqrt{\ab})$; then (2) (privately, with $\ms$ samples) learn the bias of the resulting ``coin'' to accuracy $\dst/(2\sqrt{\ab})$.

In view of the simplicity of the first algorithm, it is natural to expect the private-coin algorithm to be similarly straightforward. As we discuss shortly, this expectation turns out to be significantly optimistic: nonetheless, we are able to establish the following, providing an analogous result in the private-coin setting. (To interpret it, it is useful to remember that, for $\ms=1$, the sample complexity of private-coin locally private uniformity testing is known to be $\Theta(\ab^{3/2}/(\dst^2\priv^2))$.)
\begin{theorem}[{Informal version of~\cref{thm:private:symmetric:combined:testing}}]
    \label{thm:prcoin:ldp:1bit:informal}
There exists a \emph{symmetric, private-coin, user-level} locally differentially private algorithm for uniformity testing (over domain of size $\ab$) which on privacy and distance parameters $\priv>0$ and $\dst\in(0,1]$ takes
\begin{equation*}
    \ns = \tildeO{\frac{\ab^{3/2}}{\ms\dst^2\priv^2}}
\end{equation*}
users, each holding $\ms$ \iid samples from the unknown distribution and sending $O(\log\ab)$ bits of communication. %
\end{theorem}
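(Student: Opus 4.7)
The plan is to mirror the Theorem~\ref{thm:pubcoin:ldp:1bit:intro} public-coin strategy, but to replace the shared-randomness domain compression with an \emph{independent}, per-user choice of probing direction. The target rate $\tildeO{\ab^{3/2}/(\ms\dst^2\priv^2)}$ equals the known $\ms=1$ private-coin LDP uniformity-testing rate $\ab^{3/2}/(\dst^2\priv^2)$ divided by $\ms$; so the $1/\ms$ improvement must come from each user first collapsing her batch into a statistic whose \emph{variance}, not merely its range, shrinks with $\ms$, and then exploiting this via a data-independent clipping step so that the $\priv$-LDP randomizer's noise also effectively scales as $1/\ms$.

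Concretely, each user $i$ independently draws a probing direction $r_i$ using her private coins (e.g.\ a uniform Hadamard row, or more generally a random $\pm 1$-valued hash $h_{r}$), forms the averaged hash value $Y_i=\frac{1}{\ms}\sum_{k=1}^{\ms} h_{r_i}(X_{i,k})\in[-1,1]$, \emph{clips} $Y_i$ to a range of width $O(\sqrt{\log(\ns/\delta)/\ms})$ matched to the $\ms$-dependent concentration of $Y_i$ (so that clipping is a no-op with high probability), adds Laplace noise with scale proportional to the clipping width over $\priv$, and sends the pair $(r_i,\widetilde Y_i)$ in $O(\log\ab)$ bits. On the server side, one aggregates over users who chose the same $r$ to form an estimate $\hat\alpha_r$ of $\sum_x p_x h_r(x)$, inverts (via Hadamard or the appropriate hash dual) to recover per-coordinate estimates $\hat p_x$, and runs a $\chi^2$-style test $T=\sum_x(\hat p_x-1/\ab)^2-B$ with a bias correction $B$ for the sampling and privacy noise. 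The key utility observation is that after clipping the Laplace noise variance is $\tilde O(1/(\ms\priv^2))$ rather than $O(1/\priv^2)$, so $\mathrm{Var}(\hat p_x)=\tilde O(1/(\ms\ns\priv^2))$ and hence $\sqrt{\mathrm{Var}(T)}=\tilde O(\sqrt{\ab}/(\ms\ns\priv^2))$ under the null; comparing to the signal $\|\p-\uniform_{\ab}\|_2^2\geq 4\dst^2/\ab$ then yields the claimed $\ns=\tildeO{\ab^{3/2}/(\ms\dst^2\priv^2)}$.

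The main obstacle, as the paper flags with its coupon-collector remark, is the \emph{symmetric, private-coin} constraint: without a shared seed, the partition of users across directions is a random multinomial allocation rather than a balanced one, and some directions inevitably receive far fewer than $\ns/\ab$ users. Controlling this via Chernoff/Bernstein bounds on the multinomial counts is what produces the polylogarithmic factors hidden in $\tildeO{\cdot}$, and it forces the bias correction $B$ to be computed in a way that is robust to the realized per-direction counts (otherwise null-case fluctuations of $T$ are inflated). A secondary but delicate issue is choosing the clipping radius: too small biases the signal away, too large loses the $1/\ms$ improvement. I expect calibrating these log factors---and proving null-case concentration of $T$ in the presence of coupon-collector randomness---to be where the bulk of the technical work lies; privacy and the $O(\log\ab)$ communication are then essentially by construction.
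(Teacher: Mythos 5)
Your outline has the right overall shape---hash the domain into $\pm 1$ ``directions,'' compress each user's batch into a quantity whose per-user variance scales like $1/\ms$, privatize with a low-sensitivity randomizer, and run an $\ell_2$-type test over directions---and, modulo the issue below, this would indeed recover the claimed rate in the regime where \emph{no} single direction $r$ carries a large bias $\alpha_r$. Where you differ cosmetically from the paper is in the compression step: the paper thresholds the empirical count $X_i$ against $\ms/2$ to produce a \emph{single bit} whose bias is $\Theta(\min(\sqrt{\ms}|\alpha_r|,1))$ (Lemma~\ref{lemma:epluribusunum}), then applies Randomized Response; you clip the centered average and add Laplace noise. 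Both achieve an effective per-user noise variance of $\tilde O(1/(\ms\priv^2))$, so this is a different but essentially interchangeable route. The paper's choice is cleaner (exactly one bit, exactly $\bernoulli{1/2}$ under the null) and feeds directly into its mean-testing lemma for Rademacher products (Theorems~\ref{thm:unifprodtest} and \ref{thm:unifprodtest:symmetric}) rather than a Hadamard-inverted $\chi^2$ statistic.

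The genuine gap is your treatment of the clipping radius, which you label a ``secondary but delicate'' calibration issue. It is not a matter of tuning log factors: no single clipping radius can work for all inputs. Take $\p$ uniform on a single Hadamard half-set $\chi_{j^\ast}$, so that $\alpha_{r^\ast}=1/2$, every other $\alpha_r=0$, and $\dst=1/2$. With $w=\Theta(\sqrt{\log(\ns)/\ms})$, every user who chose $r^\ast$ has $Y_i=1$ exactly and hence $\mathrm{clip}(Y_i)=w/2$; the resulting contribution of $r^\ast$ to your $\chi^2$ statistic is only $(w/2)^2=\Theta(\log(\ns)/\ms)$, while the null noise floor you computed is $\sqrt{\Var T}=\tilde\Theta\bigl(\ab^{3/2}/(\ms\ns\priv^2)\bigr)$. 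Equating these forces $\ns=\tilde\Omega(\ab^{3/2}/\priv^2)$, with \emph{no} $1/\ms$ savings at all---strictly worse than the claimed $\tilde O(\ab^{3/2}/(\ms\dst^2\priv^2))$ whenever $\ms\dst^2\gg 1$ (which falls squarely inside the relevant range $\ms\le\sqrt{\ab}/\dst^2$). Widening $w$ to preserve this signal scales the Laplace noise back up and again destroys the $1/\ms$ gain. Thus the clipped $\chi^2$ protocol on its own provably cannot meet the bound in the theorem.

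The paper resolves this with an entirely separate component that your sketch lacks: a complementary protocol (\cref{alg:large:m:tester}, \cref{lemma:complementary}) in which each user checks whether $\max_{j}|V_j^{(i)}-\ms/2|$ exceeds a threshold $T=\Theta(\sqrt{\ms\log(\ns\ab)})$ and reports the indicator. Under uniformity this never trips (a max-of-subgaussians tail bound), while if some $|\dst_{j^\ast}|\gtrsim\sqrt{\log(\ns\ab)/\ms}$ it trips for almost every user, giving a cheap reject signal. Running both protocols together---and using the multinomial-allocation generalization of the product tester (Theorem~\ref{thm:unifprodtest:symmetric}) to keep the whole thing symmetric without a coupon-collector loss---is what actually delivers \cref{thm:prcoin:ldp:1bit:informal}. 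Without this second detector, or a functional equivalent, your construction is only correct in the restricted regime $\ms\dst_j^2\lesssim\log\ab$ for all $j$, which the input distribution need not obey.
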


\noindent Furthermore, by making the algorithm of~\cref{thm:prcoin:ldp:1bit} asymmetric, one can reduce the communication complexity to a \emph{single} bit. One may wonder whether this trade-off between (a)symmetry and communication complexity is inherent: by a simple modification of an argument of Acharya and Sun~\cite{AcharyaS:19} (originally in the context of (item-level) locally private distribution \emph{learning}), we provide strong evidence that this trade-off is necessary for at least some parameter regime,\footnote{We cannot hope to show this trade-off for \emph{all} values of $\ms$, since for $\ms \gg \sqrt{\ab}/\dst^2$, as discussed before, there is a trivial, private-coin, symmetric protocol with a single bit per user.} showing that this is the case for $\ms=1$:
\begin{proposition}
    \label{thm:lower:bound:communication}
Any \emph{symmetric, private-coin} algorithm for uniformity testing (over domain of size $\ab$) with distance parameter $\dst\leq 1/\ab$ and $\ms=1$ sample per user requires at least $\log_2\ab$ bits of communication per user. (This holds regardless of whether the algorithm is locally private or not.)
\end{proposition}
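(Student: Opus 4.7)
The plan is to observe that in any symmetric, private-coin protocol with $\ms=1$, every user applies the same randomized local mapping $\mathcal{R}\colon [\ab] \to [2^b]$ to their single sample, where $b$ is the per-user communication in bits. The server's view thus consists of $\ns$ \iid samples from the pushforward distribution $\p\mathcal{R}$, so the tester can distinguish the uniform case from $\dst$-far alternatives only insofar as this pushforward separates them. I will therefore construct a non-uniform $\p$ at total variation distance at least $\dst$ from $\uniform_\ab$ for which $\p\mathcal{R} = \uniform_\ab \mathcal{R}$, which makes the two instances information-theoretically indistinguishable regardless of $\ns$ and immediately yields the desired contradiction.

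To produce such a $\p$, I would write $\mu_x \in \mathbb{R}^{2^b}$ for the output distribution of $\mathcal{R}$ on input $x$, and consider the linear map $q \mapsto \sum_x q_x \mu_x$. This map sends the hyperplane $H := \{q \in \mathbb{R}^\ab : \sum_x q_x = 0\}$, of dimension $\ab-1$, into the hyperplane $\{v \in \mathbb{R}^{2^b} : \sum_y v_y = 0\}$, of dimension $2^b-1$. If $b < \log_2 \ab$, the source dimension strictly exceeds the target dimension, so rank-nullity yields a non-zero $q^* \in H$ in the kernel. I would then set $\p = \uniform_\ab + q^*/\bigl(\ab\,\|q^*\|_\infty\bigr)$: the normalization ensures $\p_x \geq 0$ for every $x$, and since $q^* \in H$ the entries sum to $1$, so $\p$ is a valid distribution. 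By construction $\p \mathcal{R} = \uniform_\ab \mathcal{R}$, so the server sees identical data under $\p$ and under $\uniform_\ab$.

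It only remains to check that $\p$ is at total variation distance at least $\dst$ from uniform. For any nonzero mean-zero vector $q^*$, the positive and negative parts have equal total mass, each of which is at least $\|q^*\|_\infty$, so $\|q^*\|_1 \geq 2\|q^*\|_\infty$. The total variation distance from uniform is therefore $\|q^*\|_1/\bigl(2\ab\,\|q^*\|_\infty\bigr) \geq 1/\ab \geq \dst$, using the hypothesis $\dst \leq 1/\ab$. Note the argument does not use the privacy constraint at all, consistent with the parenthetical in the statement. The one conceptual point that needs care is that symmetry is exactly what lets us speak of a \emph{single} randomizer $\mathcal{R}$ in the first place; once we are in that regime the dimension count does all the real work, and the elementary $\ell_1$-vs-$\ell_\infty$ inequality for mean-zero vectors is the mild extra ingredient that converts $2^b < \ab$ into a usable lower bound precisely in the regime $\dst \leq 1/\ab$.
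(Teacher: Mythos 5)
Your proposal is correct and takes essentially the same route as the paper's proof: find a non-zero mean-zero perturbation in the kernel of the common randomizer's transition map, rescale it so the perturbed distribution is valid and at TV distance at least $1/\ab$ from uniform, and conclude indistinguishability. The only cosmetic differences are that you pre-restrict the linear map to the mean-zero hyperplane and invoke rank--nullity there, whereas the paper takes a kernel vector of the full $2^b\times\ab$ matrix and separately derives that it is mean-zero; and you normalize by $\|q^*\|_\infty$ while the paper normalizes by $\|q^*\|_1$, both yielding TV distance at least $1/\ab$.
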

\noindent For completeness, we give a proof of this result in~\cref{app:lowerbounds}. 
While only stated for very small $\dst$, this result shows that one cannot achieve constant communication complexity per user with symmetric private-coin protocols. %

Underlying our results is a technical lemma, likely of independent interest, which gives a way to compress many samples into one bit. This compression, which is likely to also find applications in the (non-private) bandwidth-constrained model, enables us to reduce the task to the much simpler task of privatizing a single bit, rather than a higher-dimensional message. Phrased in terms of differential privacy, this considerably reduces the sensitivity of our randomizers.

\begin{lemma}[Informal statement of~\cref{lemma:epluribusunum}]
    Let $X\sim\binomial{\ms}{1/2 + \dst}$ with $\dst\in[-1/2,1/2]$. Then, the indicator variable
    \begin{equation*}
        Y\defeq \indic{X\geq \ms/2}
    \end{equation*}
    is distributed as $Y\sim\bernoulli{1/2 + \beta}$ where
    \begin{equation*}
        \abs{\beta}=\bigOmega{\min(\sqrt{\ms}\dst, 1)}\,.
    \end{equation*}
\end{lemma}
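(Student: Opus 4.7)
The natural approach is to view the indicator bias as a function of $p$: set $f(p) \defeq \Pr[\binomial{\ms}{p} \geq \lceil \ms/2\rceil]$, so $\Pr[Y=1] = f(1/2+\dst)$ and $\beta = f(1/2+\dst) - 1/2$. The goal is to show that $f'(p) = \Theta(\sqrt{\ms})$ on an interval of width $\Omega(1/\sqrt{\ms})$ around $1/2$; the claim will then follow by integration and monotonicity of $f$. By the symmetry $\binomial{\ms}{1-p} \stackrel{d}{=} \ms - \binomial{\ms}{p}$ I can assume $\dst \geq 0$; the $\dst < 0$ case follows by considering $1 - Y$, with a boundary correction of order $O(1/\sqrt{\ms})$ for even $\ms$ that is dominated by the main term throughout the regime where the bound is nontrivial. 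I may also assume $\dst \leq 1/4$, since for $\dst \in [1/4, 1/2]$ Chebyshev's inequality yields $\Pr[X \geq \ms/2] \geq 3/4$ (for $\ms$ at least a universal constant), so $|\beta| = \Omega(1)$.

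\textbf{Derivative computation.} A standard telescoping argument applied to $f(p) = \sum_{j=k}^{\ms}\binom{\ms}{j}p^j(1-p)^{\ms-j}$, with $k := \lceil \ms/2\rceil$, collapses all but the boundary term and gives $f'(p) = \ms \binom{\ms-1}{k-1} p^{k-1}(1-p)^{\ms-k}$. Stirling's approximation yields $\binom{\ms-1}{k-1} = \Theta(2^{\ms-1}/\sqrt{\ms})$, so in particular $f'(1/2) = \Theta(\sqrt{\ms})$. More generally, writing $p = 1/2+\eta$ we have $p^{k-1}(1-p)^{\ms-k} = 2^{-(\ms-1)}(1+2\eta)^{k-1}(1-2\eta)^{\ms-k}$, whose logarithm expands (using $\log(1\pm 2\eta) = \pm 2\eta - 2\eta^2 + O(\eta^3)$ and $k = \ms/2 + O(1)$) to $-2\ms\eta^2 + O(\ms|\eta|^3 + |\eta|)$, a quantity that stays $O(1)$ as long as $|\eta|\leq c/\sqrt{\ms}$ for a suitably small constant $c$. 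Hence $f'(p) = \Theta(\sqrt{\ms})$ throughout $[1/2,\, 1/2+c/\sqrt{\ms}]$.

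\textbf{Conclusion.} Since $f$ is nondecreasing and $f(1/2) \geq 1/2$, for $0 \leq \dst \leq c/\sqrt{\ms}$ the fundamental theorem of calculus yields $\beta \geq f(1/2+\dst) - f(1/2) = \int_{1/2}^{1/2+\dst} f'(p)\,dp = \Omega(\sqrt{\ms}\,\dst)$. For $c/\sqrt{\ms} \leq \dst \leq 1/4$, monotonicity gives $\beta \geq f(1/2+c/\sqrt{\ms}) - f(1/2) = \Omega(1)$. Combining the two ranges yields $|\beta| = \Omega(\min(\sqrt{\ms}\,\dst, 1))$ as desired.

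\textbf{Main obstacle.} The substantive step is the local control of $f'(p)$, which hinges on Stirling's estimate for the central binomial coefficient together with a clean bound on $(1+2\eta)^{k-1}(1-2\eta)^{\ms-k}$ for $|\eta|$ of order $1/\sqrt{\ms}$. A secondary subtlety concerns parity: for even $\ms$, $f(1/2) = 1/2 + \Theta(1/\sqrt{\ms})$ rather than exactly $1/2$, which creates a small asymmetry between $\dst > 0$ and $\dst < 0$; an extra case split based on whether $\sqrt{\ms}|\dst|$ dominates $1/\sqrt{\ms}$ is needed to correctly absorb the boundary term $\Pr[\binomial{\ms}{1/2} = \ms/2]$ in the negative-$\dst$ direction.
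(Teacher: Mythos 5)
Your proof is correct and takes a genuinely different route from the paper's. The paper works with odd $\ms=2\ell-1$ and argues by a three-way case split on the magnitude of $\dst$: Cantelli's inequality for $\dst\geq 2/\sqrt{\ms}$; an explicit coupling of $\binomial{\ms}{1/2+\dst}$ with $\binomial{\ms}{1/2}$ plus a $\binomial{\cdot}{2\dst}$ ``surplus'' when $\dst<3/\ms$; and a Berry--Esseen (Gaussian approximation) argument for the intermediate range $3/\ms\leq\dst<2/\sqrt{\ms}$. Your argument instead treats the tail probability $f(p)=\Pr[\binomial{\ms}{p}\geq k]$ as a smooth function of the parameter, uses the classical telescoping identity $f'(p)=\ms\binom{\ms-1}{k-1}p^{k-1}(1-p)^{\ms-k}$ (which I verified: the terms $b_j=a_{j+1}$ cancel exactly), applies Stirling at $p=1/2$, controls the ratio $(1+2\eta)^{k-1}(1-2\eta)^{\ms-k}$ by expanding the logarithm for $|\eta|\lesssim 1/\sqrt{\ms}$, and integrates. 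This is more uniform and avoids Berry--Esseen entirely; the paper's approach has the virtue of being ``bare hands'' at every scale but at the cost of three separate mechanisms and some delicate constant-tracking (the cutoff $3/\ms$ in their second case is chosen precisely to dovetail with their third). Your parity observation is also sharper than what the informal statement lets on: as you note, for even $\ms$ one has $f(1/2)=1/2+\Theta(1/\sqrt{\ms})$, and for negative $\dst$ of magnitude $\Theta(1/\ms)$ the bias $\beta$ actually crosses zero, so the claimed bound $|\beta|=\Omega(\sqrt{\ms}|\dst|)$ fails there. No case split rescues it; the formal Lemma in the paper sidesteps this precisely by stipulating that $\ms$ be odd, which you should adopt rather than attempt the patch you sketch. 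One further small gap worth closing: your Chebyshev fallback for $\dst\in[1/4,1/2]$ silently requires $\ms$ above a constant threshold, so a separate (trivial, finite) check for small $\ms$ should be stated; the paper quietly has the analogous issue in its first case, where it relies on $\ms\dst>1$.
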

We briefly describe how the above lemma may lead to~\cref{thm:prcoin:ldp:1bit:informal}. The first (by now somewhat standard) idea is to use a good error-correcting code such as the Hadamard code to define a family of $\ab$ sets\footnote{For simplicity of presentation, we implicitly assume in this overview that $\ab$ is a power of $2$, that $\ms$ is odd, and that everything is a multiple of what it needs to be.} $\chi_1,\dots,\chi_{\ab} \subseteq [\ab]$, each of size $\ab/2$, such that
\begin{equation}
    \label{eq:equation:hadamard:normpreserving}
\sum_{j=1}^{\ab} \Paren{\p(\chi_j) - \frac{1}{2}}^2 \geq \totalvardist{\p}{\uniform_{\ab}}^2
\end{equation}
(note that since each set has size $\ab/2$, if $\p=\uniform_{\ab}$ then the sum is $0$). Then we can partition the users into $\ab$ groups, where the users of group $j$ ``monitor'' $\chi_j$: by computing how many of their $\ms$ samples fall into their assigned set $\chi_j$. That is, each user of a given group $j$ now observes a random variable distributed as
\[
\binomial{\ms}{\p(\chi_j)}\,,
\]
where we can rewrite $\p(\chi_j) = 1/2+\dst_j$. By~\eqref{eq:equation:hadamard:normpreserving}, we then have that $\sum_{j=1}^{\ab} \dst_j^2$ is either $0$ (if $\p$ is uniform) or at least $\dst^2$ ($\p$ is far from it). Now, our user has a $\binomial{\ms}{\p(\chi_j)}$ in hand, and we would like them to send a single bit: this is where~\cref{thm:prcoin:ldp:1bit:informal} comes in, letting the user send the bit obtained by thresholding their observation at $\ms/2$. This yields a bit which has either bias $0$ (if $\p$ is uniform, since $\ms$ is odd) or \emph{some} bias $\beta_j$ such that $\beta_j = \bigOmega{\min(\sqrt{\ms}\dst_j, 1)}$. 

By piecing together (and centreing) the bits from one user of each of the $\ab$ groups, we can view it as one $\ab$-dimensional random bit vector in $\{-1,1\}^{\ab}$, with mean
\[
    (\beta_1,\dots,\beta_{\ab})
\]
This enables us to reduce our multi-sample-per-user uniformity testing problem to (privately) testing whether a \emph{product distribution} over $\{-1,+1\}^{\ab}$ is uniform or has \emph{mean vector} with norm at least
\begin{equation}
    \label{eq:lower:bound:norm:beta}
\sum_{j=1}^{\ab} \beta_j^2 \gtrsim \sum_{j=1}^{\ab} \min(\ms\dst_j^2, 1)
\end{equation}
The good news is that we now longer have to worry about user-level privacy: each user only needs to make their \emph{one} output bit $\priv$-LDP, which is easy to achieve via Randomized Response: by standard arguments, this only changes the mean testing problem by replacing the parameter $\beta^2 \eqdef \sum_{j=1}^{\ab} \beta_j^2$ by $O(\priv^2\beta^2)$. 
All that remains after that is to use an out-of-the-box (non-private) algorithm for mean testing of Rademacher-product distributions such as the ones in~\cite{CDKS:17,CKMUZ:19}, (not quite) establishing~\cref{thm:prcoin:ldp:1bit:informal}.

\paragraph{Are we there yet?} There are still, unfortunately, a few annoying issues with the above outline. The first is the $\min$ in~\eqref{eq:lower:bound:norm:beta}: we would \emph{like} to lower bound $\sum_{j=1}^{\ab} \beta_j^2$ by $\ms\sum_{j=1}^{\ab} \dst_j^2 \geq \ms\dst^2$, but this only allows us to get $\min(\ms\dst^2,1)$, which is (much) weaker for large $\ms$. Moreover, this is actually unavoidable in our reduction to mean testing of Rademacher-products!

The key insight is that this is only an issue when some of the sets $\chi_j$ have very large bias (high or low probability) under $\p$, in which case $\sqrt{\ms}\dst_j \gg 1$. We do not have control over this~--~it can happen!~--~but this should be an easy case to detect. And indeed, we can run, on a small number of users, an alternative protocol to detect whether there exists a $1\leq j^\ast\leq \ab$ with
$\dst_{j^\ast} = |\p(\chi_{j^\ast}) - 1/2| \gg 1/\sqrt{\ms}$. By standard concentration arguments for maximum of Binomials, losing only a $\log\ab$ factor, we can argue that, with high probability, this will not lead to erroneously rejecting the uniform distribution ($\p(\chi_j)=1/2$ for all $j$), but \emph{will} detect if any of these $\p(\chi_j)$ is overly biased. If this test passes, then whether $\p$ is uniform or not, we know that all $\dst_j$ are small enough, and so $\sum_{j=1}^{\ab} \beta_j^2 \gtrsim \frac{\ms\dst^2}{\log\ab}$.\medskip

This leads us to the second annoying issue: namely, that the above outline leads to a good sample complexity, but runs two distinct sub-protocols, one of them partitioning our $\ns$ users in $\ab$ distinct groups: this is very much an \emph{asymmetric} protocol, with users running $\ab+1$ distinct randomizers depending on their identity. To handle this, there is an ``obvious'', general-purpose solution: let each user pick uniformly at random which of these $\ab+1$ randomizers to run, and send both the (private) output and the (non-private) index of the randomizer they picked. By a coupon collector argument, this works with high probability, at the cost of a logarithmic factor in the number of users (and an $O(\log\ab)$ additional bits of communication per user).

As~\cref{thm:lower:bound:communication} asserts, the latter is necessary; the former, however, is very much wasteful. To avoid paying this logarithmic factor, we provide a generalization of the result for mean testing for Rademacher-product distributions, which allows for a different (and random) number of observations per coordinate and which we believe is of independent interest (\cref{thm:unifprodtest:symmetric}). Its analysis, while intuitively simple, is rather technical and relies on a symmetrization argument by negative association: we provide it in~\cref{sec:product:testing}.

Finally, we observe that while our focus is privacy, we obtain novel results for the \textit{bandwidth-constrained} setting~\cite{AcharyaLiuSun21} as well, where each user has limited communication budget with which to share their data.

\paragraph{Na\"ive approaches: ``why did you not simply do this?''}
When considering this model, the first and most natural approach is to consider the composition theorems of differential privacy. Simple composition of differential states that sending $T$ messages, each with $\priv$-differential privacy, results in a final privacy loss of $T\cdot\priv$. We use these theorems by pretending that there are $\ns\ms$ total users, each with $\ms=1$, and each guaranteed privacy $\priv'=\priv'(\priv,\ms)$. Even in the most generous case when our sample complexity is some $\ns\propto 1/\priv$, this results in no apparent gains from the additional samples over simply picking one of the $\ms$ uniformly at random. In reality the situation is worse, as we frequently have $\ns\propto 1/\priv^2$, meaning that sending $\ms$ messages, each with privacy parameter $\priv/\ms$ will result in an \emph{even worse} final sample complexity.

There also exists \emph{advanced composition}~\cite{dwork2010a,dwork2013} that say that the same composition satisfies \emph{approximate}\footnote{We do not further consider approximate-DP in this work, however one can informally think of $\privdelta$ as the ``probability of not being $\priv$-DP''.} $(\priv,\privdelta)$-differential privacy with final $\priv'=\priv\sqrt{2T\log(1/\privdelta)}+T\priv(e^\priv - 1)$. Applying this once again to some arbitrary problem with $\ns\propto1/\priv^2$, we see that our final sample complexity is again $\ns\ms\propto\ms/\priv^2$, yielding no gains.

As mentioned earlier, another simple approach exists when $\ms$ is large enough for each user to (non-privately) test on their own. In this regime we have each user test locally, reporting a bit that indicates \accept or \reject. We then simply learn the average bit using $\priv$-LDP, for which $\ns=\bigO{1/\priv^2}$ is sufficient. Therefore, for the very specific regime when $\ms> \sqrt{\ab}/\dst^2$ and $\ns>1/\priv^2$, we have a final sample complexity $\ns\ms=O(\sqrt{\ab}/(\dst^2\priv^2))$. The existence of this approach allows us to assume $\ms\leq \sqrt{\ab}/\dst^2$ for the remainder of this work.

\paragraph{A slightly less na\"ive approach (and why it is not enough).}

Another approach one could consider is one based on testing via repetition, used in~\cite[Section~2]{CanonneYang2024} for uniformity testing in the streaming setting. Assuming that $\ms\leq \ab$, we can run the following ``privatized'' version of this protocol. For each user, $i$, and each domain element $j \in [\ab]$, let $N_{i,j}$ the number of samples of user $i$ that are $j$. Then each user computes the statistic,
\[
    Z_i = \frac{1}{\ab}\sum\limits_{j=1}^{\ab}{\indic{N_{i,j}=0}}
\]
and applies an additive $\priv$-LDP noise mechanism. Then, the final statistic computed by the server would be the average of the private versions of all the $Z_i$'s, \ie the private version of 
$
    Z = \frac{1}{\ns}\sum_{i=1}^{\ns}{Z_i}\,.
$
To make each $Z_i$ private, each user would need to add noise to $Z_i$ that is calibrated to the sensitivity, $\Delta$, of $Z_i$ (\ie the effect of changing all the $m$ samples of that user on $Z_i$ in the worst case -- if $Z_i'$ is the new statistic after changing all of user $i$'s samples, then the sensitivity is the maximum possible value of $|Z_i - Z_i'|$). In this case, the sensitivity is $(\ms-1)/\ab$. The private version of $Z_i$ is then simply $
    \tilde{Z}_i = Z_i + \tau_i,
$
where $\tau_i \sim \Lap\left(\tfrac{\ms}{\priv}\right)$, and correspondingly the private version of $Z$ is 
$
    \tilde{Z} = \frac{1}{\ns}\sum\limits_{i=1}^{\ns}{\tilde{Z}_i} = Z + \frac{\tau}{\ns},
$
where $\tau = \tau_1 + \dots + \tau_\ns$.

Now, by linearity of expectations, $\expect{\tilde{Z}} = \expect{Z} + \expect{\tau/\ns} = \expect{Z}$. Then for $\p$ $\dst$-far from $\uniform$, following the proof of~\cite[Section~2]{CanonneYang2024} we get
\[
    \shortexpect_{\p}[\tilde{Z}] - \shortexpect_{\uniform}[\tilde{Z}] \geq \frac{\ms^2\dst^2}{4e\ab^2} \eqqcolon T
\]
and, since $\tau$ and $Z$ are independent,
$
    \var_{\p}(\tilde{Z}), \var_{\uniform}(\tilde{Z}) \leq \frac{2\ms^2}{\ns\ab^3} + \frac{2\Delta^2}{\ns\priv^2}\,.
$
By Chebyshev's inequality, we then get that a protocol which computes this $\tilde{Z}$ at the server and thresholds its value at $\shortexpect_{\uniform}[\tilde{Z}]+T/2$ gives a valid (private, private-coin) uniformity testing algorithm as long as 
\[
    \ns\ms^2 \gtrsim \frac{\ab}{\dst^4} + \frac{\ab^2}{\priv^2\dst^2}.
\]

Even though this is an improvement over the previously outlined methods, the sample complexity still falls short of what we are hoping for. One may hope to improve it by using techniques like the sensitivity-reducing mapping from~\cite{AliakbarpourDKR19,AliakbarpourBCR25} that could potentially reduce the sensitivity $\Delta$ from $\ms/\ab$ to $1/\ab$. However, even in the best possible case, that would only improve the sample complexity to
\[
    \ns\ms^2 \gtrsim \frac{\ab}{\dst^4}~~~\text{and}~~~ \ns\ms^4 \gtrsim \frac{\ab^2}{\priv^2\dst^2}.
\]
As we see, even this best-case scenario would not improve the dependence on $\ab$, and be very costly for small $\ms$.

\subsection{Prior work}
    \label{ssec:related}

Uniformity testing was first considered in the context of theoretical computer science by~\cite{GRexp:00}, and the optimal sample complexity $\bigTheta{\sqrt{\ab}/\dst^2}$ was obtained in~\cite{Paninski:08}. These results have since been generalized to the \emph{identity} testing problem, where the reference distribution is not uniform, a problem later proven to be formally equivalent to uniformity testing~\cite{DK:16, Goldreich:16, CanonneTopicsDT2022}. The task has since also been considered under (differential) privacy, first in the \emph{central} model of differential privacy~\cite{CDK:17}, where the tight sample complexity was later shown to be $\bigTheta{\sqrt{\ab}/\dst^2 + \sqrt{\ab}/(\dst\sqrt{\priv}) + \ab^{1/3}/(\dst^{4/3}\priv^{2/3}) + 1/\dst\priv}$~\cite{ASZ:18:DP,ADR:17}.

Under the more restrictive model of \emph{local} differential privacy, the question of testing was raised in~\cite{Sheffet:18} and has since been wholly resolved. It is known now that the tight sample complexity for non-interactive private-coin protocols is $\bigTheta{\ab^{3/2}/(\dst^2\priv^2)}$, while non-interactive public-coin protocols achieve $\bigTheta{\ab/(\dst^2\priv^2)}$~\cite{AcharyaCFST21,AcharyaCT:IT1,ACFT:19}. Furthermore, allowing interactivity cannot improve the sample complexity beyond the bound given in the public-coin setting~\cite{IIUIC,AJM:20,thomas_locally_2020}.

In the \emph{shuffle} model of differential privacy, where users' responses are permuted before being received by the central curator, an upper bound of $\bigO{({\ab^{3/4}\sqrt{\log(1/\delta)}})/{(\dst\priv)} + \sqrt{\ab}/\dst^2}$ was shown in~\cite{CanonneL22}. The best known lower bounds were derived from a connection to \emph{pan-privacy}~\cite{BCJM:20}.

The user-level setting is a natural generalization of any distributed statistical problem. In practice, it is restrictive to consider users holding only one sample from the distribution of interest whether we are testing, learning, or performing any other distributed task. For this reason, distribution \emph{learning} with multiple samples was studied under \emph{bandwidth constraints}, where an intricate interplay between the number of samples per-user and the number of bits available to communicate the samples was shown~\cite{AcharyaLiuSun21}. Under user-level differential privacy, these results were extended to show that (among other things) for small $\priv$, one achieves a sample complexity for learning of $\ns=\bigO{\ab^2/(\ms\dst^2\priv^2)}$. Comparing this to the case when $\ms=1$ (For example~\cite{ASZ:18:HR}) we can see that the risk decreases linearly with $\ms$. We could say that for the task of learning, having $\ms$ samples per-user is \emph{as beneficial as having $\ms$ times more users to query}.

Even though there is no direct reduction between the two settings, and (as inferred from the previous discussion) the situation, as $\ms$ grows, becomes quite subtle, this gives strong evidence that our results which show a sample complexity improvement by a factor $\ms$ are tight, at least in the most relevant parameter regime for $\ms$.\vspace{-1em}

\paragraph{Organization} 
The remainder of this paper is organized as follows. \Cref{sec:prelims} establishes necessary facts and definitions we use throughout the paper, including the necessary background on differential privacy, distribution testing, and some concentration inequalities important to this work. In~\cref{sec:publiccoin}, we present a result for the public-coin setting, combining known theorems from distribution testing and user-level private learning. \Cref{sec:noname:hadamard:protocol} introduces our main technical results in the form of non-private protocols that our private protocols are built upon. In particular, we show new algorithms for various testing problems, as well as several technical lemmata we believe are of independent interest. Due to the length and technical depth of this section, we defer the question of establishing privacy to~\cref{sec:privacy}, where we show how to adapt the results of the previous section to the user-level local differential privacy setting. Finally, we conclude with some open questions surrounding user-level local differential privacy. Wherever a proof, lemma, or algorithm is deferred from the main body, we include the appropriate reference to~\cref{app:deferred}.

\section{Preliminaries}
\label{sec:prelims}
\subsection{Notation}

We use $\ns$ to refer to the number of users participating in the protocol (or in the dataset) and $\ms$ to refer to the number of samples (or data points) each user has. Additionally, we use $\ab$ to denote the domain size, $\priv$ to denote the privacy parameter, $\dst$ to denote the accuracy parameter, and $\errprob$ to denote the probability of failure (in terms of accuracy) of our protocol. We also write $[\ab] \eqdef \{1,2,\dots,\ab\}$.

Next, for a distribution $\p$, $\p^{\otimes\ms}$ refers to the $m$-dimensional product distribution with each marginal $\p$. For two distributions $\p$ and $\q$ over the same (countably infinite) domain $\Omega$, we denote by $\dtv(\p,\p)$ the total variation distance between them, defined as
\[
\totalvardist{\p}{\q} = \sup_{S\subseteq \Omega}(\p(S)-\q(S)) = \frac{1}{2}\sum_{x\in\Omega} \abs{\p(x)-\q(x)}\,.
\]
In the context of Bernoulli and Binomial distributions with parameter $p$, we call their deviation from $1/2$ (\ie $p-1/2$) their \emph{bias}.

Finally, we use the notation $\gtrsim$, $\lesssim$, and $\asymp$ to denote the (sometimes slightly more convenient) analogues of the $\Omega(\cdot)$,  $O(\cdot)$,  and $\Theta(\cdot)$ notation: specifically, for two sequences $(a_n)_n$, $(b_n)_n$ indexed by some parameter $n$, we write $a_n\lesssim b_n$ if there exists $C>0$ such that $a_n \leq C\cdot b_n$ for every $n \geq 0$, with the inequality reversed for $\gtrsim$. $a_n \asymp b_n$ then denotes that both $a_n \lesssim b_n$  and $a_n \gtrsim b_n$ hold. Throughout, $\land$ and $\lor$ denote minimum and maximum: $a\land b = \min(a,b)$, and $a\lor b = \max(a,b)$.

\subsection{Differential Privacy}

We first provide the necessary notions and results from the differential privacy (DP), starting with the definition of \emph{local} differential privacy (LDP).
\begin{definition}[Local Differential Privacy~\cite{KLNRS:11,Warner:65}]
\label{def:ldp}
    For $\priv>0$, a randomized algorithm $\pmech\colon\dmain\to\range$ provides $\priv$-Local Differential Privacy if for all $i,j\in\dmain$
    \begin{equation*}
        \max_{y\in\range}\frac{\Pr[Q(i)=y]}{\Pr[Q(j)=y]}\leq e^\priv.
    \end{equation*}
\end{definition}
While Local Differential Privacy is somewhat involved to define for interactive protocols, where each user can send (in an adaptive manner) several messages, it is simpler in our setting. We consider non-interactive protocols, where each user only sends one message to the server. When each user holds several datapoints (that is, $\cX = [\ab]^{\ms}$), the above definition then directly corresponds to the \emph{user-level} LDP guarantee considered in this paper.

We will rely on the (binary) \emph{Randomized Response} mechanism, which is an optimal $\priv$-local differential privacy protocol for 1-bit inputs~\cite{Kasiviswanathan11}. Formally, let $Q$ be the local randomizer for this protocol where $Q(x)$ is a random variable, and $Q(y\,\vert\, x)$ is the probability of seeing output $y$ on input $x$.
\[
Q(y\,\vert\, x)=\begin{cases}
    \frac{e^\priv}{e^\priv + 1} &y=x\\
    \frac{1}{e^\priv + 1}& y\neq x.
\end{cases}
\]

\subsection{Distribution and Uniformity Testing}

Here, we formally state the problem and setting. We are interested in the problem of \emph{uniformity testing} with multiple samples per user. Specifically, given a discrete distribution $\p$ over $\ab$ elements, which we assume without loss of generality is over the domain $[\ab]$, each user $i=1,\dots,\ns$ holds a multi-sample $X_i \sim \p^{\otimes\ms}$ which is an $\ms$-dimensional vector of \iid samples from $\p$. We are interested in algorithms that can distinguish between the cases where $\dtv(\p,\uniform)=0$ and $\dtv(\p,\uniform)>\dst$ via the fewest possible number of users $\ns$. As mentioned in the introduction, barring some kind of \emph{information constraint} per-user, this problem reduces to the well-studied case where $\ms=1$ by having each user send all $\ms$ of their samples and treating each as its own message. This problem becomes non-trivial when we are either \emph{bandwidth-constrained} or \emph{privacy-constrained}. In the former, users have up to $\numbits$ bits with which to communicate their samples. In the latter, the user is constrained by \emph{differential privacy}. 

In the user-level LDP setting, upon receiving $\ns$ private responses $Y_1,\dots,Y_\ns$ from the users, the server must distinguish between the two cases:
\begin{itemize}
    \item $\dtv(\p,\uniform)=0$, and
    \item $\dtv(\p,\uniform)>\dst$,
\end{itemize}
with probability at least $2/3$, while satisfying $\priv$-user-level LDP. (The threshold $2/3$ is somewhat arbitrary, and can be amplified to any $1-\errprob$ by standard arguments at a sample complexity cost of $O(\log(1/\errprob))$.)
\subsection{Useful Probability Tools}

We first recall Cantelli's inequality, a one-sided version of Chebyshev's inequality:
\begin{lemma}[Cantelli's inequality]
    \label{lemma:cantelli}
    Let $X$ be a real-valued random variable with finite variance. Then, for every $\lambda > 0$,
    \[
    \probaOf{X \geq \expect{X} + \lambda }, \probaOf{X \leq \expect{X} - \lambda } \leq \frac{\var[X]}{\var[X]+\lambda^2}\,.
    \]
\end{lemma}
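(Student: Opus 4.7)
The plan is to reduce to the centered case, then apply Markov's inequality to a well-chosen nonnegative transformation of $X$, and finally optimize a free parameter. First I would set $Y \defeq X - \expect{X}$ so that $\expect{Y} = 0$ and $\var[Y] = \var[X] \eqdef \sigma^2$; the two stated inequalities are then symmetric (apply the first to $Y$ and to $-Y$), so it suffices to show $\probaOf{Y \geq \lambda} \leq \sigma^2/(\sigma^2 + \lambda^2)$.

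Next, for any free parameter $u \geq 0$, I would observe that on the event $\{Y \geq \lambda\}$ we also have $Y + u \geq \lambda + u > 0$, and hence $(Y+u)^2 \geq (\lambda + u)^2$. Consequently,
\[
\probaOf{Y \geq \lambda} \leq \probaOf{(Y+u)^2 \geq (\lambda+u)^2} \leq \frac{\expect{(Y+u)^2}}{(\lambda+u)^2} = \frac{\sigma^2 + u^2}{(\lambda+u)^2},
\]
where the second inequality is Markov's (applicable since $(Y+u)^2 \geq 0$) and the final equality uses $\expect{Y}=0$.

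The remaining step is to minimize the right-hand side over $u \geq 0$. A routine derivative computation shows the optimal choice is $u = \sigma^2/\lambda$, and plugging this back yields exactly $\sigma^2/(\sigma^2 + \lambda^2)$, which is the desired bound. Applying the same argument to $-Y$ gives the matching lower-tail inequality.

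The argument is entirely elementary; there is no real obstacle, only the mild bookkeeping of choosing $u$ correctly. One small point worth mentioning is that the bound is trivially $1$ if $\var[X] = \infty$, and in the finite-variance case the optimization is well-defined because $\lambda > 0$; so no degeneracies arise.
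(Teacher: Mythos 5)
The paper states Cantelli's inequality as a known classical fact without including a proof, so there is nothing to compare against directly. Your argument is the standard and correct derivation: center the variable, apply Markov's inequality to the nonnegative quantity $(Y+u)^2$, and optimize over the shift $u \geq 0$ to obtain the stated bound; the symmetry reduction for the lower tail is also handled correctly.
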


We will also require the following standard tail bound for subgaussian random variables (see \eg~\cite{MathSE23}):
\begin{lemma}
    \label{lemma:maxsubgaussian}
Let $X_1,\dots,X_n$ be (not necessarily independent) $\sigma^2$-subgaussian random variables with mean zero. Then
\[
\mathbb{E}[\max_{1\leq i\leq n} X_i] \leq \sqrt{2\sigma^2\log n} 
\]
and, for every $t>0$, 
\[
\mathbb{P}\!\left\{\max_{1\leq i\leq n} X_i \geq \sqrt{2\sigma^2(\log n + t)}\right\} \leq e^{-t}\,. 
\]
In particular, this implies that, for every $t>0$, 
\[
\mathbb{P}\!\left\{\max_{1\leq i\leq n} |X_i| \geq \sqrt{2\sigma^2(\log(2n) + t)}\right\} \leq e^{-t}\,. 
\]
\end{lemma}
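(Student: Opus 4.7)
The plan is to invoke the defining moment generating function (MGF) property of a $\sigma^2$-subgaussian random variable, namely $\mathbb{E}[e^{\lambda X_i}] \leq e^{\lambda^2\sigma^2/2}$ for all $\lambda\in\mathbb{R}$, and combine it with the standard Chernoff + union bound machinery. Crucially, independence is never needed: every step either uses linearity of expectation, the union bound, or monotonicity of the exponential, none of which require the $X_i$ to be jointly independent.

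For the expectation bound, I would first apply Jensen's inequality to the convex function $x\mapsto e^{\lambda x}$ (for $\lambda>0$) and then bound the maximum of nonnegative terms by their sum:
\[
e^{\lambda\,\mathbb{E}[\max_i X_i]} \leq \mathbb{E}\!\left[\max_i e^{\lambda X_i}\right] \leq \sum_{i=1}^n \mathbb{E}[e^{\lambda X_i}] \leq n\, e^{\lambda^2\sigma^2/2}.
\]
Taking logarithms and dividing by $\lambda$ yields $\mathbb{E}[\max_i X_i] \leq \frac{\log n}{\lambda} + \frac{\lambda\sigma^2}{2}$, which I would optimize by choosing $\lambda=\sqrt{2\log n/\sigma^2}$, giving the claimed $\sqrt{2\sigma^2\log n}$.

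For the tail bound, I would simply use the union bound followed by the Chernoff bound for each $X_i$: for any $u>0$,
\[
\mathbb{P}\!\left\{\max_i X_i \geq u\right\} \leq \sum_{i=1}^n \mathbb{P}\!\left\{X_i \geq u\right\} \leq n\,e^{-u^2/(2\sigma^2)},
\]
where the per-coordinate Chernoff bound is the standard consequence of the MGF property (optimize $\mathbb{E}[e^{\lambda X_i}]e^{-\lambda u}$ over $\lambda>0$). Choosing $u=\sqrt{2\sigma^2(\log n + t)}$ makes the right-hand side exactly $e^{-t}$. For the two-sided version, observe that if $X_i$ is mean-zero $\sigma^2$-subgaussian, then so is $-X_i$ (the MGF property is symmetric under $\lambda\mapsto -\lambda$), so $\max_i |X_i|$ is the maximum of the $2n$ $\sigma^2$-subgaussian variables $\{X_i\}\cup\{-X_i\}$; applying the one-sided bound with $2n$ in place of $n$ gives the stated $\sqrt{2\sigma^2(\log(2n)+t)}$.

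There is no real obstacle here: this is a textbook exercise and the only mild care needed is to (i) make sure $\lambda>0$ when taking logs after Jensen, and (ii) justify the reduction from $|X_i|$ to a set of $2n$ subgaussian variables rather than trying to use a tail bound for $|X_i|$ directly (which would introduce an extra factor of $2$ inside the exponent and be less clean). Everything else is routine optimization of the Chernoff parameter.
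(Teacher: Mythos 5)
Your proof is correct and is precisely the standard Chernoff/union-bound argument that the paper itself invokes by citation (the paper gives no self-contained proof of \cref{lemma:maxsubgaussian}, referring instead to a standard source). Both the Jensen-plus-max-by-sum argument for the expectation and the symmetrization via the $2n$ variables $\{X_i\}\cup\{-X_i\}$ for the two-sided tail are exactly the intended route, and you are right that independence is never used.
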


\section{Public-Coin via Domain Compression}
\label{sec:publiccoin}
In this section, as a warmup, we show how to establish our public-coin result, \cref{thm:pubcoin:ldp:1bit:intro}, leveraging existing algorithm building blocks, \emph{domain compression} and \emph{user-level coin estimation}. We restate it below:
\begin{theorem}[LDP 1-bit asymmetric public coin uniformity testing]
\label{thm:pubcoin:ldp:1bit}
    There exists an \emph{asymmetric, public-coin, user-level} locally differentially private algorithm for uniformity testing (over domain of size $\ab$) which on privacy and distance parameters $\priv>0$ and $\dst\in(0,1]$ takes
\begin{equation*}
    \ns = \bigO{\frac{\ab}{\ms\dst^2\priv^2}}
\end{equation*}
users, each holding $\ms$ \iid samples from the unknown distribution and sending \emph{one} bit of communication.
\end{theorem}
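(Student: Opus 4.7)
My plan is to establish this result by composing two off-the-shelf primitives, exactly as foreshadowed in the introduction: \emph{domain compression} using the shared public random seed, and the user-level LDP coin-learning protocol of~\cite{AcharyaLS23}. Using the shared randomness, all users first agree on a uniformly random balanced hash $h\colon[\ab]\to\{0,1\}$ (equivalently, a uniformly random bipartition of $[\ab]$ into two halves of size $\ab/2$). Each user locally applies $h$ to each of their $\ms$ samples, yielding an $\ms$-tuple of \iid Bernoulli draws with parameter $\theta \defeq \p(h^{-1}(1))$. When $\p=\uniform_{\ab}$, the balancedness of $h$ forces $\theta=1/2$ regardless of $h$; when $\totalvardist{\p}{\uniform_{\ab}}>\dst$, the standard domain compression analysis of~\cite{ACT:19,ACHST:20} guarantees that, with at least some absolute constant probability over $h$, $|\theta-1/2|\gtrsim\dst/\sqrt{\ab}$.

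At this point the task has been reduced to a user-level LDP \emph{coin testing} problem: each of the $\ns$ users holds $\ms$ \iid $\bernoulli{\theta}$ samples, and the server must distinguish $\theta=1/2$ from $|\theta-1/2|\gtrsim\dst/\sqrt{\ab}$. For this I invoke, as a black box, the asymmetric user-level LDP coin-\emph{learning} protocol of~\cite{AcharyaLS23}: in the regime $\ms \lesssim 1/\tau^2$, it estimates the bias of a single Bernoulli to additive accuracy $\tau$ using $\ns=\bigO{1/(\ms\tau^2\priv^2)}$ users and one bit of communication per user. Setting $\tau=c\dst/\sqrt{\ab}$ for a sufficiently small constant $c$ (the constraint $\ms\leq 1/\tau^2 = \ab/(c^2\dst^2)$ follows easily from our standing assumption $\ms\leq\sqrt{\ab}/\dst^2$) and thresholding $|\hat{\theta}-1/2|$ at $\tau/2$ yields a valid test using
\begin{equation*}
    \ns = \bigO{\frac{1}{\ms\cdot(\dst^2/\ab)\cdot\priv^2}} = \bigO{\frac{\ab}{\ms\dst^2\priv^2}}
\end{equation*}
users, matching the stated bound.

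The remaining work is essentially bookkeeping, and is the only (minor) obstacle in the plan. The overall success probability is the product of two constants --- the probability that $h$ is ``good'' when $\p$ is far from uniform, and the $2/3$ success probability of the underlying coin learner --- and is therefore itself a positive constant, which can be amplified to $2/3$ by a constant-factor blowup in $\ns$ via standard repetition. On the privacy side, the per-user pipeline is ``apply $h$ deterministically, then feed the resulting $\ms$-bit vector into the base randomizer''; since $h$ is fixed by the public randomness, this is just a deterministic preprocessing of each user's data, so the $\priv$-user-level LDP guarantee of the base randomizer transfers immediately to the full protocol (LDP is verified conditional on the public seed). The per-user communication is a single bit, inherited directly from the one-bit output of the~\cite{AcharyaLS23} randomizer. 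I do not anticipate any deeper difficulty; the only care point worth checking is that the parameter regime of the coin-learning primitive is compatible with our standing assumption on $\ms$, which, as noted above, it is.
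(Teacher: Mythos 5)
Your proposal follows the paper's proof essentially verbatim: public-coin domain compression reduces the $\ab$-ary uniformity testing problem to a coin-bias test with gap $\Theta(\dst/\sqrt{\ab})$, which is then solved via the user-level LDP coin-learning primitive and amplified by repetition over disjoint batches of users (this last step also being what renders the final protocol asymmetric). The small surface differences --- you phrase the domain compression as a balanced hash and verify the side condition $\ms\lesssim 1/\tau^2$, whereas the paper invokes the domain compression lemma directly and verifies the condition $\ns\gtrsim\log(\ms)/\priv^2$ from its coin-learning citation --- are immaterial, as both checks go through under the standing assumption $\ms\leq\sqrt{\ab}/\dst^2$ and the overall argument is the same.
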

\begin{proof}[Proof (Detailed sketch)]
    The protocol works as follows: using public randomness, the $\ns$ users jointly perform domain compression, \ie~a type of hashing of the domain, reducing the domain size from $\ab$ to $2$. By the following lemma, with constant probability, this preserves the distance between distributions up to a $\sqrt{\ab}$ factor:

    \begin{lemma}[Domain Compression~\cite{CanonneTopicsDT2022}]
There exists absolute constants $c_1,c_2>0$ such that the following holds. For any $2\leq\ell\leq\ab$ and any $\p,\q\in\Delta_{\ab}$,
\begin{equation*}
    \probaDistrOf{\Pi}{\dtv(\p_\Pi,\q_\Pi)\geq c_1 \sqrt{\frac{\ell}{\ab}}\dtv(\p,\q)}\geq c_2
\end{equation*}
where $\Pi=(\Pi_1,\dots,\Pi_\ell)$ is a uniformly random partition of $[\ab]$ into $\ell$ subsets, and $\p_\Pi\in\Delta_\ell$ denotes the probability distribution on $[\ell]$ induced by $\p$ and $\Pi$.
\end{lemma}

That is, if $\p=\uniform_{\ab}$, then this (always) results on a distribution uniform over a domain size $2$ (\ie~a fair coin), while if $\p$ was $\dst$-far from uniform this results (with probability $\Omega(1)$ in a coin with bias $\dst' = \Omega(\dst/\sqrt{\ab})$.

All $\ns$ users now having $\ms$ samples from the same induced ``coin'', all that remains is to learn the bias of this coin to accuracy $\dst'/4$ in order to distinguish between the two cases. This can be done with the following existing protocol:
\begin{theorem}[{\cite[Theorem~3.2]{AcharyaLiuSun21}}]
    \label{theo:personlevel:biasedcoin}
For $\priv \in(0,1]$, there exists a private-coin algorithm for person-level coin bias estimation with
\[
\expect{(\hat{p}-p)^2} = \bigO{\frac{1}{\ns\ms\priv^2}}
\]
assuming $\ns \geq C\cdot \log(\ms)/\priv^2$, where $C>0$ is an absolute constant.
\end{theorem}
Phrased differently, with the above one can privately learn the bias of a coin to an additive $\dst'$, with arbitrary (constant) probability, in the user-level LDP setting, using $\ns  = O(1/(\ms\dst^2\priv^2)$ users.

Putting things together and recalling our setting of $\dst'$, the above protocol then succeeds with (small) constant probability, as long as
    \[
        \ns\ms \gtrsim \frac{\ab}{\dst^2\priv^2}
    \]
``as claimed.'' This is so far a symmetric protocol: but the probability of success, for distributions far from uniform, is quite small, as the domain compression only preserves the distances with small probability. To amplify the probability of success to $2/3$ by standard techniques, we repeat the protocol on disjoint batches of users and combine their answers via a majority vote, leading to an asymmetric protocol. Finally, note that the condition on $\ns$ from \cref{theo:personlevel:biasedcoin} is indeed satisfied, as $\frac{\ab}{\ms\dst^2\priv^2} \gg \frac{\log(\ms)}{\priv^2}$.
\end{proof}

\section{Low-Bandwidth Private-Coin via Hadamard Matrices}
\label{sec:noname:hadamard:protocol}
As discussed in~\cref{sec:introduction}, our high-level approach will proceed as follows: given $\ns$ users, each holding $\ms$ samples from some unknown discrete distribution $\p$ over $\ab=2^t$ elements, we first assign each user $i$ a group $G_{j}$, for $2\leq j\leq \ab$ (intentionally dropping a group  so that we have $\ab-1$ groups). As a first attempt, consider this assignment to be a deterministic function $j(i)$ that evenly partitions the $\ns$ users among $\ab$ groups. Later, to remove this coordination step, which would lead to an asymmetric protocol, we will instead have each users select the index $j$ of their group uniformly at random. We describe our algorithm in stages: first, we give in~\cref{ssec:asymmetric:smallm} an asymmetric algorithm well-suited for ``small'' values of $\ms$, before providing a complementary approach to detect large variations from uniformity (\cref{sec:complementary:cornercase}), and explaining how to put them together in~\cref{ssec:combining:nonprivate}.

\subsection{An algorithm for small \texorpdfstring{$\ms$}{m}}
    \label{ssec:asymmetric:smallm}

Using the same indexing as for the groups, let $\chi_j$ be the set of indices where a 1 appears in the column $j$ of the Hadamard matrix $H\in \bool^{\ab\times\ab}$, that is, 

\begin{equation}
\label{eq:hadamard:subsets}
    \chi_j \eqdef \setOfSuchThat{r \in [\ab]}{ H_{rj} = +1}\,, \qquad 2\leq j\leq \ab.
\end{equation}

The properties of Hadamard matrices ensure that each of its columns is a column vector of length $2^t$, where half of the positions are $+1$, and the other half are $-1$, and so $\abs{\chi_j} = \ab/2$ for all $j$. Hereafter, we will write $\p(\chi_j)$ to denote the probability that a sample under $\p$ falls in set $\chi_j$, \ie $\p(\chi_j)=\sum_{r\in\chi_j}\p_r$. User $i$ then computes 
\begin{equation*}
    X_i=\sum_{\ell=1}^\ms \indic{x_\ell\in\chi_{j(i)}}, 
\end{equation*}
\ie the number (out of their $\ms$ samples) that lie within $\chi_{j}$ (the subset of the domain they are ``monitoring'').
One can observe the following distinction of cases:
\begin{itemize}
    \item If $\p=\uniform_{\ab}$, then $X_i\sim\binomial{\ms}{1/2}$.
    \item If $\dtv(\p,\uniform_{\ab})>\dst$, then $X_i\sim\binomial{\ms}{1/2+\dst_j}$, for some $\dst_j$ which will be related to $\dst$ by~\cref{lemma:hadamard:norm:preservation}.
\end{itemize}
We call $\dst_j$ the \emph{bias} observed by group $G_j$. User $i$, instead of directly sending $X_i$, will then compute the one-bit indicator
\[
Y_i =\indic{X_i\geq \frac{\ms+1}{2}},
\]
and send this to the server. We will show that, given $\ns=\bigO{\frac{\ab^{3/2}}{\ms\dst^2}}$ independently drawn samples $Y_1,\dots, Y_\ns$ as above, there exists an algorithm that distinguishes between $\p=\uniform_{\ab}$ and $\p$ $\dst$-far from $\uniform_{\ab}$ (with probability at least $2/3$).

\begin{theorem}[Asymmetric 1-bit multi-sample uniformity tester]
\label{thm:prcoin:1bit:assym}
    Given $\ns$ users, each holding $\ms$ samples of some unknown distribution $\p$ on $\ab$ elements. There exists an algorithm (\cref{algo:private:coin:asymmetric:nonprivate}) that distinguishes between $\p=\uniform$ and $\dtv(\p,\uniform)>\dst$ using
    \[
    \ns = \bigO{\frac{\ab^{3/2}}{\ms\dst^2\wedge 1}},
    \]
    samples. Moreover, each user only sends \emph{one} bit.
\end{theorem}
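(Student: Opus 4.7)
The plan is to reduce the task to a chi-square--style mean test on a product of $\ab - 1$ independent Bernoullis, one coordinate per Hadamard group. First I partition the $\ns$ users uniformly into $\ab-1$ groups of size $n \defeq \ns/(\ab-1)$, so that each user in group $G_j$ produces $X_i \sim \binomial{\ms}{1/2 + \dst_j}$ with $\dst_j \defeq \p(\chi_j) - 1/2$, and outputs $Y_i = \indic{X_i \geq (\ms+1)/2}$. Invoking~\cref{lemma:epluribusunum}, and taking $\ms$ odd so that $Y_i \sim \bernoulli{1/2}$ exactly when $\dst_j = 0$, we get $Y_i \sim \bernoulli{1/2 + \beta_j}$ with $|\beta_j| \gtrsim \min(\sqrt{\ms}\,|\dst_j|,\,1)$. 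Combining this with the Hadamard norm-preservation bound $\sum_{j=2}^{\ab} \dst_j^2 \geq \totalvardist{\p}{\uniform_{\ab}}^2$ (\cref{lemma:hadamard:norm:preservation}) and the elementary inequality $\sum_j \min(a_j,1) \geq \min(\sum_j a_j,\,1)$, we get $\sum_j \beta_j^2 \gtrsim \min(\ms\dst^2,\,1)$ in the alternative case, while $\beta_j \equiv 0$ under uniformity.

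Next, set $Z_j \defeq \sum_{i \in G_j}(2 Y_i - 1)$, independent across $j$, with $\expect{Z_j} = 2n\beta_j$ and $\var{Z_j} = n(1 - 4\beta_j^2) \leq n$. The server computes the centered chi-square-like statistic
\begin{equation*}
S \defeq \sum_{j=2}^{\ab}\bigl(Z_j^2 - n\bigr),
\end{equation*}
which has $\expect{S} = 0$ under the uniform hypothesis and $\expect{S} = 4n(n-1)\sum_j \beta_j^2 \gtrsim n^2 \min(\ms\dst^2,\,1)$ under the alternative. A standard fourth-moment expansion for a sum of $n$ bounded centered iid variables (plus a cross-term involving $\cov(\tilde Z_j^2,\tilde Z_j)$) yields $\var{Z_j^2} \lesssim n^2 + n^3 \beta_j^2$, whence by independence
\begin{equation*}
\var{S} \;\lesssim\; \ab\, n^2 + n^3 \sum_{j} \beta_j^2.
\end{equation*}
Thresholding $S$ at half of its alternative-case mean and applying Cantelli on both sides, separation is guaranteed as soon as $n^2 \min(\ms\dst^2,1) \gtrsim \sqrt{\ab\, n^2 + n^3 \sum_j \beta_j^2}$, which, after substituting $\sum_j \beta_j^2 \gtrsim \min(\ms\dst^2,1)$, reduces to $n \gtrsim \sqrt{\ab}/\min(\ms\dst^2,1)$, i.e.\ $\ns \gtrsim \ab^{3/2}/\min(\ms\dst^2,1)$, as required. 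Communication is trivially one bit per user since each $Y_i \in \{0,1\}$.

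The main obstacle is the saturation baked into~\cref{lemma:epluribusunum}: when some Hadamard group happens to carry a large bias $|\dst_j| \gg 1/\sqrt{\ms}$, sign-thresholding the binomial buys no further signal beyond $|\beta_j| = \Theta(1)$, which is exactly why the bound carries $\min(\ms\dst^2, 1)$ rather than the ideal $\ms\dst^2$. This ``large-bias'' corner case is precisely what the complementary tester of~\cref{sec:complementary:cornercase} is designed to detect, so that the combined algorithm of~\cref{ssec:combining:nonprivate} can then strip the $\min$. For~\cref{thm:prcoin:1bit:assym} in isolation, this loss is intrinsic to the statistic, and the only genuinely technical step is the fourth-moment calculation used to control $\var{Z_j^2}$.
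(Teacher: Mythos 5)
Your proof is correct and follows the paper's reduction exactly: partition into Hadamard groups, threshold the binomial count to a single bit, and apply \cref{lemma:epluribusunum} together with \cref{lemma:hadamard:norm:preservation} to obtain biased Rademacher coordinates with total squared bias $\gtrsim \ms\dst^2 \wedge 1$. Where the paper then invokes the product-distribution mean tester (\cref{thm:unifprodtest}) as a black box, you instead construct and analyze the chi-square-style statistic $S = \sum_j(Z_j^2 - n)$ directly. But this is, up to an affine rescaling by $1/n^2$, exactly the statistic $\normtwo{\bar X}^2 - \dims/\ns$ used in the appendix proof of that lemma, so you have effectively inlined it rather than cited it; your expectation computation $\expect{S} = 4n(n-1)\sum_j\beta_j^2$ and variance bound $\var[S] \lesssim \ab n^2 + n^3\sum_j\beta_j^2$ match the paper's \cref{claim:meantesting:simple:expectation} and \cref{claim:meantesting:simple:variance}, and the resulting requirement $n \gtrsim \sqrt{\ab}/(\ms\dst^2 \wedge 1)$ yields $\ns = O(\ab^{3/2}/(\ms\dst^2\wedge 1))$ as claimed. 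Your closing observation that the $\min(\cdot,1)$ saturation is intrinsic to the sign-thresholded statistic, and is precisely what the complementary large-$\ms$ tester is there to remove, agrees with the paper's discussion in \cref{ssec:combining:nonprivate}.
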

(Note that this protocol, which follows the above outline, is asymmetric, as users are partitioned in $\ab$ distinct groups of equal size, and users from different groups process their $\ms$ samples differently.) This first algorithm is particularly well suited to the ``small $\ms$ regime'' where $\ms \leq 1/\dst^2$, where it achieves the desired number of users.

The rest of this subsection is dedicated to establishing~\cref{thm:prcoin:1bit:assym}. In view of this, we will need three ingredients: first, demonstrating that $Y_i$ has bias $\sqrt{\ms}\dst_{j}$ (this factor $\sqrt{\ms}$ is where the dependence on $\ms$ in the final bound will come from); second, showing that a distribution that is $\dst$-far from uniform induces bias in each group, such that $\sum_{j=1}^{\ab} \dst_j^2\approx\dst^2$; and third, a known result on uniformity testing for Rademacher-product distributions. 
The proof proceeds immediately from these results: Each user sends their single bit $Y_i$ with bias $\sqrt{\ms}\dst_j$. We then take one bit from each group and arrange them into a vector. This vector simulates a sample from the product distribution with mean $\mu=(\p(\chi_1),\dots,\p(\chi_{\ab}))$. Applying~\cref{lemma:hadamard:norm:preservation} we get that the expected $\lp[2]^2$ norm of this vector is at least $\ms\dst^2$. Plugging in an $\lp[2]^2$-norm product tester as a black box, we conclude our proof.

As stated above, we begin by capturing the behaviour of $Y_i$ as a function of each $\dst_j$. Recall that $Y_i$ is an indicator variable that declares whether a binomial $X_i$ exceeded the mean it ``should'' have in the uniform case: in this sense, going from $X_i$ to $Y_i$ converts a ``many-bit'' sample to a ``single-bit'' one. We show that for $X\sim\binomial{\ms}{1/2 \pm \alpha}$ with small bias $\alpha$, this indicator behaves as a Bernoulli with mean $\sqrt{\ms}\alpha$, and so this conversion from many bits to one still preserves both bias $\alpha$ and a dependence on $\ms$. For ease of exposition, we defer its proof to the end of the section. 
\begin{algorithm}[htbp!]
    \caption{Asymmetric Hadamard Protocol for Uniformity Testing}
    \label{alg:hadamard-main}
    \begin{algorithmic}
        \Require $\ns$ users, each with vector $\vec{x}_i\in[\ab]^\ms$ holding $\ms$ samples from distribution $\p$ over $[\ab]$; Distance parameter $\dst$; Hadamard matrix $H$ of size $\ab \times \ab$
        \Ensure
        
        \State $\chi_j \gets \{r \in [\ab] : H_{rj} = +1\}$ for each column $j \in [\ab]$
        \State $N\gets \frac{\ns}{\ab-1}$ \Comment{Number of users in each group}
        \State Partition users into $\ab-1$ groups $G_2, \ldots, G_{\ab}$ of size $N$ each
        
        \For{$j = 2$ to $\ab$} 
            \For{each user $i \in G_j$}
                \State $X_i \gets \sum_{\ell=1}^{\ms} \indic{x_{i,\ell} \in \chi_j}$ \Comment{Count samples in monitored subset}
                \State $Y_i \gets \indic{X_i \geq \frac{\ms+1}{2}}$ \Comment{Threshold to single bit}
                \State \Return $Y_i$
            \EndFor
        \EndFor
        
        \State \textbf{Server:}
        \State $Z_i \gets 2(Y_i - 1/2)$ for all $i \in [\ns]$ \Comment{Convert to Rademacher}
        \For{$\ell=1$ to $N$}
            \State Initialize $\vec{Z}^{(\ell)}\gets(0,\dots,0)$
            \For{each group $j=2$ to $\ab$}
                \State $\vec{Z}^{(\ell)}_j\gets$ next unused $Z_i\in G_j$
            \EndFor
        \EndFor

        \State Run product distribution uniformity test on $\{\vec{Z}^{(1)}, \ldots, \vec{Z}^{(N)}\}$ with parameter $\gamma \gets \sqrt{\ms}\dst \wedge 1$
        \State \Return test result
    \end{algorithmic}
    \label{algo:private:coin:asymmetric:nonprivate}
\end{algorithm}

\begin{lemma}[From Many Bits, One] %
    \label{lemma:epluribusunum}
    Let $m = 2\ell-1 \geq 1$ be an odd integer, and $\alpha \in [-1/2,1/2]$. Define
    \[
        Y \eqdef \indic{X \geq \ell }
    \]
    where $X\sim \binomial{m}{1/2+\alpha}$. Then $Y\sim \bernoulli{1/2+\beta}$, where
    \[
        \abs{\beta} = \bigOmega{\min(\sqrt{m}\abs{\alpha},1)}\,.
    \]
    (Moreover, if $\alpha=0$, then $\beta=0$.)
\end{lemma}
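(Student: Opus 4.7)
My plan is to treat $\beta(\alpha)\eqdef P(X\geq\ell)-1/2$ as a smooth function of $\alpha$ and control its growth starting from $\beta(0)=0$. First I would reduce to $\alpha\geq 0$ by the elementary symmetry $X\mapsto m-X$: this swaps $\binomial{m}{1/2+\alpha}$ with $\binomial{m}{1/2-\alpha}$ and sends $\{X\geq\ell\}$ to $\{X\leq m-\ell\}=\{X\leq\ell-1\}$, giving $\beta(-\alpha)=-\beta(\alpha)$. The case $\alpha=0$ is then immediate since $m=2\ell-1$ is odd: $\binomial{m}{1/2}$ is symmetric around the half-integer $m/2$, so $P(X\geq\ell)=P(X\leq\ell-1)=1/2$.

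Next I would compute $\beta'(\alpha)$ in closed form. The standard telescoping identity (easily derived from $j\binom{m}{j}=m\binom{m-1}{j-1}$) gives
\[
\frac{d}{dp}\, P\!\left(\binomial{m}{p}\geq k\right) \;=\; m\binom{m-1}{k-1}p^{k-1}(1-p)^{m-k}.
\]
Specializing to $p=1/2+\alpha$, $k=\ell=(m+1)/2$, and using the crucial identity $m-\ell=\ell-1$, this collapses to the clean expression
\[
\beta'(\alpha) \;=\; m\binom{m-1}{\ell-1}\bigl(\tfrac{1}{4}-\alpha^2\bigr)^{\ell-1}.
\]
A Stirling-type bound such as $\binom{2n}{n}\geq 4^{n}/(2\sqrt{n})$ then yields $\beta'(0)= m\binom{m-1}{\ell-1}/4^{\ell-1}\gtrsim\sqrt{m}$.

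Finally I would convert this into the desired lower bound by splitting into two regimes separated at $\alpha=c_0/\sqrt{m}$ for a small absolute constant $c_0>0$. For $\alpha\in[0,c_0/\sqrt{m}]$, the factor $(1-4\alpha^2)^{\ell-1}\geq (1-4c_0^2/m)^{(m-1)/2}$ is bounded below by a positive absolute constant once $c_0$ is small enough, hence $\beta'(\alpha)\gtrsim\sqrt{m}$ uniformly on this interval; integrating yields $\beta(\alpha)\gtrsim\sqrt{m}\,\alpha$, i.e.\ $\beta(\alpha)\gtrsim\min(\sqrt{m}\,\alpha,1)$ throughout this regime. For $\alpha>c_0/\sqrt{m}$, the coupling of biased Bernoullis shows that $\alpha\mapsto P(\binomial{m}{1/2+\alpha}\geq\ell)$ is nondecreasing, so $\beta(\alpha)\geq\beta(c_0/\sqrt{m})\gtrsim c_0=\Omega(1)$, which matches $\min(\sqrt{m}\,\alpha,1)=1$ in this regime. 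Combining the two regimes gives the stated bound on $|\beta|$.

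The main obstacle I expect is purely quantitative bookkeeping: choosing $c_0$ small enough that the factor $(1-4\alpha^2)^{\ell-1}$ stays bounded below by an absolute constant uniformly in $m\geq 1$, and then ensuring the constants in the linear and saturated regimes glue together to produce a single clean $\Omega$-constant. The very small cases ($m=1$ is immediate since $Y=X$, and $m=3$ can be verified by hand) are easily absorbed into constants. There is no conceptual difficulty here: the $\sqrt{m}$ amplification that drives the entire approach is captured directly by the derivative identity and the bound $\binom{2n}{n}\asymp 4^n/\sqrt{n}$.
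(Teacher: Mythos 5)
Your proof is correct, and it takes a genuinely different (and arguably cleaner) route than the paper's. The paper splits the range of $\alpha$ into three pieces and handles each with a different tool: Cantelli's inequality for $\alpha\geq 2/\sqrt{m}$, an explicit coupling $X=\tilde X+\tilde Z$ (with $\tilde X\sim\binomial{m}{1/2}$, $\tilde Z\mid \tilde X\sim\binomial{\tilde X}{2\alpha}$) and central-binomial estimates for $\alpha<3/m$, and the Berry--Esseen theorem for the middle range $3/m\leq\alpha<2/\sqrt{m}$. Your argument replaces all three with a single mechanism: the exact derivative identity
\[
\beta'(\alpha)=m\binom{m-1}{\ell-1}\bigl(\tfrac14-\alpha^2\bigr)^{\ell-1},
\]
which both establishes monotonicity of $\beta$ (it is manifestly nonnegative, so no separate stochastic-domination argument is strictly needed for the saturated regime) and gives the $\sqrt{m}$ amplification directly from $\binom{2n}{n}\gtrsim 4^n/\sqrt{n}$. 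Integrating $\beta'$ over $[0,c_0/\sqrt{m}]$, where the factor $(1-4\alpha^2)^{\ell-1}\geq(1-4c_0^2/m)^{(m-1)/2}\geq e^{-4c_0^2}$ is a positive absolute constant for $c_0$ small enough, yields $\beta(\alpha)\gtrsim\sqrt{m}\alpha$ there, and monotonicity propagates the bound $\beta\gtrsim 1$ to larger $\alpha$. This avoids both the coupling construction and any appeal to the Gaussian approximation, and the bookkeeping you flag as the main obstacle is genuinely routine (e.g.\ $\log(1-x)\geq -2x$ for $x\in[0,1/2]$ suffices). The one thing worth making explicit in a final write-up is the telescoping computation that yields the derivative identity, since the paper does not use it; but that identity is standard and you have the right proof sketch.
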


While the bias $\dst_j$ observed by each group is clearly distribution-dependent, we need to relate them explicitly to the distance parameter $\dst$. To do so, we use the known fact that multiplication by a Hadamard matrix is $\lp[2]$-norm preserving. 
\begin{lemma}[Hadamard transform is norm-preserving]
    \label{lemma:hadamard:norm:preservation}
    As in the process described above, let $\p$ be a distribution with $\totalvardist{\p}{\uniform}\geq \alpha$. Let $H$ be the $2^t \times 2^t$ Hadamard matrix with $\pm 1$ entries. For each column $j \in \{2, 3, \ldots, 2^t\}$ (excluding the all-ones column), define $\chi_j \eqdef \setOfSuchThat{i \in [\ab]}{ H_{ij} = +1}$, and let $\dst_j \eqdef \p(\chi_j) - \frac{1}{2}$ be the bias of column $j$ under distribution $\p$. Then
\[
    \sum_{j=1}^{\ab} \dst_j^2 \geq \dst^2.
\]
\end{lemma}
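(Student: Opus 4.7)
The plan is to relate the biases $\dst_j$ to the coordinates of $H^T v$, where $v \eqdef \p - \uniform_{\ab} \in \mathbb{R}^{\ab}$, then exploit the orthogonality of the Hadamard matrix (namely $H^T H = \ab I$) to transfer an $\ell_2$ lower bound from $v$ to the $\dst_j$'s, and finally convert the total variation lower bound on $\p$ (which is an $\ell_1$ statement) into the needed $\ell_2$ bound on $v$ via Cauchy--Schwarz.

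First I would rewrite $\dst_j$ using the Hadamard entries. Since $\chi_j = \{i : H_{ij}=+1\}$, one has $\indic{i \in \chi_j} = (1 + H_{ij})/2$, so
\[
\p(\chi_j) = \sum_{i=1}^{\ab} \frac{1+H_{ij}}{2}\p_i = \frac{1}{2} + \frac{1}{2}(H^T \p)_j,
\]
and hence $2\dst_j = (H^T \p)_j$ for every $j \geq 2$. Moreover, for $j\geq 2$ the $j$-th column of $H$ is orthogonal to the all-ones vector (the first column), so $\sum_i H_{ij} = 0$ and therefore $(H^T \uniform_{\ab})_j = 0$. This gives the clean identity $2\dst_j = (H^T v)_j$ for all $j \geq 2$, while the $j=1$ coordinate of $H^T v$ is exactly $\sum_i v_i = 0$.

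Next I would apply the orthogonality relation $H^T H = \ab I$, which yields $\|H^T v\|_2^2 = \ab \|v\|_2^2$. Splitting off the vanishing $j=1$ coordinate,
\[
4 \sum_{j=2}^{\ab} \dst_j^2 = \sum_{j=1}^{\ab}(H^T v)_j^2 = \ab \|v\|_2^2.
\]
It then remains to lower bound $\|v\|_2$ by the total variation hypothesis. Since $\|v\|_1 = 2\dtv(\p,\uniform_{\ab}) \geq 2\dst$, Cauchy--Schwarz gives $\|v\|_1 \leq \sqrt{\ab}\,\|v\|_2$, so $\|v\|_2^2 \geq 4\dst^2/\ab$. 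Combining these two displays yields
\[
\sum_{j=2}^{\ab} \dst_j^2 \geq \frac{\ab}{4}\cdot\frac{4\dst^2}{\ab} = \dst^2,
\]
and since $\dst_1 = 1/2$ only adds a nonnegative term, the claimed inequality $\sum_{j=1}^{\ab} \dst_j^2 \geq \dst^2$ follows.

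There is no real obstacle here; the only subtlety worth being careful about is the bookkeeping for the excluded first column — making sure the all-ones column disappears on both sides (as a coordinate of $H^T v$ and in the sum defining the $\dst_j$'s) so that the orthogonality identity can be applied cleanly, and noting that Cauchy--Schwarz is tight enough since the factor $\ab$ cancels exactly with the $\ab$ arising from $H^T H = \ab I$.
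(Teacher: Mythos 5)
Your proof is correct and follows essentially the same route as the paper's: express the biases $\dst_j$ via the Hadamard columns, use $H^T H = \ab I$ to equate $\sum_j \dst_j^2$ with $\tfrac{\ab}{4}\lVert \p - \uniform_{\ab}\rVert_2^2$, and convert the total-variation hypothesis to an $\ell_2$ lower bound via Cauchy--Schwarz. You handle the first-column bookkeeping slightly more explicitly (observing that $(H^T v)_1 = 0$ so that coordinate drops out cleanly), but the argument is the same in substance.
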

\noindent(This follows, for instance, from~\cite[Lemma~3]{AcharyaCFST21}, combined with Cauchy--Schwarz to relate total variation and $\lp[2]$ distances.) For completeness, we provide a self-contained proof in~\cref{app:hadnormpreserved}.\smallskip

Finally, the third (and last) ingredient missing is an algorithm to test, given \iid observations from a product distribution on $\ab$ bits, whether the mean vector is zero (uniform distribution) or has large norm:
\begin{lemma}[Uniformity testing on product distributions]
\label{thm:unifprodtest}
Fix any $\dims \geq 2$. 
There exists an algorithm that, given a parameter $\gamma\in(0,\sqrt{\dims}]$ and $\ns$ \iid samples from a product distribution $\p$ on $\{-1,1\}^\dims$ with $\mu\eqdef\shortexpect_{X\sim \p}[X]\in[-1,1]^\dims$, has the following guarantees.
\begin{itemize}
    \item If $\normtwo{\mu}\leq \frac{\gamma}{2}$, the algorithm returns $\accept$ w.p. $\geq \frac{2}{3}$;
    \item If $\normtwo{\mu}\geq \gamma$, the algorithm returns $\reject$ w.p. $\geq \frac{2}{3}$;
\end{itemize}
as long as $\ns\geq C\frac{\sqrt{\dims}}{\gamma^2}$, for some absolute constant $C>0$. 
\end{lemma}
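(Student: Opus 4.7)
The plan is to follow the standard chi-squared-style approach, estimating $\|\mu\|_2^2$ via a bilinear $U$-statistic. Given \iid samples $X^{(1)},\dots,X^{(\ns)}\in\{-1,1\}^\dims$, I would define
\[
    T \eqdef \sum_{j=1}^{\dims} \sum_{1 \leq i < i' \leq \ns} X^{(i)}_j X^{(i')}_j.
\]
By independence across samples, $\expect{T}=\binom{\ns}{2}\|\mu\|_2^2$, which vanishes when $\mu=0$ and is at least $\binom{\ns}{2}\gamma^2$ whenever $\|\mu\|_2\geq\gamma$. The tester then thresholds $T$ at $\tau\eqdef \tfrac{5}{8}\binom{\ns}{2}\gamma^2$, returning $\accept$ iff $T<\tau$.

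The heart of the argument is the variance bound, and this is where the product (coordinate-wise independent) structure is essential: coordinate-wise independence decouples $\var(T)=\sum_{j=1}^\dims \var(T_j)$, where $T_j\eqdef\sum_{i<i'}X^{(i)}_j X^{(i')}_j$. A standard $U$-statistic decomposition, splitting the pair-of-pairs sum according to how many sample indices the two pairs share and using that $(X^{(i)}_j)^2=1$, yields
\[
    \var(T_j) \leq \binom{\ns}{2} + \ns(\ns-1)(\ns-2)\mu_j^2
\]
(the $\mu_j^4$ cross term that appears in the raw expansion turns out to have a negative coefficient and can be discarded), so summing over coordinates gives
\[
    \var(T) \;\lesssim\; \ns^2\dims + \ns^3\|\mu\|_2^2.
\]

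With mean and variance in hand, both error cases follow from Cantelli's inequality (\cref{lemma:cantelli}). In the accept case $\|\mu\|_2\leq \gamma/2$, the gap $|\tau-\expect{T}|$ is $\gtrsim \ns^2 \gamma^2$ while $\var(T)\lesssim \ns^2 \dims + \ns^3 \gamma^2$, giving error probability $\lesssim \dims/(\ns^2\gamma^4)+1/(\ns\gamma^2)$, which is at most $1/3$ once $\ns\geq C\sqrt{\dims}/\gamma^2$ for a suitable constant $C$ (note that the second term is then automatically controlled since $\sqrt{\dims}/\gamma^2\geq 1/\gamma^2$). In the reject case, setting $\eta\eqdef\|\mu\|_2^2\geq\gamma^2$, the gap $\expect{T}-\tau$ is $\gtrsim \ns^2 \eta$, yielding error $\lesssim \dims/(\ns^2\eta^2)+1/(\ns\eta)\leq \dims/(\ns^2\gamma^4)+1/(\ns\gamma^2)$, which is again $\leq 1/3$ in the same regime.

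The main obstacle I anticipate is keeping the variance bound linear (rather than quadratic) in $\dims$, which is exactly what unlocks the optimal $\sqrt{\dims}/\gamma^2$ sample complexity. This linearity is what the product structure buys: it lets $\var(T)$ split coordinate-by-coordinate, so the ``diagonal'' contribution to the variance is $O(\ns^2\dims)$ rather than $O(\ns^2\dims^2)$. Carefully tracking the $U$-statistic cross terms to ensure no hidden $\dims^2$ or $\ns^4$ creeps into the variance is the bulk of the (standard but fiddly) technical work.
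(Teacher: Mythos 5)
Your proposal is correct and is essentially the paper's argument in a different guise: your $U$-statistic $T$ equals $\tfrac{\ns^2}{2}\bigl(\normtwo{\bar{X}}^2 - \tfrac{\dims}{\ns}\bigr)$, the paper's centered test statistic, and your per-coordinate variance bound $\var(T_j)\leq\binom{\ns}{2}+\ns(\ns-1)(\ns-2)\mu_j^2$ (obtained by the same pair-overlap decomposition and dropping the negative $\mu_j^4$ term) matches, after rescaling, the paper's $\var[Z]\leq \tfrac{2\dims}{\ns^2}+\tfrac{4(\ns-1)}{\ns^2}\normtwo{\mu}^2$. The paper phrases the second-moment computation in terms of central moments of the Binomials $N_j$ and concludes with Chebyshev rather than Cantelli, but these are cosmetic differences; the key point in both is the coordinate-wise decoupling of the variance, yielding $\var\lesssim \ns^2\dims+\ns^3\normtwo{\mu}^2$ and hence the $\sqrt{\dims}/\gamma^2$ rate over the full range $\gamma\in(0,\sqrt{\dims}]$.
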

\noindent For a proof of this in the case $\gamma \in (0,1]$, see, \eg~\cite[Section~2.1]{CDKS:17} or~\cite[Lemma~4.2]{CKMUZ:19}, which establish this along the way, while focusing on testing in total variation distance, or~\cite[Theorem~4.1]{CanonneCKLW21}, which provides slightly stronger guarantees. We note that while stated only for $\gamma \in (0,1]$, the proofs above actually implicitly show the result for the whole range of $\gamma$.
For completeness, we provide a self-contained proof (for the whole range of $\gamma$) in~\cref{sec:product:testing}.

With these three building blocks in hand, we are ready to analyze~\cref{algo:private:coin:asymmetric:nonprivate}:
\begin{proofof}{\cref{thm:prcoin:1bit:assym}}
Recall that user $i$ is deterministically assigned group index $j(i)\in\{2,\dots,\ab\}$ such that each group is of equal size.\footnote{The first column (and row) of the Hadamard matrix are all-ones, and can be ignored, or otherwise simulated if necessary, as any user's behaviour in this group would be to deterministically send a 1.} Where $i$ is clear from context or not relevant we suppress this notation, letting $j\defeq j(i)$. We have each user send their $Y_i$ per~\cref{alg:hadamard-main} and focus on the server's view. Clearly $Y_i$ is distributed as some (yet unknown) Bernoulli $Y_i\sim\bernoulli{\frac{1}{2}+\beta_j}$, where $\beta_j$ depends on $\p$. 
Centering each of these, we get the Rademacher random variables
\[
Z_i = 2(Y_i - 1/2) \in \{-1,1\} \tag{$1\leq i\leq \ns$}
\]
each with mean $\expect{Z_i}=\beta_{j(i)}$. Now, we wish to apply~\cref{thm:unifprodtest} which takes as input samples from some product distribution. To facilitate this we construct our own vector samples by concatenating one $Z_i$ from each group. Each group $G_j$ contains $N=\ns/(\ab-1)$ users. So, as described in~\cref{alg:hadamard-main}, we create the vectors $\{\vec{Z}^{(1)},\dots,\vec{Z}^{(N)}\}$ so that $\vec{Z}^{(\ell)}_j$ holds the bit sent by the $\ell$'th user of group $j$. 

Each $\vec{Z}^{(\ell)}$ therefore has mean vector $\mu=(\beta_1,\dots,\beta_{\ab})$. Applying~\cref{lemma:epluribusunum}, we get that, for all $1\leq j\leq \ab$, (1) if $\p=\uniform_{\ab}$, then $\beta_j = 0$; and (2) otherwise, we have $\abs{\beta_j}=\Omega(\min(\sqrt{\ms}\dst_j, 1))$. Combining this with~\cref{lemma:hadamard:norm:preservation}, this vector $\mu$ satisfies:
\begin{itemize}
    \item if $\p=\uniform$, then $\mu = \mathbf{0}^{\ab}$;
    \item if $\totalvardist{\p}{\uniform} > \dst$, then
    \begin{equation}
        \label{eq:lb:norm:mean:product}
        \normtwo{\mu}^2 
        = \sum_{j=1}^{\ab} \beta_j^2 
        \gtrsim \sum_{j=1}^{\ab} \Paren{\ms\dst_{j}^2\land 1} 
    \end{equation}
\end{itemize}
and the RHS is at least $\ms \dst^2 \land 1$.
This is exactly the setting we need to invoke~\cref{thm:unifprodtest}: setting $\gamma=\sqrt{\ms}\dst\land 1$, and $N=\ns/(\ab-1)$, this yields a final sample complexity of $\ns = \bigO{\frac{\ab^{3/2}}{\ms\dst^2 \land 1}}$.
\end{proofof}

Before proceeding to the next component of our algorithm, it remains to establish~\cref{lemma:epluribusunum}.

\begin{proofof}{\cref{lemma:epluribusunum}}
    We start with the ``Moreover'' statement: if $\alpha=0$, then $X\sim \binomial{m}{1/2}$. By symmetry, $m-X\sim \binomial{m}{1/2}$, and since $m=2\ell-1$,
    \[
        \probaOf{X \geq \ell } 
        = \probaOf{m-X \geq \ell }
        = \probaOf{X \leq m-\ell }
        = \probaOf{X \leq \ell -1 }
    \]
    from which $\probaOf{X \geq \ell }=1/2$.

    Assume without loss of generality that $\alpha \geq 0$ (as otherwise we can consider $m - X$ instead). To establish the first part of the statement, we will distinguish between three cases, depending on how large $\alpha$.
    \begin{itemize}
        \item First case: $\alpha \geq 2/\sqrt{m}$. By Cantelli's inequality (\cref{lemma:cantelli}), since $\expect{X} = \ell - 1/2 + m\alpha$,
        \[
            \probaOf{ X < \ell }
            = \probaOf{ X < \expect{X} - \Paren{m\alpha - \frac{1}{2}} }
            \leq \frac{\var[X]}{\var[X]+\Paren{m\alpha - \frac{1}{2}}^2}
            \leq \frac{1}{1+m\alpha^2} \leq \frac{1}{5}
        \]
        using $m\alpha > 1$ and $\var[X] \leq m/4$. This shows that $\probaOf{ X \geq \ell } \geq 4/5$, \ie  $\beta \geq 3/10$.
        \item Second case: $\alpha < 3/m$.\footnote{The choice of the constant $3$ in $3/m$ (instead of the more natural $1/m$) may appear somewhat arbitrary: this specific value for the cut-off will be useful, for technical reasons, in the third and last case.} In this case, the mean of $X$ only differs by $\alpha m < 3 = O(1)$ from $m/2$, the mean of a standard Binomial $\tilde{X} \sim \binomial{m}{1/2}$ (and the modes of the two distributions are either the same or very close integers), so the change in probability mass between the two is quite subtle. We can provide a coupling between $X$ and $\tilde{X}$ as follows:
        \[
                X = \tilde{X} + \tilde{Z}
        \]
        where the distribution of $\tilde{Z}$, conditioned on $\tilde{X}$, is $\tilde{Z}\sim \binomial{\tilde{X}}{2\alpha}$. One can check that this satisfies $X\sim\binomial{m}{1/2+\alpha}$ (\ie this is a valid coupling) and directly implies that $X \geq \tilde{X}$ a.s. Now, we have
        \begin{align*}
            \probaOf{ X \geq \ell} 
            &\geq \probaOf{ \tilde{X} \geq \ell } + \probaOf{\tilde{X}=\ell-1, \tilde{Z} \geq 1} \\
            &= \frac{1}{2} + \probaOf{\tilde{X}=\ell-1}\probaOf{\binomial{\ell-1}{2\alpha} \geq 1} \\
            &= \frac{1}{2} + \frac{\binom{2\ell-1}{\ell-1}}{2^m}\cdot \Paren{1-(1-2\alpha)^{\ell-1}} \\
            &= \frac{1}{2} + \bigTheta{ \frac{1}{\sqrt{\ell}}\cdot \ell\alpha }\,,
        \end{align*}
        using in the last step that $\alpha = O(1/\ell)$. This shows that in this case $\beta = \bigTheta{\sqrt{\ell}\alpha} = \bigTheta{\sqrt{m}\alpha}$.
        \item Third case: $3/m \leq \alpha < 2/\sqrt{m}$. In this regime, we can rely on the Gaussian approximation, as quantified by the Berry--Esseen theorem (see, \eg~\cite[Section~11.5]{ODonnell14}, which guarantees that the CDF $F$ of the normalized version of $X$,
        \[
            X' \eqdef \frac{X-\expect{X}}{\sqrt{\var[X]}} = \frac{2X-m(1+2\alpha)}{\sqrt{m(1-4\alpha^2})}
        \]
        is pointwise close to the CDF $\Phi$ of a standard Gaussian $Z \sim \gaussian{0}{1}$:
        \[
            \sup_{x\in \R} \abs{F(x) - \Phi(x)} \leq \frac{C}{\sqrt{m}}\,,
        \]
        for some absolute constant $C>0$ (one can take $C = 0.56$). In particular, this implies, in our case, that
        \begin{align*}
            \probaOf{X < \ell} &=  \probaOf{X' < -\frac{2m\alpha-1}{\sqrt{m(1-4\alpha^2)}}} \\
            &\leq \probaOf{X' < -\frac{\sqrt{m}\alpha}{\sqrt{1-4\alpha^2}}} \tag{as $2m\alpha - 1 \geq m\alpha$} \\
            &\leq \probaOf{X' < -\sqrt{m}\alpha}  \\
            &\leq \probaOf{ Z < -\sqrt{m}\alpha} + \frac{C}{\sqrt{m}} \tag{Berry--Esseen}\\
            &\leq \frac{1}{2} - \frac{\sqrt{m}\alpha}{5} + \frac{C}{\sqrt{m}} \tag{Studying $\Phi$ for $\sqrt{m}\alpha\in[0,2]$}\\
            &\leq \frac{1}{2} - \frac{1}{100}\sqrt{m}\alpha
        \end{align*}
        the last step using that $m\alpha \geq 3$ and $C\leq 0.56$. This shows that, in this regime as well, $\beta = \bigOmega{\sqrt{m}\alpha}$.
    \end{itemize}
    This concludes the distinction of cases, and the proof.
\end{proofof}

\subsection{An algorithm for large \texorpdfstring{$\ms$}{m}}
\label{sec:complementary:cornercase}
The above approach gives the ``right'' sample complexity under the restriction that 
\[
\ms \dst_j^2 \leq 1, \qquad \forall j \in [\ab]\,,
\]
where $\dst_j = \p(\chi_j) - \frac{1}{2}$. We here provide a different protocol, which works well when at least one of the $|\dst_j|$'s is large. The main idea is to have each user just check is \emph{any} of the $\ab$ sets $\chi_j$ receives a lot more (or less) than half of their $\ms$ samples, namely, $\frac{\ms}{2}\pm \Omega(T)$ for some suitable threshold $T = \bigTheta{\sqrt{\ms\log(\ns\ab)}}$. If $\p$ is uniform, then this is highly unlikely, as by~\cref{lemma:maxsubgaussian} the maximum deviation of $\ab$ subgaussian random variables (here, Binomials) from their mean is less than $T$ with overwhelming probability. But if $\p$ is not uniform \emph{and} one of the $|\dst_j|$'s is large, then the number of samples falling in the corresponding set $\chi_j$ is a very biased Binomial, and for every user this set will receive $\frac{\ms}{2}\pm \Omega(T)$ samples with high constant probability. We formalize this idea in~\cref{alg:large:m:tester}, starting with the non-private version; and prove the following theorem:
\begin{algorithm}[htbp!]
    \caption{Symmetric protocol for large $\ms$}
    \label{alg:large:m:tester}
    \begin{algorithmic}
        \Require $\ns$ users, each with vector $\vec{x}_i\in[\ab]^\ms$ holding $\ms$ samples from distribution $\p$ over $[\ab]$; Hadamard matrix $H$ of size $\ab \times \ab$.
        \Ensure With high probability detect when $\exists j^*\dst_{j^*}>\Omega(\sqrt{\log(\ns\ab)/\ms}$.

        \State $\chi_j \gets \{r \in [\ab] : H_{rj} = +1\}$ for each column $j \in [\ab]$
        \State $T\gets \frac{1}{2}\sqrt{\ms\ln(20\ns\ab)}$
        \For{each user $i\in[\ns]$}
            \For{each $j=2$ to $\ab$}
                \State $V_j^{(i)}\gets \sum_{\ell=1}^{\ms} \indic{x_{i,\ell} \in \chi_j}$ \Comment{Count samples in each subset}
            \EndFor
            \If{$\max_{2\leq j\leq \ab}\abs{V_j^{(i)}-\frac{\ms}{2}}>T$}
                \State Set $v_i \gets 1$
            \Else
                \State Set $v_i \gets 0$
            \EndIf
        \EndFor
        \State \textbf{Server:}
        \If{$\sum_{i=1}^\ns v_i \geq \frac{\ns}{2}$}  \Comment{Majority vote}
            \State \Return \reject
        \Else
            \State \Return \accept
        \EndIf

    \end{algorithmic}
\end{algorithm}

%
%
\begin{theorem}
    \label{lemma:complementary}
    There exists an algorithm (\cref{alg:large:m:tester}) with the following guarantees:
    \begin{itemize}
        \item If $\p=\uniform_{\ab}$, then the center outputs \accept with probability at least $9/10$;
        \item If there exists $1\leq j^\ast\leq \ab$ such that $\dst_{j^\ast} = \bigOmega{\sqrt{\log(\ns\ab)/\ms}}$, then the center outputs \reject with probability at least $9/10$.
    \end{itemize}
    Moreover, each person sends only one bit.
\end{theorem}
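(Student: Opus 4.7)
The plan is to show that the per-user indicator $v_i$ exhibits a sharp dichotomy at the threshold $T=\frac{1}{2}\sqrt{\ms\ln(20\ns\ab)}$: under the null, $v_i=0$ with overwhelming probability; under the alternative (a single set $\chi_{j^\ast}$ of large bias), $v_i=1$ with probability close to $1$. A majority vote then trivially amplifies this per-user behavior into a high-probability server decision. I will analyze completeness and soundness separately via concentration inequalities on the $\ab$ Binomial counts $V^{(i)}_j$, and then combine via a Chernoff bound on $\sum_i v_i$.

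For \emph{completeness}, I fix a user $i$ and observe that, when $\p=\uniform_{\ab}$, each $V^{(i)}_j\sim\binomial{\ms}{1/2}$. Applying Hoeffding's inequality to the centered variable $V^{(i)}_j-\ms/2$ (equivalently, invoking~\cref{lemma:maxsubgaussian} with subgaussian parameter $\ms/4$) and union-bounding over the $\ab-1$ choices of $j$ yields
\[
\probaOf{v_i=1}\;=\;\probaOf{\max_j \abs{V^{(i)}_j-\tfrac{\ms}{2}}>T}\;\leq\;2\ab\, e^{-2T^2/\ms}\;=\;\bigO{\sqrt{\ab/\ns}}.
\]
In the regime where this subprotocol will be deployed, $\ns$ is large enough relative to $\ab$ that this quantity is a small constant ($\leq 1/4$, say). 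A standard Chernoff bound on the sum of $\ns$ independent indicators with success probability at most $1/4$ then ensures $\sum_i v_i<\ns/2$ with probability at least $9/10$, and the server returns $\accept$.

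For \emph{soundness}, assume WLOG $\dst_{j^\ast}>0$ and that the hypothesis $\dst_{j^\ast}=\bigOmega{\sqrt{\ln(\ns\ab)/\ms}}$ holds with a sufficiently large hidden constant, so that $\ms\dst_{j^\ast}\geq 2T$. Then $V^{(i)}_{j^\ast}\sim\binomial{\ms}{1/2+\dst_{j^\ast}}$ has mean at least $\ms/2+2T$, and a one-sided Hoeffding bound around the \emph{true} mean gives
\[
\probaOf{V^{(i)}_{j^\ast}\leq \tfrac{\ms}{2}+T}\;\leq\;\probaOf{V^{(i)}_{j^\ast}-\ms(\tfrac12+\dst_{j^\ast})\leq -T}\;\leq\;e^{-2T^2/\ms}\;=\;\tfrac{1}{\sqrt{20\ns\ab}}.
\]
Thus $\probaOf{v_i=1}\geq 99/100$ per user, and a second Chernoff bound on $\sum_i v_i$ yields $\sum_i v_i\geq \ns/2$ with probability at least $9/10$, so the server returns $\reject$.

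The main obstacle will be \emph{calibrating $T$ two-sidedly}: $T$ must be large enough that the maximum of $\ab-1$ symmetric Binomial deviations exceeds it only rarely (forcing $T\gtrsim \sqrt{\ms\ln(\ns\ab)}$ once one pays a $\log\ab$ for the union bound over $j$ and a further $\log\ns$ for the final Chernoff/union-bound step on users), yet small enough that a single biased Binomial with mean shift $\ms\dst_{j^\ast}$ reliably clears it (forcing $\ms|\dst_{j^\ast}|\gtrsim T$, which is precisely why the soundness hypothesis carries a $\sqrt{\ln(\ns\ab)}$ factor rather than $\sqrt{\ln\ab}$). The stated choice $T=\tfrac12\sqrt{\ms\ln(20\ns\ab)}$ threads this needle with constant slack on both sides, and no privacy analysis is needed at this stage---the single output bit $v_i$ will be privatized later by Randomized Response, which only changes probabilities by multiplicative constants and is absorbed into the constants above.
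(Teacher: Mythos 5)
Your soundness argument is fine (Hoeffding around the true mean in place of the paper's Chebyshev; both routes work). The gap is in completeness. You union-bound only over the $\ab-1$ columns for a \emph{fixed} user, arriving at a per-user bound $\probaOf{v_i=1}\leq 2\ab\,e^{-2T^2/\ms}=\Theta(\sqrt{\ab/\ns})$, and then invoke a Chernoff bound on the majority under the explicit extra assumption that this quantity is a small constant, i.e.\ that $\ns\gtrsim\ab$. That assumption is not in the theorem's hypothesis and, more to the point, it fails where the paper actually applies this lemma: in the combined non-private protocol (\cref{ssec:combining:nonprivate}) the large-$\ms$ tester is run with $\ns=\ns_2=1$. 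At $\ns=1$ your per-user bound reads $O(\sqrt{\ab})$, which is vacuous, and the Chernoff step on a single indicator yields nothing.

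The paper avoids this by spending the $\log\ns$ inside $T$ on a union bound over \emph{users}, not on sharpening a per-user exponent. Applying \cref{lemma:maxsubgaussian} with $t=\ln(10\ns)$ to the $\ab$ centered Binomials gives $\probaOf{\max_j\abs{V_j^{(i)}-\ms/2}>T}\leq 1/(10\ns)$ for each user, and a union bound over the $\ns$ users then shows that with probability $\geq 9/10$ \emph{no} $v_i$ equals $1$ at all, so the majority vote trivially accepts. This zero-false-positive conclusion is stronger than a majority bound and is uniform in $\ns$, in particular valid at $\ns=1$. Your own calibration heuristic (``a further $\log\ns$ for the final Chernoff/union-bound step on users'') points at exactly this, but the proof as written does not execute it: replace the Chernoff step by a direct union bound over the $\ns$ users (equivalently, invoke the subgaussian-max lemma once with $t=\ln(10\ns)$) and the completeness argument goes through for all $\ns\geq1$.
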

\begin{proof} In what follows, we set $R\eqdef 2\frac{T}{\sqrt{\ms}} = \sqrt{\ln(20\ns\ab)}$.
\begin{itemize}
\item If $\p$ is uniform, then $\p(\chi_j) = 1/2$ for all $j$. By standard tail bounds for subgaussian random variables, this means that, for every $1\leq i\leq \ns$
\[
    \probaOf{\max_{1\leq j\leq \ab} |V^{(i)}_j - \tfrac{\ms}{2}| > T} \leq \frac{1}{10\ns}
\]
given our setting of $T$ (specifically, we apply \cref{lemma:maxsubgaussian} with $t=\ln(10\ns)$ to the $\ab$ mean-zero random variables $(V^{(i)}_j - \tfrac{\ms}{2})_j$, which are centered Binomials, and thus $\frac{\ms}{4}$-subgaussian). By a union bound, we get $\probaOf{\exists i, v_i=1} \leq 1/10$, and the server outputs \accept with probability at least $9/10$. 

\item Turning to the other case, assume there exists $j^\ast\in[\ab]$ such that $|\dst_{j^\ast}| > \frac{R}{\sqrt{\ms}}$. Then, for every person $i$,
$\abs{\expect{V^{(i)}_{j^\ast}} - \frac{\ms}{2}} > R\sqrt{\ms}$, and 
\[
\max_{1\leq j\leq \ab} |V^{(i)}_j - \frac{\ms}{2}|
\geq |V^{(i)}_{j^\ast} - \frac{\ms}{2}| > T,
\]
where the last inequality holds with probability at least $2/3$ by Chebyshev (as $R \geq \sqrt{\ln(20\ab)}$ is large enough, for large enough $\ab$). This implies that an expected $\frac{9}{10}\ns$ of the $v_i$'s will be equal to $1$: more precisely, $\sum_{i=1}^\ns v_i \sim \binomial{\ns}{\tau}$ with $\tau \geq 9/10$. The probability that such a Binomial is less that $\ns/2$ is at most $1/10$ (and actually $e^{-\Omega(\ns)}$), and so the center will reject with probability at least $9/10$ (and actually $1-e^{-\Omega(\ns)}$).
\end{itemize}
\end{proof}
\subsection{Combined Algorithm for all values of \texorpdfstring{$\ms$}{m}}
\label{ssec:combining:nonprivate}

There exists a critical regime of parameters that~\cref{alg:hadamard-main} struggles with, leading to suboptimal sample complexity; and these are handled exactly  by~\cref{alg:large:m:tester}. We can handle both regimes as follows: have the server run both protocols, the former with $\ns_1=\ns-1$ and the latter with $\ns_2=1$ users, and return \accept if, and only if, both of them return \accept. Assume

\[
\ns \gtrsim \frac{\ab^{3/2} \log\ab}{\ms \dst^2} \lor \ab\,.
\]
Then, we have the following distinction of cases:
\begin{itemize}
    \item If $\p=\uniform_{\ab}$, then both protocols \accept with probability at least $9/10$ each, so overall the centre returns \accept with probability at least $8/10$;
    \item If $\totalvardist{\p}{\uniform_{\ab}} > \dst$, then
    \begin{itemize}
        \item If there exists $j^\ast \in [\ab]$ such that
        \[
            \dst_{j^\ast} = \bigOmega{\sqrt{\log(\ab)/\ms}}
        \]
        then, by~\cref{lemma:complementary}, the second protocol (with one user) outputs \reject with probability at least $9/10$, in which case the center outputs \reject.
        \item Otherwise, then by~\eqref{eq:lb:norm:mean:product} the mean of the product distribution in the first protocol is at least
        \begin{equation}
        \label{eq:cases:lb:mean:product}
            \normtwo{\mu}^2 
        \gtrsim \sum_{j=1}^{\ab} \Paren{\ms\dst_{j}^2\land 1} 
        \geq \sum_{j=1}^{\ab} \Paren{\frac{\ms\dst_{j}^2}{\log\ab}\land 1} 
        \gtrsim \frac{\ms\dst^2}{\log\ab}
        \end{equation}
        since $\ms\dst_{j}^2 \lesssim \log\ab$ for all $j$, and $\sum_{j=1}^{\ab} \dst_j^2 \geq \dst^2$. Then, concluding as in~\cref{sec:noname:hadamard:protocol} by invoking the uniformity testing algorithm for product distributions of~\cref{thm:unifprodtest} (which also handles the full range of distance parameter $\gamma^2 \eqdef \frac{\ms\dst^2}{\log\ab}  \in (0,\ab]$),
        the server rejects with probability at least $9/10$, as
        \[
            \frac{\ns}{\ab} \gtrsim \frac{\ab^{1/2}}{\gamma^2}\lor 1\,.
        \]
    \end{itemize}
    Either way, at least one of the two tests outputs \reject with probability $9/10$, and so does the center.
\end{itemize}

\subsection{Symmetric Protocols via Generalized Product Testing}
\label{ssec:symmetric:nonprivate}
The above protocol is asymmetric in that, as stated, it needs users to be divided into $\ab$ groups. Of these, $\ab-1$ groups will run~\cref{alg:hadamard-main}, each with a different column. The last group (of only 1 user) runs~\cref{alg:large:m:tester}. 

To resolve the asymmetry in~\cref{alg:hadamard-main} we prove the following statement, which covers the setting we require to make our protocol symmetric: each of $\ns' \eqdef \ns\dims$ users independently selects a random coordinate and reports a sample from that coordinate. Formally, observations $(X_1, j_1)\dots, (X_{\ns'},j_{\ns'})$ are obtained by choosing, independently for each $i\in[\ns']$, $j_i\in[\dims]$ uniformly at random, and $X_i \sim \p_{j_i}$. In this case, the numbers of times  $\ns_1,\dots,\ns_\dims$ each coordinate $j\in[\dims]$ is sampled are (correlated) $\binomial{\ns\dims}{1/\dims}$ random variables. 

\begin{theorem}
\label{thm:unifprodtest:symmetric}
There exists an algorithm (\cref{alg:mean:testing:symmetric}) which, given parameters $\gamma\in(0,\sqrt{\dims}]$, $\ns \geq 1$, and sample access to distributions $\p_1,\dots,\p_\dims$ on $\{-1,1\}$, chooses $\ns_1,\dots, \ns_\dims$ at random from a multinomial distribution with parameters $\ns'\eqdef \ns\dims$, $\dims$, and $(1/\dims,\dots,1/\dims)$; and then is given $\ns_j$ \iid samples from each $\p_j$ (where the samples are independent from the choice of $\ns_i$'s). Then, letting $\mu\eqdef\shortexpect_{X\sim \p_1\otimes\cdots\otimes\p_\dims}[X]\in[-1,1]^d$, it has the following guarantees.
\begin{itemize}
    \item If $\normtwo{\mu}\leq \frac{\gamma}{2}$, the algorithm returns $\accept$ w.p. $\geq \frac{2}{3}$;
    \item If $\normtwo{\mu}\geq \gamma$, the algorithm returns $\reject$ w.p. $\geq \frac{2}{3}$;
\end{itemize}
as long as $\ns\geq C\frac{\sqrt{\dims}}{\gamma^2}\lor 1$ for some absolute constant $C>0$.
\end{theorem}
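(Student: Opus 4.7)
The plan is to analyze a natural $U$-statistic capturing the squared norm of the mean vector. For each coordinate $j\in[\dims]$, write $S_j$ for the signed sum of the $\ns_j$ samples assigned to coordinate $j$ (with $S_j=0$ if $\ns_j=0$), and define
\[
    T \eqdef \sum_{j=1}^{\dims} T_j, \qquad T_j \eqdef \frac{1}{2}\left(S_j^2 - \ns_j\right).
\]
Since the samples are $\pm 1$-valued, a direct calculation shows $\expect{T_j \mid \ns_j} = \binom{\ns_j}{2}\mu_j^2$, so taking outer expectation over the multinomial allocation $\mathbf{\ns}=(\ns_1,\ldots,\ns_\dims)$,
\[
    \expect{T} = \expect{\binom{\ns_j}{2}}\normtwo{\mu}^2 = \frac{\binom{\ns\dims}{2}}{\dims^2}\normtwo{\mu}^2 \asymp \ns^2 \normtwo{\mu}^2.
\]
In particular $\expect{T} = 0$ in the null and $\expect{T} \gtrsim \ns^2\gamma^2$ in the alternative, so a threshold test at $\tau \asymp \ns^2\gamma^2$ is the right target.

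The main technical work is controlling $\var(T)$ via the law of total variance, $\var(T) = \expect{\var(T \mid \mathbf{\ns})} + \var(\expect{T \mid \mathbf{\ns}})$. For the conditional part, given $\mathbf{\ns}$ the $T_j$'s are independent, and a standard fourth-moment computation for sums of $\pm 1$ variables yields $\var(T_j \mid \ns_j) \lesssim \ns_j^2 + \ns_j^3\mu_j^2$. Using that $\ns_j \sim \binomial{\ns\dims}{1/\dims}$ satisfies $\expect{\ns_j^k}\lesssim \ns^k$ for $k \leq 3$ (in our regime $\ns \geq 1$), summing and taking expectation gives
\[
    \expect{\var(T \mid \mathbf{\ns})} \lesssim \dims\ns^2 + \ns^3\normtwo{\mu}^2,
\]
which matches the variance bound in the fixed-sample-size analogue of~\cref{thm:unifprodtest}.

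The delicate term is the marginal variance $\var(\expect{T \mid \mathbf{\ns}}) = \var\Paren{\sum_{j=1}^{\dims} \mu_j^2 \binom{\ns_j}{2}}$, which arises entirely from the randomness in $\mathbf{\ns}$ and is the step at which negative association enters. The multinomial counts $(\ns_1,\ldots,\ns_\dims)$ are negatively associated, and since $x\mapsto \mu_j^2\binom{x}{2}$ is nondecreasing with a nonnegative coefficient, the NA variance inequality gives the subadditive bound
\[
    \var\Paren{\sum_{j=1}^{\dims} \mu_j^2 \binom{\ns_j}{2}} \leq \sum_{j=1}^{\dims} \mu_j^4\,\var\Paren{\binom{\ns_j}{2}}.
\]
A direct binomial moment computation shows $\var\bigl(\binom{\ns_j}{2}\bigr) \lesssim \ns^3$, so this marginal piece is at most $\ns^3 \sum_j \mu_j^4 \leq \ns^3\normtwo{\mu}^2$, absorbed by the conditional part. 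Combining, $\var(T) \lesssim \dims\ns^2 + \ns^3\normtwo{\mu}^2$, and a one-sided Chebyshev inequality at the threshold $\tau \asymp \ns^2 \gamma^2$ — using $\normtwo{\mu}^2 \geq \gamma^2$ in the alternative to bound both error probabilities — yields completeness and soundness whenever $\ns \gtrsim \sqrt{\dims}/\gamma^2 \vee 1$. The main obstacle is precisely the NA bound on the marginal variance: without it, a naive triangle-style argument would introduce positive cross-covariances between the $\binom{\ns_j}{2}$'s and cost an extra factor of $\dims$, spoiling the claimed sample complexity in the regime where $\normtwo{\mu}^2$ is much larger than $\gamma^2$.
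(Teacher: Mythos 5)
Your proposal is correct and matches the paper's proof essentially line-for-line: your statistic $T = \tfrac12\sum_j(S_j^2-\ns_j)$ is exactly $\tfrac{\ns^2}{2}Z$ where $Z=\normtwo{\bar{X}}^2-\dims/\ns$ is the paper's statistic (since $\sum_j\ns_j=\ns\dims$ is deterministic), and the analysis proceeds identically via the law of total variance, negative association of the multinomial counts to bound $\var\bigl(\sum_j\mu_j^2\binom{\ns_j}{2}\bigr)$ by the sum of individual variances, and Chebyshev. The only cosmetic difference is that you center by $\ns_j$ inside each coordinate (so the NA step applies to $x\mapsto\mu_j^2\binom{x}{2}$ alone), whereas the paper applies NA to $x\mapsto x+\binom{x}{2}\mu_j^2$ and gets the same bound up to the deterministic $\sum_j\ns_j$ term.
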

\noindent We defer the proof to~\cref{proof:thm:unifprodtest:symmetric}.

As~\cref{thm:unifprodtest:symmetric} takes exactly the same parameters, and has exactly the same guarantees as~\cref{thm:unifprodtest}, we do not restate the proof of~\cref{thm:prcoin:1bit:assym}. We need only note that having each user randomly sample their own index $j$ and send it to the center is indeed distributed as described above. This of course increases communication to $\bigO{\log\ab}$ bits.

This alone does not make the entire protocol symmetric. As we said, we still have one user assigned to running~\cref{alg:large:m:tester}. However, this is easily addressed. At the cost of one extra bit of communication per user, we can simply have \emph{all} users run this second test, and then let the center select arbitrarily one of the $\ns$ outcomes to use. This finally gives a (non-private) algorithm, and can be summarized as follows:

\begin{theorem}[Symmetric private-coin uniformity testing]
    \label{thm:private:symmetric:combined:testing}

    There exists a \emph{symmetric, private-coin} (non-private) algorithm for uniformity testing (over domain size $\ab$) which on distance parameter $\dst\in(0,1]$ takes
    \[
    \ns = \bigO{\frac{\ab^{3/2}\log\ab}{\ms\dst^2}\lor \ab}
    \]
    users, each holding $\ms$ \iid samples from the unknown distribution, and sending $O(\log\ab)$ bits of communication.
\end{theorem}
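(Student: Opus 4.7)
The plan is to combine \cref{alg:hadamard-main} and \cref{alg:large:m:tester} into a single symmetric protocol and invoke \cref{thm:unifprodtest:symmetric} as the appropriate substitute for \cref{thm:unifprodtest} when users self-assign their group. Concretely, I would have each of the $\ns$ users independently draw a column index $j_i \in \{2,\dots,\ab\}$ uniformly at random, locally execute both subroutines on their $\ms$ samples against the Hadamard set $\chi_{j_i}$, and send the tuple $(j_i, Y_i, v_i)$, where $Y_i = \indic{X_i \geq (\ms+1)/2}$ is the thresholded count from \cref{alg:hadamard-main} and $v_i$ is the indicator bit from \cref{alg:large:m:tester} computed using \emph{all} $\ab-1$ Hadamard sets on their samples. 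The total communication per user is $\log_2\ab + 2 = O(\log\ab)$ bits, and every user runs the exact same local randomizer, so the resulting protocol is symmetric.

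On the server side, I would run two tests in parallel and \reject unless both accept. The first test is the majority vote on $\{v_i\}_{i=1}^{\ns}$ exactly as in \cref{alg:large:m:tester}; since all $\ns$ users now supply a $v_i$, \cref{lemma:complementary} applies with room to spare (one could even sub-sample a single $v_i$ arbitrarily if preferred, matching the original construction verbatim). The second test bins the users by their reported $j_i$ to obtain counts $\ns_2, \dots, \ns_{\ab}$, then feeds the (centred) Rademacher bits $Z_i = 2Y_i-1$ together with those counts into the product-testing algorithm of \cref{thm:unifprodtest:symmetric}. Crucially, the joint distribution of $(\ns_j)_j$ is exactly multinomial$(\ns, (1/(\ab-1), \dots, 1/(\ab-1)))$ by construction, which matches the hypothesis of \cref{thm:unifprodtest:symmetric}.

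The analysis then mirrors the case distinction in \cref{ssec:combining:nonprivate} verbatim. If $\p = \uniform_{\ab}$, each subroutine accepts with probability at least $9/10$, giving a global accept probability of at least $8/10$. If $\totalvardist{\p}{\uniform_{\ab}} > \dst$, either (i) some $\dst_{j^\ast} = \bigOmega{\sqrt{\log\ab/\ms}}$, in which case \cref{lemma:complementary} forces the majority vote to reject, or (ii) all biases are at most $\sqrt{\log\ab/\ms}$, so \eqref{eq:cases:lb:mean:product} yields $\normtwo{\mu}^2 \gtrsim \ms\dst^2/\log\ab$ and \cref{thm:unifprodtest:symmetric} applied with $\gamma^2 \eqdef \ms\dst^2/\log\ab$ and ``number of rounds'' $\ns/(\ab-1)$ rejects, provided $\ns/(\ab-1) \gtrsim \sqrt{\ab}/\gamma^2 \lor 1$, i.e., $\ns \gtrsim \ab^{3/2}\log\ab/(\ms\dst^2) \lor \ab$, matching the theorem.

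The only non-routine piece is verifying that \cref{thm:unifprodtest:symmetric} really can be fed the data in this way; the potential obstacle is that the $Z_i$'s produced by different users are conditionally Bernoulli with parameter depending on $j_i$, and after grouping by $j_i$ one must check that the resulting per-coordinate samples are indeed iid from $\bernoulli{1/2+\beta_j}$ and independent across coordinates. This follows because the $\ms$-tuples of users are iid from $\p^{\otimes\ms}$, the $j_i$'s are drawn independently of the data, and conditional on the assignments the user bits are independent Bernoullis with the correct parameter. Once this independence check is noted, the proof is a direct assembly of \cref{thm:unifprodtest:symmetric}, \cref{lemma:epluribusunum}, \cref{lemma:hadamard:norm:preservation}, and \cref{lemma:complementary}, with no further calculation beyond the case analysis above.
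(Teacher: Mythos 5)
Your proposal follows the same decomposition as the paper: random self-assignment of Hadamard columns, thresholded bit from \cref{alg:hadamard-main}, the large-$\ms$ detector of \cref{alg:large:m:tester} run by every user, and the multinomial-robust product tester of \cref{thm:unifprodtest:symmetric} to absorb the coupon-collector imbalance. The independence check and the matching of $(\ns_j)_j$ to the multinomial hypothesis are both correctly identified, and the arithmetic $\ns/(\ab-1)\gtrsim\sqrt{\ab}/\gamma^2\lor 1$ with $\gamma^2=\ms\dst^2/\log\ab$ recovers the stated bound.

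One inconsistency is worth flagging in your preferred (majority-vote) variant. If you run \cref{alg:large:m:tester} ``exactly as written'' with all $\ns$ users, the threshold is $T=\frac{1}{2}\sqrt{\ms\ln(20\ns\ab)}$, so the case-(i) detectability becomes $\dst_{j^\ast}=\Omega(\sqrt{\log(\ns\ab)/\ms})$ rather than $\Omega(\sqrt{\log\ab/\ms})$. Your case (ii) would then only give $\normtwo{\mu}^2\gtrsim\ms\dst^2/\log(\ns\ab)$, introducing an extra $\log\ns$ factor that, for small $\dst$, is not $O(\log\ab)$ and breaks the stated bound. The fix is what you already flag parenthetically: set $T$ with $\ns$ replaced by $1$ (the majority vote then only gets easier to analyze, by Chernoff), or have the server discard all but one $v_i$, which is exactly the paper's construction. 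With that calibration your argument is complete and coincides with the paper's.
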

Note that the first term dominates when $\ms \leq \bigO{\sqrt{\ab}/\dst^2}$, which is the regime of interest (as otherwise a single user has enough samples to run a uniformity testing algorithm by themselves).
\medskip

Of course, this result may still be somewhat underwhelming, as (albeit communication-efficient) the algorithm does not provide any privacy guarantee. In the next section, we will see how it can be easily adapted to yield our main result,~\cref{thm:private:combined:testing}.

\section{Symmetric Private Testing}
\label{sec:privacy}
We here analyze the private analogues to the algorithms defined in~\cref{sec:noname:hadamard:protocol}. We focus on making each of our two algorithms private before showing how they can be combined. In each case we will apply binary randomized response~\cite{Kasiviswanathan11} to the bit returned by each user.

Combining the two private algorithms gives us an asymmetric and symmetric protocol, each with sample complexity comparable to the $\ms=1$ case with $\ms$ times as many users, up to logarithmic factors. 

\begin{theorem}[1-bit user-level LDP uniformity testing]
    \label{thm:private:combined:testing}
    There exists a \emph{private-coin, user-level} locally differentially private algorithm for uniformity testing (over domain size $\ab$) which for small privacy parameter $\priv>0$, and distance parameter $\dst\in(0,1]$ takes
    \[
    \ns = \bigO{\frac{\ab^{3/2}\log(\ab/r)}{\ms\dst^2\priv^2}}
    \]
    users, each holding $\ms$ \iid samples from the unknown distribution. If the protocol is \emph{asymmetric}, $r=\priv$ and each user sends 1 bit of communication. If the protocol is \emph{symmetric}, then $r=\priv\dst\ms$ and each user sends $\bigO{\log\ab}$ bits of communication.
\end{theorem}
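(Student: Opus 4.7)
The plan is to privatize each of the two non-private sub-protocols from~\cref{sec:noname:hadamard:protocol} by applying binary Randomized Response (RR) to the single bit each user sends, and then recombine them exactly as in~\cref{ssec:combining:nonprivate} (asymmetric case) and~\cref{ssec:symmetric:nonprivate} (symmetric case). Since each sub-protocol already reduces the user's message to a single $\{0,1\}$ output---and the only additional information sent in the symmetric case is a uniformly random index $j\in[\ab]$, which is independent of the user's data and thus reveals no private information---applying $\priv$-RR to the output bit(s) suffices to achieve $\priv$-user-level LDP (with the privacy budget split across the two bits in the symmetric case, costing only a constant factor).

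For the Hadamard sub-protocol (\cref{alg:hadamard-main}), RR turns each Bernoulli$(1/2+\beta_j)$ bit into a Bernoulli$(1/2+\tilde\beta_j)$ with $\tilde\beta_j = \tanh(\priv/2)\,\beta_j = \Theta(\priv)\,\beta_j$ for small $\priv$. Hence the mean vector $\tilde\mu$ fed to the product tester satisfies $\normtwo{\tilde\mu}^2 \asymp \priv^2 \normtwo{\mu}^2$, so a direct invocation of~\cref{thm:unifprodtest} (asymmetric) or~\cref{thm:unifprodtest:symmetric} (symmetric) with distance parameter $\Theta(\priv)\gamma$ in place of $\gamma$ inflates the user count per coordinate by a $1/\priv^2$ factor. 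For the complementary large-$\ms$ sub-protocol (\cref{alg:large:m:tester}), after RR each indicator $v_i$ is a Bernoulli whose mean changes by only $\Theta(\priv)$ between the uniform and biased cases, so a majority vote over $\tildeO{1/\priv^2}$ RR bits distinguishes them by standard concentration.

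Concretely, I would choose the threshold $T$ in the privatized complementary test so that a uniform $\p$ is still accepted with probability $\geq 9/10$ after RR noise, and verify that this forces the detection regime to be $|\dst_{j^*}| \gtrsim \sqrt{\log(\ab/r)/\ms}$ for the appropriate effective parameter $r$. The same logarithmic factor then controls the truncation in~\eqref{eq:cases:lb:mean:product} via $\sum_j (\ms\dst_j^2 \wedge \log(\ab/r)) \gtrsim \ms\dst^2$, yielding $\normtwo{\tilde\mu}^2 \gtrsim \priv^2 \ms\dst^2/\log(\ab/r)$ whenever the complementary test accepts. Plugging this into~\cref{thm:unifprodtest} or~\cref{thm:unifprodtest:symmetric} then produces $\ns = \bigO{\ab^{3/2}\log(\ab/r)/(\ms\dst^2\priv^2)}$---with $r=\priv$ in the asymmetric case (where users are pre-partitioned between the two sub-protocols and only a single bit is transmitted) and $r=\priv\dst\ms$ in the symmetric case (where each user randomizes its own role, sends both privatized bits, and also reports its random index $j\in[\ab]$, for a total of $O(\log\ab)$ bits).

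The main obstacle, as in the non-private analysis, will be the self-consistent choice of the threshold $T$: in the private case, $T$ depends on the final $\ns$, which itself depends on $\log(\ab/r)$ via $T$, so the parameters must be set to satisfy a fixed-point condition---routine but easy to get wrong. A secondary subtlety is that in the symmetric setting the number of users actually contributing to each of the $\ab$ Hadamard groups is a (correlated) binomial rather than deterministic, but this is precisely what~\cref{thm:unifprodtest:symmetric} is designed to absorb without any additional logarithmic overhead.
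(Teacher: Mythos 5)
Your proposal follows essentially the same route as the paper's own proof: privatize each of the two sub-protocols via binary Randomized Response on the single output bit, observe that RR scales the effective bias by $\Theta(\priv)$ (and hence the $\ell_2^2$ mean by $\Theta(\priv^2)$), and combine the two tests with the threshold $T$ in the complementary test inducing the $\log(\ns_2\ab)$ factor---with $\ns_2 = O(1/\priv^2)$ giving $r=\priv$ in the asymmetric case, and $\ns_2 = \ns$ (splitting the privacy budget) giving $r = \priv\dst\ms$ in the symmetric case. The fixed-point issue you flag is indeed present and is resolved in the paper exactly as you anticipate: by plugging in the derived value of $\ns_2$ directly into the logarithm.
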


\subsection{Private~\cref{alg:hadamard-main} for small \texorpdfstring{$\ms$}{m}}

We first introduce the private version of~\cref{alg:hadamard-main}, which handles the case when $\ms$ is small. As described above we make this algorithm private by an application of binary randomized response to the bit returned by each user.

\begin{lemma}[Symmetric private-coin user-level LDP 1-bit uniformity tester]
\label{thm:prcoin:ldp:1bit}
    Given $\ns$ users, each holding $\ms$ samples of some unknown distribution $\p$ on $\ab$ elements. For small privacy parameter $\priv>0$. There exists an algorithm that distinguishes between $\p=\uniform_{\ab}$ and $\dtv(\p,\uniform_{\ab})>\dst$ with
    \[
    \ns = \bigO{\frac{\ab^{3/2}}{\priv^2(\ms\dst^2\wedge 1)}},
    \]
    samples. 
\end{lemma}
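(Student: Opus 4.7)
The plan is to take \cref{alg:hadamard-main} (or its symmetric variant from \cref{ssec:symmetric:nonprivate}) essentially as-is, and simply post-process each user's single-bit output $Y_i$ through binary randomized response with parameter $\priv$, obtaining $\tilde Y_i$. Privacy is then immediate: the user's $\ms$-sample input $\vec{x}_i \in [\ab]^{\ms}$ is mapped through an arbitrary (possibly data-dependent) function into the one-bit value $Y_i \in \{0,1\}$, and then through a $\priv$-LDP mechanism for single bits. Since user-level LDP only constrains the input-to-output map, this yields $\priv$-user-level LDP as a one-line consequence of the guarantee of randomized response. This is precisely where the one-bit-per-user design of \cref{sec:noname:hadamard:protocol} pays off: the sensitivity of a single bit is trivially one, so we privatize without invoking composition (which would cost factors of $\ms$, as discussed in the introduction).

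For utility, I would re-run the analysis of \cref{thm:prcoin:1bit:assym} while tracking how randomized response affects the mean vector of the induced product-distribution testing subproblem. If, as in that analysis, the un-privatized Rademacher $Z_i = 2Y_i - 1$ has mean $\beta_{j(i)}$, then the privatized Rademacher $\tilde Z_i = 2\tilde Y_i - 1$ has mean
\[
\tilde\beta_{j(i)} \;=\; \beta_{j(i)}\cdot \frac{e^{\priv}-1}{e^{\priv}+1} \;=\; \Theta(\priv)\cdot \beta_{j(i)}
\]
for $\priv \in (0,1]$. In particular, the uniform case $\p=\uniform_{\ab}$ still yields $\tilde\beta_j = 0$ for every $j$, while combining \cref{lemma:epluribusunum} and \cref{lemma:hadamard:norm:preservation} in the far case gives
\[
\normtwo{\tilde\mu}^2 \;=\; \sum_{j=1}^{\ab} \tilde\beta_j^2 \;\gtrsim\; \priv^2 \sum_{j=1}^{\ab} (\ms\dst_j^2 \wedge 1) \;\gtrsim\; \priv^2\,(\ms\dst^2 \wedge 1).
\]
I would then invoke the product-distribution uniformity tester (\cref{thm:unifprodtest}, or \cref{thm:unifprodtest:symmetric} for the symmetric variant) with parameter $\gamma^2 = \Theta(\priv^2(\ms\dst^2 \wedge 1))$ and $N = \ns/(\ab-1)$ ``samples'' per coordinate. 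The hypothesis $\gamma \in (0,\sqrt{\ab}]$ is satisfied thanks to the $\wedge 1$ truncation, and rearranging the requirement $N \gtrsim \sqrt{\ab}/\gamma^2$ yields exactly
\[
\ns \;=\; \bigO{\frac{\ab^{3/2}}{\priv^2(\ms\dst^2 \wedge 1)}}
\]
as claimed.

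The main obstacle I anticipate is purely bookkeeping rather than conceptual: verifying that the $\tanh(\priv/2)$ contraction interacts cleanly with the $\wedge 1$ clamp coming from \cref{lemma:epluribusunum} (so that the final $\gamma^2$ genuinely factorizes as $\priv^2\cdot(\ms\dst^2 \wedge 1)$ rather than picking up an extra min with $\priv^2$), and, in the symmetric case, that the multinomial allocation of users to Hadamard columns from \cref{thm:unifprodtest:symmetric} is compatible with per-user randomized response on only the transmitted bit (the column index itself being non-private auxiliary information of the user, chosen independently of the data). Both are routine once one observes that randomized response acts as a coordinate-independent scalar multiplication on the mean vector and commutes with the symmetrization argument underlying \cref{thm:unifprodtest:symmetric}.
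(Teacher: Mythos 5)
Your proposal is correct and follows essentially the same route as the paper's proof: apply binary randomized response to the single thresholded bit (so privacy is immediate from the one-bit sensitivity), observe that this multiplies each coordinate's bias $\beta_j$ by $\frac{e^{\priv}-1}{e^{\priv}+1} = \Theta(\priv)$, replace the deterministic group assignment by a uniformly random column index (sent in the clear) so \cref{thm:unifprodtest:symmetric} applies, and rearrange the resulting requirement on the per-coordinate sample count to obtain $\ns = \bigO{\ab^{3/2}/(\priv^2(\ms\dst^2 \wedge 1))}$. Your anticipated bookkeeping worry resolves exactly as you suspect: since randomized response scales the post-clamp bias rather than the pre-clamp one, the privacy factor and the $\wedge 1$ truncation are cleanly multiplicative, and $\gamma^2 \lesssim \priv^2 \leq 1 \leq \ab$ so the range hypothesis on $\gamma$ is trivially met.
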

\begin{proof}

Let $Q\colon\{0,1\}\to\{0,1\}$ be binary randomized response as described in~\cref{sec:prelims}. We use the following simple fact about the distribution induced by $Q$: Let $X$ be a Bernoulli random variable with mean $p$, then the random variable obtained by applying $Q$ to $X$ has mean
\begin{align*}
    \bE{x\sim X}{Q(x)} &= \frac{e^\priv}{e^\priv + 1}p + (1-p)\frac{1}{e^\priv + 1}\\
    &= \frac{e^\priv - 1}{e^\priv + 1}p + \frac{1}{e^\priv + 1},
\end{align*}
and letting $p= \frac{1}{2}+\beta$, we get that  
\begin{align}
\label{eq:bias:private:coin}
    \bE{x\sim X}{Q(x)} &= \frac{1}{2} + \frac{e^\priv - 1}{e^\priv + 1}\beta .
\end{align}

Let $\tilde{Y}_i\defeq Q(Y_i)$ where $Y_i$ is the same bit returned by users in~\cref{alg:hadamard-main}. In order to apply~\cref{thm:unifprodtest:symmetric}, we do not use the deterministic assignment of groups as stated in~\cref{alg:hadamard-main}. Instead each user selects $j$ uniformly at random, and send $(\tilde{Y}_i,j)$ to the server. For notational convenience, we still allow $j(i)$ to denote the (randomized) selection of $j$ by user $i$. 

As before, the server computes 
\[
Z_i = 2(\tilde{Y}_i - 1/2) \in \{-1,1\} \tag{$1\leq i\leq \ns$}.
\]
By~\eqref{eq:bias:private:coin} and~\cref{lemma:epluribusunum}, $Z_i$ is therefore distributed as a Rademacher with mean 
\[
\abs{\beta_{j(i)}}=\bigOmega{2 \frac{e^\priv - 1}{e^\priv + 1}(\sqrt{\ms}\dst_{j(i)}\wedge 1)}.
\]

For brevity, let $\q_j$ denote the Rademacher distribution with mean $\pm\beta_{j}$. We therefore have the vector mean

\[
    \mu\eqdef\shortexpect_{X\sim \q_1\otimes\cdots\otimes \q_{\ab}}[X]=\left(\pm \beta_1,\dots, \pm \beta_{\ab} \right),
\]
Which, by~\cref{lemma:hadamard:norm:preservation}, has the appropriate norm
\begin{equation}
\label{eq:lb:private:norm:mean}
\normtwo{\mu}^2 
    = \sum_{j=1}^{\ab} \beta_j^2 
    \gtrsim 4\Paren{\frac{e^\priv - 1}{e^\priv + 1}}^2 \sum_{j=1}^{\ab} (\ms\dst_j^2 \wedge 1)\geq 4\Paren{\frac{e^\priv - 1}{e^\priv + 1}}^2 (\ms\dst^2 \wedge 1).
\end{equation}
Plugging into~\cref{thm:unifprodtest:symmetric} with $\ns'=\ns(\ab-1)$, gives a final sample complexity of
\begin{align*}
    \frac{\ns}{\ab-1}\gtrsim \frac{\sqrt{\ab}(e^\priv + 1)^2}{4(e^\priv - 1)^2 (\dst^2\ms\wedge 1)}
\end{align*}
implying that for small $\priv$, we have 
\[
\ns=\bigO{\frac{\ab^{3/2}}{\priv^2(\ms\dst^2\wedge 1)}}.
\]
Noting that as $\priv$ grows, this rapidly converges to the non-private result.
    
\end{proof}

\subsection{Private~\cref{alg:large:m:tester} for large \texorpdfstring{$\ms$}{m}}
Recall that the second stage of our algorithm, defined in~\cref{sec:complementary:cornercase} handles the case when one of the subsets defined by the Hadamard matrix is overrepresented. Non-privately we only require that a single user perform this test for all of the subsets and report their response. Under local differential privacy this would not have any high-probability guarantee. Instead we use a known (and easily derived) bound for learning coins under binary randomized response. Specifically, that one can learn a Bernoulli through randomized response up to additive error $\alpha$ with success probability 2/3 using $\ns=\bigO{1/(\alpha^2\priv^2)}$ samples.

\begin{lemma}[Private version of~\cref{lemma:complementary}]
    \label{lemma:private:complementary}
    \Cref{alg:large:m:tester}, when each user applies binary randomized response to their output, has the following guarantees:
    \begin{itemize}
        \item If $\p=\uniform_{\ab}$, then the server outputs \accept with probability at least $\frac{2e^{\priv} + 1}{3(e^\priv + 1)}$
        \item If there exists $1\leq j^* \leq \ab$ such that $\dst_{j^*}=\bigOmega{\sqrt{\log(\ns\ab)/\ms}}$, then the server outputs \reject with probability at least $\frac{2e^{\priv} + 1}{3(e^\priv + 1)}$
    \end{itemize}
\end{lemma}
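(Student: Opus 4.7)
The plan is to reduce the private statement directly to the non-private analysis of~\cref{lemma:complementary} by composing its per-user correctness guarantee with the signal-dilution factor introduced by binary randomized response. First, I will re-examine the proof of~\cref{lemma:complementary} and extract the following (slightly weaker) per-user statement: in both cases, the bit $v_i$ produced by any single user in~\cref{alg:large:m:tester} equals its ``intended'' value (namely $0$ under $\uniform_{\ab}$ and $1$ when some $\dst_{j^\ast}$ is large) with probability at least $2/3$. For the uniform case this is immediate from the subgaussian tail bound used there, which actually yields $1 - 1/(10\ns) \geq 9/10$ per user; for the alternative case it is precisely the Chebyshev step used in the non-private proof, which bounds the probability that $|V^{(i)}_{j^\ast} - \ms/2| > T$ below by $2/3$.

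Next, I will apply binary randomized response to $v_i$, obtaining $\tilde{v}_i = Q(v_i)$, and perform the standard computation for the distribution of a privatized Bernoulli. If $v_i$ equals its intended value $b \in \{0,1\}$ with probability $q$, then
\[
\probaOf{\tilde{v}_i = b} \;=\; q\cdot \frac{e^\priv}{e^\priv+1} \;+\; (1-q)\cdot \frac{1}{e^\priv+1} \;=\; \frac{1}{e^\priv+1} + q\cdot \frac{e^\priv-1}{e^\priv+1}\,,
\]
which is nondecreasing in $q$ for $\priv \geq 0$. Substituting the lower bound $q \geq 2/3$ from the previous step yields $\probaOf{\tilde{v}_i = b} \geq \frac{2e^\priv + 1}{3(e^\priv+1)}$ uniformly in $b$, which is exactly the bound claimed.

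Finally, I will explain how this translates into the server's output. In the one-user regime used to pair with~\cref{alg:hadamard-main} in~\cref{ssec:combining:nonprivate} (that is, $\ns_2 = 1$), the server's decision is literally $\tilde{v}_1$: it accepts iff $\tilde{v}_1 = 0$ and rejects iff $\tilde{v}_1 = 1$, so the correctness probability of the server equals the correctness probability of $\tilde{v}_1$, establishing both bullets of the lemma. If one wishes to invoke the lemma with more than one user (so that the algorithm genuinely performs the majority vote written in~\cref{alg:large:m:tester}), the same computation shows that $\expect{\tilde{v}_i}$ is bounded away from $1/2$ on both sides for any $\priv > 0$ (below by $\tfrac{1}{e^\priv+1} + O(1/\ns)$ in the uniform case and above by $\tfrac{2e^\priv+1}{3(e^\priv+1)} > 1/2$ in the alternative case), so a Chernoff bound on the majority vote would amplify the guarantee further; the stated bound, however, already suffices for our use of the lemma.

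I do not anticipate any serious technical obstacle: the proof is a clean composition of the non-private correctness of~\cref{alg:large:m:tester} with the $\tfrac{e^\priv-1}{e^\priv+1}$ signal-dilution factor of randomized response. The only mild subtleties are (i) recording that the non-private per-user success probability can be taken to be $2/3$ uniformly in both the accept and reject cases, and (ii) keeping track of the fact that ``intended value'' differs between the two cases, which the symmetric formula above handles cleanly.
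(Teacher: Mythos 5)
Your per-user calculation is correct and exactly reproduces the lemma's formula: with non-private per-user correctness $q \geq 2/3$ (the weaker of the $9/10$ and $2/3$ bounds from the non-private proof), binary randomized response yields
\[
\Pr[\tilde{v}_i = b] \;=\; \frac{1}{e^\priv+1} + q\cdot\frac{e^\priv-1}{e^\priv+1} \;\geq\; \frac{2e^\priv+1}{3(e^\priv+1)}\,.
\]
This, however, is not the mechanism the paper's proof invokes. The paper appeals to the coin-learning fact stated immediately before the lemma: through randomized response, one can learn a Bernoulli to additive error $\alpha$ with success probability $2/3$ using $\Theta(1/(\alpha^2\priv^2))$ samples; taking $\alpha = 1/9$ suffices, since the mean of the $v_i$'s is below $1/10$ under uniformity and at least $2/3$ otherwise. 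That is, the server is not taking the literal majority vote of Algorithm~2 on the randomized bits, but debiasing them and thresholding the estimated mean, using the $\ns_2 = O(1/\priv^2)$ users that the paper later allocates in the asymmetric combination. That route yields a constant success probability independent of $\priv$, whereas your bound $\frac{2e^\priv+1}{3(e^\priv+1)}$ tends to $1/2$ as $\priv \to 0$, which is too weak to drive the combined analysis. Your closing remark that a majority vote is amplifiable by Chernoff because $\expect{\tilde{v}_i}$ is ``bounded away from $1/2$'' glosses over the key point: the gap is only $\Theta(\priv)$, so amplification to a fixed constant success probability is precisely a $\Theta(1/\priv^2)$-user affair, which is what the cited coin learner quantifies. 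In short, your single-user argument does faithfully derive the printed formula — arguably more faithfully than the paper's own terse justification — but it misses the coin-learning / debiasing mechanism and the associated $\Theta(1/\priv^2)$ user count that the paper relies on to turn this into a usable constant-probability guarantee.
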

\noindent This follows directly from applying randomized response to each users' output from~\cref{alg:large:m:tester} and applying the fact stated above to learn the coin up to additive error $1/9$.

\subsection{Analysis of the Combined Algorithm}
As in the non-private case, we have to consider how to combine these two algorithms. First, we consider the ``easy'' asymmetric case; have $\ns_1$ users run the algorithm described in the proof of~\cref{thm:prcoin:ldp:1bit}, we then only need $\ns_2=\bigO{1/\priv^2}$ users to run the second protocol. As the number of users required by the second protocol is clearly dominated by the first, we retain much the same sample complexity (up to a logarithmic factor described below). 

To make the protocol symmetric we can follow the same procedure described in~\cref{ssec:symmetric:nonprivate} and have \emph{all} users run the second protocol. This incurs two costs, (1) we must divide the privacy budget between these two protocols, and (2) we lose a logarithmic $\ns$ factor in the final sample complexity.

We here describe both approaches and their final sample complexities.

\begin{proofof}{{\cref{thm:private:combined:testing}}}
We begin by assuming that $\ns_1$ users run the private version of~\cref{alg:hadamard-main}, and $\ns_2$ users run the private version of~\cref{alg:large:m:tester}. Once again, the server outputs \accept if both tests return \accept, and otherwise it outputs \reject. We defer, for now, the decision of how best to choose $\ns_1$ and $\ns_2$. Assume
\[
\ns\gtrsim \frac{\ab^{3/2}\log(\ns_2 \ab)(e^\priv + 1)^2}{\ms\dst^2(e^\priv - 1)^2}\vee \ab,
\]
noting that the first term dominates in our regime of interest, $\ms<\sqrt{\ab}/\dst^2$. 

First, consider $\p=\uniform_{\ab}$. By~\cref{lemma:private:complementary,thm:prcoin:ldp:1bit}, the first protocol accepts with probability at least $2/3$ and the second with probability at least $\frac{2e^\priv + 1}{3(e^\priv + 1)}$, and so the probability that both accept is at least $8/9$. This resolves the uniform case.

Now, when $\dtv(\p,\uniform_{\ab})>\dst$ we again have a distinction of cases. If there exists some $j^{*}\in[\ab]$ such that $\dst_{j^*}=\bigOmega{\sqrt{\log(\ns_2 \ab)/\ms}}$, then by~\cref{lemma:private:complementary}, the second protocol outputs \reject with probability at least $\frac{2e^\priv + 1}{3(e^\priv + 1)}$ and so the server outputs \reject.

Otherwise, we have that by~\eqref{eq:lb:private:norm:mean}, the mean if the product distribution in the first protocol is at last

\begin{equation}
\label{eq:cases:private:mean:product}
    \normtwo{\mu}^2 
    \gtrsim \Paren{\frac{e^\priv - 1}{e^\priv + 1}}^2\sum_{j=1}^{\ab} \Paren{\ms\dst_{j}^2\land 1} 
    \geq \Paren{\frac{e^\priv - 1}{e^\priv + 1}}^2\sum_{j=1}^{\ab} \Paren{\frac{\ms\dst_{j}^2}{\log(\ns_2 \ab)}\land 1} 
    \gtrsim \frac{\ms\dst^2(e^\priv - 1)^2}{\log(\ns_2 \ab)(e^\priv + 1)^2}
\end{equation}
since $\ms\dst_{j}^{2}\lesssim \log({\ns_2 \ab})$ for all $j$ and $\sum_{j=1}^{\ab} \dst_{j}^2 > {\dst}^{2}$. We then invoke the uniformity testing algorithm for product distributions of~\cref{thm:unifprodtest:symmetric} with $\gamma^{2}\defeq \frac{(e^\priv - 1)^2 \ms\dst^{2}}{(e^\priv + 1)^2 \log(\ns_2 \ab)}$ and $\ns' = \ns_1\ab$. The server rejects with probability at least 2/3, as
\[
\frac{\ns_1}{\ab}\gtrsim \frac{\ab^{1/2}}{\gamma^2}.
\]

Therefore, in the uniform case, each test accepts with high probability, and so does the server. And in the far-from-uniform case, at least one of the two tests outputs \reject with probability 2/3, and so the server does also.

We close by selecting the appropriate value for $\ns_2$ in the symmetric and asymmetric cases. In the asymmetric case we assign $\ns_2=\bigO{1/\priv^2}$ to run~\cref{alg:large:m:tester}, while leaving the rest to run~\cref{alg:hadamard-main}. The number of users $\ns_1$ clearly dominates $\ns_2$ in all regimes, and so we get a final sample complexity of
\[
\ns\gtrsim \frac{\ab^{3/2}}{\ms\dst^2(\priv^2 \lor 1)}\log\Paren{\frac{\ab}{\priv\lor 1}}.
\]

In the symmetric case, things are complicated, especially with regards to practicality. In order to use our established strategy of ``have everyone run both algorithms'' we must divide our privacy budget between the two algorithms. We here take the path of least resistance, and assume that each user runs the private version of~\cref{alg:hadamard-main} with privacy parameter $\priv_1=\priv/2$, and likewise runs~\cref{alg:large:m:tester} with $\priv_2 = \priv/2$. As such, we have $\ns_2=\ns_1=\ns$ and so gain an $\ns_2$ term in the log of~\cref{lemma:private:complementary}. Setting $\ns_2$ to the sample complexity derived in~\cref{thm:prcoin:ldp:1bit}, one can see that we require
\[
\ns\gtrsim \frac{\ab^{3/2}}{\ms\dst^2(\priv^2 \lor 1)}\log\Paren{\frac{\ab}{\ms\dst(\priv\lor 1)}}.
\]
users.    
\end{proofof}

Naturally, one may consider some obvious improvements to the last stages of the proof. We state two such possible improvements here, but do not go further into their details as they affect only constant or logarithmic terms in our final bound.

The first of such improvements one may consider would be to divide the privacy budget more economically. By setting $\priv_1$ proportional to the sample complexity of~\cref{alg:hadamard-main}, and $\priv_2$ likewise, one could almost-surely save some very real constant factors, and would have the benefit of seeing the algorithms converge to their answer at similar rates. 

The other, perhaps less obvious (and more cumbersome to analyze), approach would be to have each user flip a coin to decide which protocol they will run. They would then send their decision to the server along with their response. This would seem to reduce the logarithmic term in the sample complexity of the symmetric protocol down to that of the asymmetric protocol.

\section{Conclusion}
\label{sec:conclusion}
User-level locally private distribution testing is still far from being understood. We observe many phase transitions as $\priv$  and $\ms$ vary. Consider the algorithm for testing in the central model of differential privacy discussed in~\cref{ssec:related}, when $\ms$ exceeds the stated sample complexity we see that the required number of ``users'' $\ns$ goes to 1, and only 1 bit of communication is needed.

How exactly do these algorithms behave as a sliding scale between the local and central models of differential privacy? Characterizing the behaviour in each regime is an ongoing and important field of research.

\paragraph{Future work.} 
Data generation is not always homogenous: the distributions that users sample from are not truly identical; rather, it is more likely they are sampling from $\ns$ distributions that could be similar or very far apart. User-level locally private distribution learning under limited heterogeneity is touched upon in~\cite{AcharyaLS23}, but we mirror their remark that this deserves further study.

A further heterogeneity that should be considered is the case when not all users hold the same number of samples $\ms$. In this case each may hold $\ms_i$ for each $i\in[\ns]$. It is not at all obvious how this could be handled neatly, and general results for this model could greatly help practical implementations.

\printbibliography
\appendix

\section{Deferred Proofs}
\label{app:deferred}
\subsection{Mean Testing for Product Distributions}
    \label{sec:product:testing}
    In this appendix, we first provide, for completeness, a self-contained proof of~\cref{thm:unifprodtest} (restated below), before establishing \cref{thm:unifprodtest:symmetric}, our new ``symmetric mean testing protocol'' which we relied on to make our main protocol symmetric.
\begin{lemma}[\cref{thm:unifprodtest}, restated]
\label{thm:unifprodtest:restated}
There exists an algorithm which, given a parameter $\gamma\in(0,\sqrt{\dims}]$ and $\ns$ \iid samples from a product distribution $\p$ on $\{-1,1\}^\dims$ with $\mu\eqdef\shortexpect_{X\sim \p}[X]\in[-1,1]^d$, has the following guarantees.
\begin{itemize}
    \item If $\normtwo{\mu}\leq \frac{\gamma}{2}$, the algorithm returns $\accept$ w.p. $\geq \frac{2}{3}$;
    \item If $\normtwo{\mu}\geq \gamma$, the algorithm returns $\reject$ w.p. $\geq \frac{2}{3}$;
\end{itemize}
as long as $\ns\geq C\frac{\sqrt{\dims}}{\gamma^2}$ for some absolute constant $C>0$. (Moreover, one can take $C=50$.)
\end{lemma}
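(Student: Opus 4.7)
The plan is to estimate $\normtwo{\mu}^2$ via the pairwise $U$-statistic tailored to the hypercube: writing $X^{(1)},\dots,X^{(\ns)}$ for the i.i.d.\ samples and $S \eqdef \sum_k X^{(k)}$,
\[
Z \eqdef \sum_{1\leq k<\ell\leq \ns}\langle X^{(k)}, X^{(\ell)}\rangle = \tfrac{1}{2}\bigl(\|S\|_2^2 - \ns\dims\bigr),
\]
where the last equality uses $(X^{(k)}_j)^2=1$. Independence of the samples gives $\expect{Z} = \binom{\ns}{2}\normtwo{\mu}^2$, so the null case forces $\expect{Z} \leq \binom{\ns}{2}\gamma^2/4$ while the alternative forces $\expect{Z} \geq \binom{\ns}{2}\gamma^2$. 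The tester returns \accept if $Z \leq \binom{\ns}{2}\cdot \tfrac{5\gamma^2}{8}$ and \reject otherwise.

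The main work lies in bounding $\var[Z]$. Expanding
\[
\var[Z] = \sum_{k<\ell,\,k'<\ell'} \mathrm{Cov}\!\bigl(\langle X^{(k)},X^{(\ell)}\rangle,\,\langle X^{(k')},X^{(\ell')}\rangle\bigr)
\]
and grouping the terms by the overlap pattern of $\{k,\ell\}$ and $\{k',\ell'\}$, disjoint pairs contribute zero, the $\binom{\ns}{2}$ fully-overlapping pairs each contribute at most $\expect{\langle X^{(k)},X^{(\ell)}\rangle^2} = \dims + \normtwo{\mu}^4 - \|\mu\|_4^4 \leq \dims + \normtwo{\mu}^4$ by direct expansion using the product structure, and the $\Theta(\ns^3)$ pairs sharing exactly one index each contribute, via independence across coordinates, a covariance which simplifies to $\normtwo{\mu}^2 - \|\mu\|_4^4 \leq \normtwo{\mu}^2$. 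Assembling these yields
\[
\var[Z] \lesssim \ns^2\dims + \ns^3\normtwo{\mu}^2.
\]

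Both completeness and soundness then fall out from a single application of Cantelli's inequality (\cref{lemma:cantelli}). The gap between $\expect{Z}$ and the threshold is at least $\binom{\ns}{2}\cdot \tfrac{3\gamma^2}{8} \asymp \ns^2\gamma^2$ on either side, while the standard deviation is $\bigO{\ns\sqrt{\dims} + \ns^{3/2}\normtwo{\mu}}$. The condition $\ns^2\gamma^2 \gtrsim \ns\sqrt{\dims}$ yields the advertised $\ns \gtrsim \sqrt{\dims}/\gamma^2$, and $\ns^2\gamma^2 \gtrsim \ns^{3/2}\normtwo{\mu}$ yields $\ns \gtrsim \normtwo{\mu}^2/\gamma^4$, whose worst case (near $\normtwo{\mu}\asymp\gamma$) reduces to $\ns \gtrsim 1/\gamma^2$ and is absorbed by the first condition as soon as $\dims \geq 1$. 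Tracking constants carefully through Cantelli's inequality (and noting that at $\gamma\asymp\sqrt{\dims}$ the required $\ns=\bigO{1}$ is immediate) yields the explicit bound $C = 50$.

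The main obstacle is just the bookkeeping in the variance expansion, specifically: recognising that the $\|\mu\|_4^4$ terms enter with a favourable sign and may simply be dropped, and verifying that the $\ns^3\normtwo{\mu}^2$ contribution never dominates even when $\normtwo{\mu}$ substantially exceeds $\gamma$ (the relevant ratio $\normtwo{\mu}^2/(\normtwo{\mu}^2-\tfrac{5\gamma^2}{8})^2$ is maximised around $\normtwo{\mu}\asymp\gamma$ and is otherwise smaller). Once those points are in hand, the rest is a routine Chebyshev/Cantelli calculation.
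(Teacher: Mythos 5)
Your proposal is correct and is, up to affine rescaling, the same estimator the paper uses: your $Z=\sum_{k<\ell}\langle X^{(k)},X^{(\ell)}\rangle$ equals $\tfrac{\ns^2}{2}\bigl(\normtwo{\bar{X}}^2-\tfrac{\dims}{\ns}\bigr)$, the paper's centred squared empirical mean. The one place you genuinely diverge is the variance calculation: you decompose the double sum by overlap pattern of index-pairs (the $U$-statistic / Hoeffding-projection route), whereas the paper first uses coordinate-wise independence to write $\var[Z]=\sum_j\var[\bar{X}_j^2]$ and then plugs in closed-form central moments of the Binomials $N_j\sim\binomial{\ns}{(1+\mu_j)/2}$. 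Both give the same bound $\asymp \ns^2\dims+\ns^3\normtwo{\mu}^2$; your route avoids looking up Binomial fourth moments, at the cost of the overlap bookkeeping, while the paper's avoids combinatorial case analysis at the cost of moment formulas. One small slip in your write-up: for the fully-overlapping pairs you bound the covariance by the \emph{second moment} $\dims+\normtwo{\mu}^4-\|\mu\|_4^4$ rather than the \emph{variance} $\dims-\|\mu\|_4^4$, but then drop the $\ns^2\normtwo{\mu}^4$ term in the final display; the dropped term is indeed harmless (it is dominated by the squared gap $\ns^4\normtwo{\mu}^4$ in the soundness case, and vanishes up to $\gamma^4$ in the completeness case), but the cleaner fix is to use the variance bound $\leq\dims$ directly, which is exactly what your final formula assumes.
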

\begin{proof}
    Hereafter, we assume $\ns \geq 2$ (which will follow from choosing $C\geq 2$). The algorithm is quite simple: it first computes, given the $\ns$ \iid samples $X^{(1)},\dots,X^{(\ns)}$ from $\p$, the empirical mean
    \[
        \bar{X} \eqdef \frac{1}{\ns}\sum_{i=1}^{\ns} X^{(i)} \in [-1,1]^\dims
    \]
    and outputs \accept if, and only if, $\normtwo{\bar{X}}^2 \leq \frac{\dims}{\ns}+ \frac{\gamma^2}{2}$. 

    To show this succeeds with high probability as claimed, we analyze the expectation and variance of $Z\eqdef \normtwo{\bar{X}}^2$, and conclude by Chebychev's inequality. To simplify the computations, we record first the fact that, for every $j\in[\dims]$, the random variable 
    \[
    N_j \eqdef \ns\frac{\bar{X}_j+1}{2} = \sum_{i=1}^{\ns} \frac{X^{(i)}_j+1}{2}
    \]
    is Binomially distributed: $N_j\sim\binomial{\ns}{\frac{\mu_j+1}{2}}$; moreover, the $N_j$'s are mutually independent.
    \begin{claim}
        \label{claim:meantesting:simple:expectation}
    We have
    \[
        \expect{Z} = \frac{\dims}{\ns}+\frac{\ns-1}{\ns}\normtwo{\mu}^2\,,
    \]
    \end{claim}
    \begin{proof}
        By linearity of expectation, and since $\ns\bar{X}_j = 2N_j - \ns$ for all $j\in[\dims$,
        \begin{align*}
            \expect{Z}
            &= \sum_{j=1}^{\dims} \expect{\bar{X}_j^2}
            = \frac{1}{\ns^2}\sum_{j=1}^{\dims} \expect{(\ns\bar{X}_j)^2}\\
            &= \frac{4}{\ns^2}\sum_{j=1}^{\dims} \expect{\Paren{N_j - \frac{\ns}{2}}^2} \\
            &= \frac{4}{\ns^2}\sum_{j=1}^{\dims} \expect{\Paren{(N_j - \expect{N_j}) + \Paren{\expect{N_j} - \frac{\ns}{2}}}^2} \\
            &= \frac{4}{\ns^2}\sum_{j=1}^{\dims} \Paren{ \ns \frac{1-\mu_j^2}{4} + \ns^2\frac{\mu_j^2}{4} } \tag{Variance and expectation of $N_j$}\\
            &= \frac{\dims}{\ns}+\frac{\ns-1}{\ns}\normtwo{\mu}^2\,,
        \end{align*}
        as claimed.
    \end{proof}
    Turning to the variance, we get the following:
    \begin{claim}
        \label{claim:meantesting:simple:variance}
    We have
    \[
        \var[Z]
            \leq \frac{2\dims}{\ns^2}+\frac{4(\ns-1)}{\ns^2}\cdot \normtwo{\mu}^2\,.
    \]
    \end{claim}
    \begin{proof}
        By independence of the $\bar{X}_i$'s, and proceeding as in the previous claim we get
        \begin{align*}
            \var[Z]
            &= \sum_{j=1}^{\dims} \var[\bar{X}_j^2]
            = \frac{1}{\ns^4}\sum_{j=1}^{\dims} \var[(\ns\bar{X}_j)^2]\\
            &= \frac{16}{\ns^4}\sum_{j=1}^{\dims} \var\mleft[\Paren{N_j - \frac{\ns}{2}}^2\mright] 
        \end{align*}
        For any $j\in[\dims]$, we have $\var\mleft[\Paren{N_j - \frac{\ns}{2}}^2\mright]
        = \expect{\Paren{N_j - \frac{\ns}{2}}^4} - \expect{\Paren{N_j - \frac{\ns}{2}}^2}^2$; and we already computed the value of the second term in the previous claim, as
        \begin{equation}
        \expect{\Paren{N_j - \frac{\ns}{2}}^2}^2
        = \Paren{ \ns \frac{1-\mu_j^2}{4} + \ns^2\frac{\mu_j^2}{4} }^2
        = \frac{\ns^2}{16}\Paren{ 1+(\ns-1)\mu_j^2}^2
        = \frac{\ns^2}{16}\Paren{ 1+2(\ns-1)\mu_j^2+(\ns-1)^2\mu_j^4}\,. \label{eq:meantesting:simple:variance:1}
        \end{equation}
        The other one can be computed as follows (where we eventually rely on the known expression for the central moments of a Binomial distribution, and the fact that $\expect{N_j}- \frac{\ns}{2} = \frac{1}{2}\ns\mu_j$):
        \begin{align}
            \expect{\Paren{N_j - \frac{\ns}{2}}^4}
            &= \expect{\Paren{\Paren{N_j - \expect{N_j}} + \Paren{\expect{N_j}- \frac{\ns}{2}}}^4} \notag\\
            &= \expect{\Paren{N_j - \expect{N_j}}^4} \notag\\
            &\qquad +4\expect{\Paren{N_j - \expect{N_j}}^3}\Paren{\expect{N_j}- \frac{\ns}{2}}\notag\\
            &\qquad +6\expect{\Paren{N_j - \expect{N_j}}^2}\Paren{\expect{N_j}- \frac{\ns}{2}}^2\notag\\
            &\qquad +4\expect{\Paren{N_j - \expect{N_j}}}\Paren{\expect{N_j}- \frac{\ns}{2}}^3\notag\\
            &\qquad +\Paren{\expect{N_j}- \frac{\ns}{2}}^4\notag\\
            &= \expect{\Paren{N_j - \expect{N_j}}^4} 
            + 2\ns\mu_j\expect{\Paren{N_j - \expect{N_j}}^3}
            + \frac{3}{2}\ns^2\mu_j^2\expect{\Paren{N_j - \expect{N_j}}^2}
            + \frac{\ns^4\mu_j^4}{16}\notag\\
            &= \ns\frac{1-\mu_j^2}{4}\Paren{1+3(\ns-2)\frac{1-\mu_j^2}{4}}
            - \ns^2\mu_j^2\frac{1-\mu_j^2}{2}
            + \frac{3}{8}\ns^3\mu_j^2(1-\mu_j^2)
            + \frac{\ns^4\mu_j^4}{16} \label{eq:meantesting:simple:variance:2}
        \end{align}
        Subtracting~\eqref{eq:meantesting:simple:variance:1} from~\eqref{eq:meantesting:simple:variance:2} and keeping track of all the terms, we get
        \begin{equation}
        \var\mleft[\Paren{N_j - \frac{\ns}{2}}^2\mright] 
        = \frac{\ns(\ns-1)}{8}
        + \frac{\ns(\ns-1)(\ns-2)}{4}\cdot \mu_j^2
        - \frac{\ns(\ns-1)(2\ns-3)}{8}\cdot \mu_j^4\,,
        \end{equation}
        which, dropping the last (non-positive, as $\ns \geq 2$) term and summing over $j$, yields
        \begin{align*}
            \var[Z]
            &= \frac{16}{\ns^4}\sum_{i=1}^{\dims} \var\mleft[\Paren{N_j - \frac{\ns}{2}}^2\mright] 
            \leq \frac{2\dims}{\ns^2}+\frac{4(\ns-1)}{\ns^2}\cdot \normtwo{\mu}^2\,,
        \end{align*}
        as sought.
    \end{proof}
    By the above two claims, we have
    \[
    \var[Z] \leq \frac{2\dims}{\ns^2}+ \frac{4}{\ns} \Paren{ \expect{Z} - \frac{\dims}{\ns}}\,.
    \]
    We can now conclude by Chebyshev's inequality:
    \begin{itemize}
        \item If $\normtwo{\mu} \leq \frac{\gamma}{2}$, then, by~\cref{claim:meantesting:simple:expectation},
        $\expect{Z} \leq \frac{\dims}{\ns} + \frac{\gamma^2}{4}$, and so
        \[
        \probaOf{Z > \frac{\dims}{\ns} + \frac{\gamma^2}{2}} \leq \frac{\var[Z]}{(\gamma^2/4)^2}
        \leq \frac{16}{\gamma^4}\Paren{\frac{2\dims}{\ns^2}+ \frac{4}{\ns} \cdot \frac{\gamma^2}{4} }
        = \frac{32\dims}{\ns^2\gamma^4}+ \frac{16}{\ns\gamma^2} 
        \]
       which is at most $1/3$ for $\ns \geq 50\frac{\sqrt{\dims}}{\gamma^2}$.
        \item If $\normtwo{\mu} \geq \gamma$, then, by~\cref{claim:meantesting:simple:expectation},
        $\expect{Z} \geq \frac{\dims}{\ns} + \gamma^2$; letting $\Delta \eqdef \expect{Z} - \frac{\dims}{\ns} \geq \gamma^2$, we have
        \[
        \probaOf{Z \leq \frac{\dims}{\ns} + \frac{\gamma^2}{2}} \leq
        \probaOf{Z \leq \frac{\dims}{\ns} + \frac{\Delta}{2}} \leq \frac{\var[Z]}{(\Delta/2)^2}
        \leq \frac{4}{\Delta^2}\Paren{\frac{2\dims}{\ns^2}+ \frac{4}{\ns} \cdot \Delta }
        \leq \frac{8\dims}{\ns^2\gamma^4}+ \frac{16}{\ns\gamma^2}
        \]
        which is also at most $1/3$ when $\ns \geq 50\frac{\sqrt{\dims}}{\gamma^2}$.
    \end{itemize}
    This concludes the proof.
\end{proof}

We now turn to proving~\cref{thm:unifprodtest:symmetric}, our generalization of~\cref{thm:unifprodtest}, restated below.
\begin{theorem}[\cref{thm:unifprodtest:symmetric}, restated]
\label{thm:unifprodtest:symmetric:restated}
There exists an algorithm (\cref{alg:mean:testing:symmetric}) which, given parameters $\gamma\in(0,\sqrt{\dims}]$, $\ns \geq 1$, and sample access to distributions $\p_1,\dots,\p_\dims$ on $\{-1,1\}$, chooses $\ns_1,\dots, \ns_\dims$ at random from a multinomial distribution with parameters $\ns'\eqdef \ns\dims$, $\dims$, and $(1/\dims,\dots,1/\dims)$; and then is given $\ns_j$ \iid samples from each $\p_j$ (where the samples are independent from the choice of $\ns_i$'s). Then, letting $\mu\eqdef\shortexpect_{X\sim \p_1\otimes\cdots\otimes\p_\dims}[X]\in[-1,1]^d$, it has the following guarantees.
\begin{itemize}
    \item If $\normtwo{\mu}\leq \frac{\gamma}{2}$, the algorithm returns $\accept$ w.p. $\geq \frac{2}{3}$;
    \item If $\normtwo{\mu}\geq \gamma$, the algorithm returns $\reject$ w.p. $\geq \frac{2}{3}$;
\end{itemize}
as long as $\ns\geq C\frac{\sqrt{\dims}}{\gamma^2}\lor 1$ for some absolute constant $C>0$.
\end{theorem}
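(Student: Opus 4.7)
The plan is to adapt the proof of~\cref{thm:unifprodtest:restated} to the random-sample-size setting by moving from the coordinate-wise empirical-mean statistic (which is ill-defined when some $n_j = 0$) to a single U-statistic of order two built directly from the $n' \eqdef n\dims$ \iid\ sample--coordinate pairs $S_i \eqdef (X^{(i)}, j_i)$. Concretely, with kernel $h((x,j),(x',j'))\eqdef \indic{j=j'}\cdot xx'$, I would have the algorithm compute
\[
\hat Z \eqdef \frac{\dims^2}{\binom{n'}{2}}\sum_{1\leq i_1 < i_2 \leq n'} h(S_{i_1}, S_{i_2}) \;=\; \frac{\dims^2}{n'(n'-1)}\sum_{j=1}^\dims (W_j^2 - n_j)\,,
\]
where $W_j \eqdef \sum_{i=1}^{n'} \indic{j_i=j}\,X^{(i)}$, and \accept iff $\hat Z \leq \tfrac{5}{8}\gamma^2$.

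First I would check unbiasedness: since $j_1, j_2$ are uniform on $[\dims]$ independently of the $X^{(i)}$, $\expect{h(S_1, S_2)} = \sum_j (1/\dims)^2 \mu_j^2 = \normtwo{\mu}^2/\dims^2$, so $\expect{\hat Z} = \normtwo{\mu}^2$. Next I would bound the variance using the Hoeffding decomposition for symmetric order-two U-statistics,
\[
\var\!\left[\sum_{i_1<i_2} h(S_{i_1},S_{i_2})\right] = \binom{n'}{2}\sigma_2^2 + n'(n'-1)(n'-2)\,\sigma_1^2\,,
\]
with $\sigma_2^2 \eqdef \var[h(S_1,S_2)]$ and $\sigma_1^2 \eqdef \mathrm{Cov}(h(S_1,S_2),\, h(S_1,S_3))$. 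Direct computations (using $(X^{(i)})^2=1$) yield $\sigma_2^2 \leq \expect{h(S_1,S_2)^2} = 1/\dims$, and $\sigma_1^2 \leq \expect{\indic{j_1=j_2=j_3}\,X^{(2)}X^{(3)}} = \normtwo{\mu}^2/\dims^3$. Rescaling by $\dims^2/\binom{n'}{2}$ and substituting $n' = n\dims$ yields the clean bound
\[
\var[\hat Z] \lesssim \frac{\dims}{n^2} + \frac{\normtwo{\mu}^2}{n}\,.
\]

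With these estimates in hand, the conclusion is a routine application of Chebyshev's inequality, exactly mirroring the end of~\cref{thm:unifprodtest:restated}. In the completeness case $\normtwo{\mu}\leq \gamma/2$, $\hat Z$ exceeds $\tfrac{5}{8}\gamma^2$ only with probability $\lesssim (\dims/n^2 + \gamma^2/n)/\gamma^4$; in the soundness case $\normtwo{\mu}\geq\gamma$, the threshold lies below $\expect{\hat Z}$ by at least $\tfrac{3}{8}\normtwo{\mu}^2$, so the failure probability is $\lesssim \dims/(n^2\normtwo{\mu}^4) + 1/(n\normtwo{\mu}^2)$. Both bounds are at most $1/3$ as soon as $n \gtrsim \sqrt{\dims}/\gamma^2 \lor 1$, matching the claimed sample complexity.

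The main obstacle I expect is the variance computation: unlike the fixed-sample-size setting of~\cref{thm:unifprodtest:restated}, the contributions at different ``coordinates'' are no longer independent, since for a single pair the indicators $\indic{j_i=j}$ are perfectly negatively correlated across $j\in[\dims]$. Two viewpoints reconcile this. The U-statistic route just sketched sidesteps the issue entirely by treating the pairs $S_i$ as \iid\ observations, so the Hoeffding decomposition applies directly. The alternative, more faithful to the ``symmetrization by negative association'' language used in~\cref{sec:introduction}, is to first condition on $(n_1,\dots,n_\dims)$: given the counts, the per-coordinate U-statistics $\sum_{k<k'}X^{(k)}_j X^{(k')}_j$ become mutually independent, and the remaining outer variance $\var[\expect{\hat Z \mid (n_j)_j}]$ is controlled by noting that $\binom{n_j}{2}$ is a monotone function of the negatively associated multinomial counts, so the non-positive cross-covariances can be dropped and the sum bounded by marginal variances. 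Either route produces the same variance bound and completes the proof.
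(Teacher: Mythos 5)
Your proposal is correct, and it takes a genuinely different route from the paper's. The paper analyzes the (essentially identical, up to a $\tfrac{nd}{nd-1}$ rescaling) statistic $Z = \tfrac{1}{\ns^2}\sum_j(W_j^2-\ns_j)$ by conditioning on the counts $(\ns_1,\dots,\ns_\dims)$ and applying the law of total variance; the ``outer'' variance term $\var[\expectCond{Z}{N}]$ is then controlled by observing that the multinomial counts are negatively associated, so the cross-covariances of the monotone functions $\ns_j \mapsto \ns_j + \binom{\ns_j}{2}\mu_j^2$ are non-positive and can be dropped. You instead recast $\hat Z$ as an order-two U-statistic of the $n' = \ns\dims$ \iid\ pairs $(X^{(i)},j_i)$ with kernel $h((x,j),(x',j')) = \indic{j=j'}xx'$, and invoke the standard Hoeffding variance decomposition, never having to reason about the dependence structure of the counts at all. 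This is arguably cleaner, and your variance bound $\lesssim \dims/\ns^2 + \normtwo{\mu}^2/\ns$ is in fact tighter than the paper's (which carries an additional $\normtwo{\mu}^4/\ns$ term — harmless, but present because dropping negative covariances is lossier than computing them). Your expectation computation $\expect{\hat Z}=\normtwo{\mu}^2$ is also exact and unbiased, whereas the paper's (after unwinding, $\expect{Z}=\tfrac{nd-1}{nd}\normtwo{\mu}^2$, though the paper's Claim A.4 as written appears to drop a factor of $2$ in the coefficient — a constant-level slip that does not affect the result). One small inaccuracy in your framing: you say the paper's coordinate-wise empirical-mean statistic is ``ill-defined when some $n_j=0$,'' but the paper normalizes $\bar X_j$ by $\ns$ rather than by $\ns_j$, so it is always well-defined (and equals $0$ when $\ns_j=0$); that statistic coincides with yours up to the aforementioned rescaling, so the difference between the two proofs is purely in the analysis, not in the algorithm.
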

\begin{algorithm}[htbp]
    \begin{algorithmic}
        \Require Parameters $\ns \geq 2, \dims \geq 2, \gamma \in(0,\sqrt{\dims}]$, sample access to $\p_1,\dots, \p_\dims$ over $\{-1,1\}$
        \Ensure Returns \yes if the mean vector of $\p_1\otimes\cdots\otimes\p_\dims$ has $\lp[2]$ norm at most $\gamma/2$, and \no if it is at least $\gamma$
        \State Draw $(\ns_1,\dots, \ns_\dims)\sim \operatorname{Multinom}(\ns\dims, \dims, \frac{1}{\dims}\mathbf{1}_\dims)$
        \ForAll{$1\leq j\leq \dims$}
            \State Get $\ns_j$ \iid samples $X^{(1)}_j,\dots,X^{(\ns_j)}_j$ from $\p_{j}$
            \State Compute
            \[
                \bar{X}_j \gets \frac{1}{\ns}\sum_{i=1}^{\ns_j} X^{(i)}_j
            \]
        \EndFor
        \State Compute $Z \gets \normtwo{\bar{X}}^2 - \frac{\dims}{\ns}$, where 
        $\bar{X} = (\bar{X}_1,\dots,\bar{X}_\dims)$
        \If{$Z \leq \frac{\gamma^2}{4}$}
            \State \Return \yes \Comment{Small $\lp[2]$ norm}
        \EndIf
        \State \Return \no \Comment{Large $\lp[2]$ norm}
    \end{algorithmic}
    \caption{Bandit mean testing algorithm for Rademacher product distributions}
    \label{alg:mean:testing:symmetric}
\end{algorithm}
\begin{proofof}{{\cref{thm:unifprodtest:symmetric}}}
\label{proof:thm:unifprodtest:symmetric}
First, observe that in this setting, the random variables $\ns_1,\dots,\ns_\dims$  are known to satisfy the property of \emph{negatively association} (see, \eg~\cite{DubhashiR98,Wajc17}), a stronger condition than negative correlation which will need later in the proof (specifically, for~
\cref{claim:meantesting:symmetric:variance}). Moreover, they are marginally distributed as $\binomial{\ns\dims}{1/\dims}$ r.v.'s, with common expectation 
$
\ns
$
and variance
$
\ns\Paren{1-\frac{1}{\dims}}
$. Moreover, their (negative) covariances can be explicitly computed:
for any $1\leq j < j' \leq \dims$,
$
\cov(\ns_j,\ns_{j'})
= -\frac{\ns}{\dims} \leq 0
$.\medskip

 Let $\bar{X}_1,\dots, \bar{X}_\dims$ be defined as
    \[
        \bar{X}_j \eqdef \frac{1}{\ns}\sum_{i=1}^{\ns_j} X^{(i)}_j,\qquad 1\leq j\leq \dims\,,
    \]
    where $X^{(1)}_j,\dots, X^{(\ns_j)}_j$ are the $\ns_j$ \iid samples from $\p_j$ (where $\ns_j$ itself is a random variable). Finally, 
    let $\bar{X} \eqdef (\bar{X}_1,\dots,\bar{X}_\dims)$. We then define the statistic
    \begin{equation}
        Z \eqdef \normtwo{\bar{X}}^2 - \frac{\dims}{\ns}\,.
    \end{equation}
    \begin{claim}
        \label{claim:meantesting:symmetric:expectation}
    We have
    \[
        \expect{Z} = \Paren{\frac{\ns-1}{2\ns}+\frac{\sigma^2}{2\ns^2}}\normtwo{\mu}^2\,,
    \]
        where the randomness is taken over the choice of $\ns_1,\dots, \ns_\dims$ and the observations $(X_j^{(i)})_{j\in[\dims],i\in[\ns_j]}$.
    \end{claim}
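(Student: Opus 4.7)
The plan is to use the tower property, conditioning first on the random partition sizes $(\ns_1,\dots,\ns_\dims)$. Since the samples $(X_j^{(i)})_{i,j}$ are drawn independently of the $\ns_j$'s, the conditional distribution of the partial sum $S_j \eqdef \sum_{i=1}^{\ns_j} X_j^{(i)}$ given $\ns_j$ is just a sum of $\ns_j$ \iid $\pm 1$ variables with mean $\mu_j$. So the conditional first and second moments are entirely standard: $\expect{S_j\mid \ns_j} = \ns_j\mu_j$ and $\var{S_j\mid\ns_j} = \ns_j(1-\mu_j^2)$, which combine to give $\expect{S_j^2\mid\ns_j} = \ns_j(1-\mu_j^2)+\ns_j^2\mu_j^2$.

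Averaging over $\ns_j$ then only requires the first two moments of its marginal distribution, namely $\expect{\ns_j}=\ns$ and $\expect{\ns_j^2}=\ns^2+\sigma^2$ (recorded at the start of the proof). Plugging these in and dividing by $\ns^2$ to pass from $S_j$ to $\bar X_j = S_j/\ns$ yields an explicit formula for $\expect{\bar X_j^2}$ which splits cleanly into a $\mu_j$-free term of size $1/\ns$ and a term proportional to $\mu_j^2$ with coefficient involving $\sigma^2$. Summing over $j\in[\dims]$, using $\normtwo{\mu}^2 = \sum_j \mu_j^2$, and subtracting the centering $\dims/\ns$ in the definition of $Z$ produces the claimed expression for $\expect{Z}$ as a scalar multiple of $\normtwo{\mu}^2$.

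The calculation is essentially routine and poses no real obstacle beyond careful bookkeeping. The one point worth flagging is that, compared to the fixed-partition version in~\cref{claim:meantesting:simple:expectation}, the random partition contributes an additional $\sigma^2\mu_j^2/\ns^2$ term in every coordinate; this is the ``price'' of the random allocation and, reassuringly, it is small since $\sigma^2 = \ns(1-1/\dims)\leq \ns$. Crucially, only the per-coordinate marginal distribution of $\ns_j$ enters this computation: neither the joint distribution of the $\ns_j$'s nor their negative association plays any role here. Those structural properties will instead be indispensable for the variance bound of the next claim, where naively summing coordinate-wise variances is not enough and one must control the covariances between the $\bar X_j^2$'s.
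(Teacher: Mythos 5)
Your approach is the same as the paper's (tower property conditioning on $\ns_j$, then averaging using the marginal moments of the multinomial counts), and your conditional moment computation is correct: $\expect{S_j^2\mid\ns_j} = \ns_j(1-\mu_j^2)+\ns_j^2\mu_j^2 = \ns_j + \ns_j(\ns_j-1)\mu_j^2$. However, if you actually carry this through, you will \emph{not} reproduce the constant in the stated claim. From your formula, $\ns^2\expect{\bar X_j^2} = \ns + \ns(\ns-1)\mu_j^2 + \sigma^2\mu_j^2$, giving
\[
\expect{Z} = \Paren{\frac{\ns-1}{\ns}+\frac{\sigma^2}{\ns^2}}\normtwo{\mu}^2,
\]
which is exactly twice what the claim states. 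The paper's own derivation drops a factor of $2$ at the step where it replaces $2\sum_{i<i'}\mu_j^2$ by $\binom{\ns_j}{2}\mu_j^2$ rather than $\ns_j(\ns_j-1)\mu_j^2$. Your own sanity check nails this down: specializing to the deterministic case $\sigma^2\to 0$ must recover~\cref{claim:meantesting:simple:expectation}, which gives $\frac{\ns-1}{\ns}\normtwo{\mu}^2$, not $\frac{\ns-1}{2\ns}\normtwo{\mu}^2$. So trust your arithmetic here rather than the stated target; the right deliverable is the formula above, and the downstream threshold comparisons in the proof of~\cref{thm:unifprodtest:symmetric} should be rechecked against it (in particular the ``$\expect{Z}\leq\normtwo{\mu}^2/2$'' step in the accept case no longer holds as written, and the $\gamma^2/4$ acceptance threshold would need to be shifted).
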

    \begin{proof}
    Fix any $j\in[\dims]$. By the Law of Total Expectation, and using the fact that $|X_j^{(i)}|=1$ for all $i,j$,
    \begin{align*}
    \ns^2\expect{\bar{X}_j^2}
    &= \ns^2\expect{\expectCond{\bar{X}_j^2}{\ns_j}}
    = \expect{\sum_{i=1}^{\ns_j}\sum_{i'=1}^{\ns_j} \expectCond{X^{(i)}_j X^{(i')}_j}{\ns_j}} \\
    &= \expect{\sum_{i=1}^{\ns_j} \expectCond{1}{\ns_j}
    + 2\sum_{1\leq i< i' \leq \ns_j} \expectCond{X^{(i)}_j}{\ns_j} \expectCond{X^{(i')}_j}{\ns_j} } \\
    &= \expect{\ns_j
    + \binom{\ns_j}{2} \mu_j^2 } 
    = \ns
    + \expect{\binom{\ns_j}{2}}\mu_j^2
    = \ns
    + \binom{\ns}{2}\mu_j^2+ \frac{\var[\ns_j]}{2}\mu_j^2\,.
    \end{align*}
    \vmargin{Because we're splitting the expectations in the above, it might make it clearer to add what the expectations are over. Not the most urgent thing to change.}
        From there,
        \begin{align*}
            \expect{\normtwo{\bar{X}}^2}
            &= \sum_{j=1}^\dims \expect{\bar{X}_j^2}
            = \frac{\dims}{\ns}
            + \frac{\ns-1}{2\ns}\normtwo{\mu}^2
            + \frac{\sigma^2}{2\ns^2}\normtwo{\mu}^2\,,
        \end{align*}
        establishing the claim.
    \end{proof}
Turning to the variance of $Z$, we have the following bound:
    \begin{claim}
        \label{claim:meantesting:symmetric:variance}
    We have
    \[
        \var[Z] \leq \frac{\dims}{\ns^2} + \frac{2\normtwo{\mu}^4}{\ns} + \frac{2\normtwo{\mu}^2  }{\ns}
    \]
        where the randomness is taken over the choice of $\ns_1,\dots, \ns_\dims$ and the observations $(X_j^{(i)})_{j\in[\dims],i\in[\ns_j]}$.
    \end{claim}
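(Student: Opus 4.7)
The plan is to decompose $\var[Z]$ coordinate-wise and use the negative association of $(n_1,\dots,n_\dims)$ to kill every cross term, reducing the problem to bounding each $\var[\bar{X}_j^2]$ separately. Since $Z$ differs from $\sum_{j=1}^\dims \bar{X}_j^2$ by a deterministic constant,
\[
\var[Z] = \sum_{j=1}^\dims \var[\bar{X}_j^2] + 2\!\!\sum_{1\le j<j'\le \dims}\!\!\cov\bigl(\bar{X}_j^2,\bar{X}_{j'}^2\bigr),
\]
and the first step is to show every cross covariance is non-positive. The key observation is that, conditional on $(n_j,n_{j'})$, the variables $\bar{X}_j^2$ and $\bar{X}_{j'}^2$ are independent (their sample sets come from distinct, independent coordinates). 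Letting $g_j(k) \eqdef \expectCond{\bar{X}_j^2}{n_j=k} = \frac{k(1-\mu_j^2)+k^2\mu_j^2}{n^2}$, this yields $\cov(\bar{X}_j^2,\bar{X}_{j'}^2) = \cov(g_j(n_j), g_{j'}(n_{j'}))$; since each $g_j$ is non-decreasing on non-negative integers and $(n_1,\dots,n_\dims)$ is NA, the covariance of two monotone functions of disjoint entries is non-positive. Hence $\var[Z] \le \sum_j \var[\bar{X}_j^2]$.

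The second step handles each $\var[\bar{X}_j^2]$ via the Law of Total Variance: $\var[\bar{X}_j^2] = \expect{\var[\bar{X}_j^2\mid n_j]} + \var[\expect{\bar{X}_j^2\mid n_j}]$. For the conditional-variance term, given $n_j$ the partial sum $S_j \eqdef \sum_{i=1}^{n_j} X_j^{(i)}$ equals $2N'-n_j$ with $N'\sim\binomial{n_j}{(1+\mu_j)/2}$, so I would reuse verbatim the fourth-moment-of-a-Binomial computation from the proof of~\cref{thm:unifprodtest:restated} to obtain $\var[S_j^2\mid n_j] \lesssim n_j^2 + n_j^3\mu_j^2$. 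Combining with the elementary Binomial moment bounds $\expect{n_j^2}\lesssim n^2$ and $\expect{n_j^3}\lesssim n^3$ for $n_j\sim\binomial{n\dims}{1/\dims}$ gives $\expect{\var[\bar{X}_j^2\mid n_j]} \lesssim \frac{1}{n^2} + \frac{\mu_j^2}{n}$. For the other term, $g_j(n_j)$ is a quadratic polynomial in $n_j$; expanding its variance using $\var[n_j]\le n$, $\var[n_j^2]\lesssim n^3$, and $\cov(n_j,n_j^2)\lesssim n^2$ yields $\var[\expect{\bar{X}_j^2\mid n_j}] \lesssim \frac{1}{n^3} + \frac{\mu_j^4}{n} + \frac{\mu_j^2}{n^2}$.

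Summing these per-coordinate bounds over $j\in[\dims]$, absorbing lower-order terms, and using $\sum_j \mu_j^4 \le \normtwo{\mu}^4$ (monotonicity of $\ell^p$ norms), produces $\var[Z] \lesssim \frac{\dims}{n^2} + \frac{\normtwo{\mu}^2}{n} + \frac{\normtwo{\mu}^4}{n}$, which matches the stated bound up to absolute constants; sharpening the constants to the claimed $1,2,2$ only requires tracking the Binomial moments as carefully as was done in~\cref{claim:meantesting:simple:variance}.

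The main obstacle is the NA step. A direct application of NA to the joint vector consisting of $(n_1,\dots,n_\dims)$ together with the i.i.d.\ samples would not work, because $\bar{X}_j^2$ is not a monotone function of those jointly. The conditioning trick above sidesteps this by first integrating out the samples, reducing the question to the covariance of two genuinely monotone scalar functions of disjoint entries of the NA vector $(n_1,\dots,n_\dims)$, to which NA applies textbook-cleanly.
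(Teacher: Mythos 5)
Your proof is correct and follows essentially the same route as the paper's: both exploit negative association of $(\ns_1,\dots,\ns_\dims)$ to eliminate cross-coordinate contributions and then compute per-coordinate Binomial moments. The only difference is expository---you kill the cross-covariances up front by combining the law of total covariance with NA, then apply the law of total variance coordinate-by-coordinate, whereas the paper applies the law of total variance globally and invokes NA only inside the ``variance of the conditional expectation'' term; either ordering yields the same intermediate inequality $\var[Z]\le\sum_{j=1}^{\dims}\var[\bar{X}_j^2]$, and the remaining Binomial-moment bookkeeping is identical.
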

    \begin{proof}
        Since $\var[Z] = \var[\normtwo{\bar{X}}^2]$, we focus on the latter, writing $Y \eqdef \normtwo{\bar{X}}^2$ for convenience. 
        By the Law of Total Variance, letting $N \eqdef (\ns_1,\dots, \ns_\dims)$,
        \[
        \var[Y] = \expect{\var[Y\mid N]} + \var[\expectCond{Y}{N}]\,.
        \]
        We start by computing the second term. We already, in the proof of the previous claim, (implicitly) calculated $\expectCond{Y}{N}$, showing that
        \[
        \expectCond{Y}{N} = \frac{1}{\ns^2}\sum_{j=1}^\dims \Paren{\ns_j + \binom{\ns_j}{2}\mu_j^2}\,,
        \]
        Using negative association (since $x\mapsto x+\binom{x}{2}\mu_j^2$ is non-decreasing), one can show\cmargin{should show? Also, check.}\vmargin{Definitely need to show how we got this.} that
        \[
        \var[\expectCond{Y}{N}] \leq \frac{1}{\ns^4}\sum_{j=1}^\dims \var\Paren{\ns_j + \binom{\ns_j}{2}\mu_j^2}\,,
        \]
        from which a direct (and tedious) computation yields
        \begin{align}
             \var[\expectCond{Y}{N}] 
             &\leq \frac{1}{\dims^3\ns^4}\sum_{j=1}^\dims
             \Paren{\ns\dims^3 + 2\ns^2\dims^3\mu_j^2 + 2\ns^3\dims^3 \mu_j^4} \notag\\
             &= \frac{\dims}{\ns^3} + \frac{2}{\ns^2}\normtwo{\mu}^2+ \frac{2}{\ns}\norm{\mu}_4^4 \notag\\
             &\leq \frac{\dims}{\ns^3} + \frac{2}{\ns^2}\normtwo{\mu}^2+ \frac{2}{\ns}\normtwo{\mu}^4\,. \label{eq:firstcomponent:totalvariance} 
        \end{align}
        For the first term, we start by computing $\var[Y\mid N]$:
        \begin{align*}
            \var[Y\mid N]
            &= \var[ \sum_{j=1}^\dims \bar{X}_j^2 \mid N] \\
            &= \sum_{j=1}^\dims \var[ \bar{X}_j^2 \mid N] \tag{conditional independence} \\
            &= \frac{1}{\ns^4}\sum_{j=1}^\dims \Paren{\binom{\ns_j}{2}+ 6\binom{\ns_j}{3}\mu_j^2 + \Paren{6\binom{\ns_j}{4}-\binom{\ns_j}{2}^2}\mu_j^4} \tag{explicit computation} \\
            &\leq \frac{1}{\ns^4}\sum_{j=1}^\dims \Paren{\binom{\ns_j}{2} + 6\binom{\ns_j}{3}\mu_j^2}\,. 
        \end{align*}
        This leads to, since $\ns_j \sim\binomial{\ns\dims}{\frac{1}{\dims}}$,\footnote{If $X\sim\binomial{m}{p}$, then $\expect{\binom{X}{2}}= \binom{m}{2}p^2$ and $\expect{\binom{X}{3}}= \binom{m}{3}p^3$.}
        \begin{align}
            \expect{\var[Y\mid N]}
            &\leq \frac{1}{\ns^4}\sum_{j=1}^\dims \Paren{\expect{\binom{\ns_j}{2}} + 6\mu_j^2\expect{\binom{\ns_j}{3}}} \notag\\
            &= \frac{\ns(\ns\dims-1)}{2\ns^4} + 6\frac{\ns\dims(\ns\dims-1)(\ns\dims-2)}{6\dims^3\ns^4}\normtwo{\mu}^2  \notag\\
            &\leq \frac{\dims}{2\ns^2} + \frac{\normtwo{\mu}^2}{\ns}\,.  \label{eq:secondcomponent:totalvariance} 
        \end{align}
        Combining~\eqref{eq:firstcomponent:totalvariance} and~\eqref{eq:secondcomponent:totalvariance}, we get
        \begin{align*}
        \var[Z]
        = \var[Y]
        &\leq \frac{\dims}{\ns^3} + \frac{2\normtwo{\mu}^2}{\ns^2}+ \frac{2\normtwo{\mu}^4}{\ns} + \frac{\dims}{2\ns^2} + \frac{\normtwo{\mu}^2  }{\ns} \\
        &\leq \frac{2\normtwo{\mu}^4}{\ns} + \frac{\dims}{\ns^2} + \frac{2\normtwo{\mu}^2  }{\ns} \tag{as $\ns \geq 2$}
        \end{align*}
        as claimed.
    \end{proof}
    We are now able to conclude the proof of~\cref{thm:unifprodtest:symmetric}, invoking the expectation and variance analysis of~\cref{claim:meantesting:symmetric:expectation,claim:meantesting:symmetric:variance}. Recall that $\sigma^2 = \ns(1-1/\dims)$.
    \begin{itemize}
        \item If $\normtwo{\mu}\leq \frac{\gamma}{2}$, then $\expect{Z} \leq \frac{\normtwo{\mu}^2}{2} \leq \frac{\gamma^2}{8}$, and
        \begin{align*}
            \probaOf{Z > \frac{\gamma^2}{4} } 
            &\leq \frac{64\var[Z]}{\gamma^4} \tag{by Chebychev}\\
            &\leq \frac{64\dims}{\ns^2\gamma^4} + \frac{128\normtwo{\mu}^4}{\ns\gamma^4} + \frac{128\normtwo{\mu}^2  }{\ns\gamma^4} \tag{by \cref{claim:meantesting:symmetric:variance}}\\
            &\leq \frac{64\dims}{\ns^2\gamma^4} + \frac{8}{\ns} + \frac{32}{\ns\gamma^2} 
        \end{align*}
        which is at most $1/3$ for $\ns \geq C_1\cdot \max\Paren{\frac{\sqrt{\dims}}{\gamma^2}, 1}$ (for a sufficiently large absolute constant $C_1>0$).
        \item If $\normtwo{\mu}\geq \gamma$, then by~\cref{claim:meantesting:symmetric:expectation} we have $\expect{Z} \geq \frac{1}{2}\Paren{1-\frac{1}{\ns\dims}}\normtwo{\mu}^2 \geq \frac{3}{4}\gamma^2$ (using $\ns,\dims\geq 2$), and so
        \begin{align*}
            \probaOf{Z \leq \frac{\gamma^2}{4} }
            &\leq \probaOf{Z \leq \frac{1}{3}\expect{Z} } \\
            &\leq \probaOf{\abs{\expect{Z}-Z} \geq \frac{2}{3}\expect{Z} } \\
            &\leq \frac{9\var[Z]}{4\expect{Z}^2} \tag{by Chebychev}\\
            &\leq \frac{9\dims}{4\ns^2\expect{Z}^2} + \frac{9\normtwo{\mu}^4}{2\ns\expect{Z}^2} + \frac{9\normtwo{\mu}^2}{2\ns\expect{Z}^2} \tag{by \cref{claim:meantesting:symmetric:variance}}\\
            &\leq \frac{4\dims}{\ns^2\gamma^4} + \frac{8}{\ns} + \frac{8}{\ns\gamma^2}  \tag{as $\expect{Z} \geq \frac{3}{4}\normtwo{\mu}^2$}
        \end{align*}
        which is at most $1/3$ for $\ns \geq C_2\cdot \max\Paren{\frac{\sqrt{\dims}}{\gamma^2}, 1}$ (for a sufficiently large absolute constant $C_2>0$).
    \end{itemize}
    Choosing $C \eqdef \max(C_1,C_2)$ concludes the proof.
\end{proofof}

\subsection{The Hadamard transform is norm-preserving}
    \label{app:hadnormpreserved}
    The following proof is included for completeness, but follows much the same route as~\cite{ACFT:19}.

\begin{proofof}{\cref{lemma:hadamard:norm:preservation}}
\label{proof:hadamard:norm}
By the definition of $\p(\chi_j)$ we have that

\begin{align*}
\p(\chi_{j}) & =\frac{1}{2}(h_{j} + \mathbf{1}_{\ab})\p \\
 & = \frac{1}{2}(h_{j}\p + 1)
\end{align*}

where $h_{j}\in \{\pm1\}^{\ab}$ is the $j$'th column of the Hadamard matrix $H$, and $\mathbf{1}_{\ab}$ is the all-ones vector of order $\ab$. Defining $\p(\chi)=(\p(\chi_{1}),\dots,\p(\chi_{\ab}))$ we see that

\begin{align*}
    \p(\chi)=\frac{1}{2}(H\p + \mathbf{1}_{\ab}).
\end{align*}

\noindent Returning to the norm, we have
\begin{align*}
\lVert \p(\chi)-\uniform_{\ab}(\chi) \rVert_{2}^2  & = (\p(\chi)-\uniform_{\ab}(\chi))^T(\p(\chi)-\uniform_{\ab}(\chi)) \\
 & = \frac{1}{4} (H\p  - H\uniform_{\ab})^T(H\p  - H\uniform_{\ab}) \\
 & = \frac{1}{4}(H(\p-\uniform))^T(H(\p-\uniform)) \\
 & = \frac{1}{4}(\p-\uniform)^TH^TH(\p-\uniform) \\
 & = \frac{\ab}{4} \lVert \p-\uniform \rVert_{2}^{2} & (H^TH=\ab I)
\end{align*}
Combining this with an application of Cauchy-Schwartz to relate the total variation distance and the $\ell_2$ norm completes the proof.

\end{proofof}

\subsection{Lower bounds}
    \label{app:lowerbounds}
In this appendix, we provide a proof of our communication lower bound for symmetric (uniformity) testing protocols, restated below:
\begin{proposition}[\cref{thm:lower:bound:communication}, restated]
    \label{thm:lower:bound:communication:restated}
Any \emph{symmetric, private-coin} algorithm for uniformity testing (over domain of size $\ab$) with distance parameter $\dst\leq 1/\ab$ and $\ms=1$ sample per user requires at least $\log_2\ab$ bits of communication per user. (This holds regardless of whether the algorithm is locally private or not.)
\end{proposition}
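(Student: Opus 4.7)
The plan is to exhibit, for any symmetric, private-coin protocol using $b < \log_2\ab$ bits per user, two input distributions~--~the uniform one and some suitable alternative $\q$ with $\totalvardist{\uniform_\ab}{\q}>\dst$~--~that induce identical distributions on the server's transcript, thereby ruling out any such test. The key observation is that in a symmetric private-coin protocol, every user independently applies the \emph{same} randomizer $Q\colon[\ab]\to\{0,1\}^b$, so the message distribution when a user's single sample is $i$ is some $\mu_i\in\Delta_{2^b}$, and the server's $\ns$ messages are i.i.d.\ from the mixture $\sum_i \p(i)\mu_i$. It is therefore natural to consider the linear map $T\colon\R^\ab\to\R^{2^b}$ defined by $T e_i = \mu_i$, in which case the server's view is distributed as $(T\p)^{\otimes \ns}$, and any two $\p,\q$ with $T\p=T\q$ are perfectly indistinguishable.

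The core of the argument is then linear-algebraic. Since each $\mu_i$ is a probability vector, one has $\mathbf{1}^\top T v = \sum_i v_i$, so every $v\in\ker T$ automatically satisfies $\sum_i v_i = 0$. Moreover $\dim \ker T \geq \ab - 2^b$, which is at least $1$ whenever $b<\log_2\ab$ (equivalently $2^b\leq \ab-1$). I would pick a nonzero $v\in\ker T$ and rescale it so that $\|v\|_\infty = 1/\ab$, at which point $\q\eqdef \uniform_\ab + v$ is a valid probability distribution on $[\ab]$.

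To conclude, I would verify the two required properties of $\q$. First, because $v\neq 0$ has coordinates summing to zero, it must have both a positive and a negative entry, giving $\|v\|_1 \geq 2\|v\|_\infty = 2/\ab$, so $\totalvardist{\uniform_\ab}{\q} = \tfrac12\|v\|_1 \geq 1/\ab \geq \dst$. Second, $Tv=0$ implies $T\q = T\uniform_\ab$, so the server's entire view has the \emph{same} distribution under $\p=\uniform_\ab$ and $\p=\q$; hence no decision rule achieves any advantage, contradicting the testing guarantee. Note that this argument never appeals to the privacy of $Q$, matching the ``regardless of whether the algorithm is locally private or not'' clause.

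The only slightly delicate point is the boundary value $\dst = 1/\ab$: the construction above yields $\totalvardist{\uniform_\ab}{\q}\geq 1/\ab$, which is only $\geq \dst$ rather than strictly $>\dst$, and in degenerate cases (for instance, when $\ker T$ happens to be one-dimensional and spanned by $e_i-e_j$ for some pair) the inequality is tight. For any $\dst<1/\ab$ this is a non-issue; at the boundary one can either exploit that generically $\dim \ker T \geq 2$ (using that $2^b\leq \ab-1$ is typically a strict slack) to pick a $v$ supported on more than two coordinates, yielding $\|v\|_1 > 2\|v\|_\infty$, or simply reinterpret the claim as ``for every $\dst$ strictly less than $1/\ab$''. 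I expect this to be the only point requiring minor bookkeeping; the rest is the clean linear-algebraic indistinguishability argument outlined above.
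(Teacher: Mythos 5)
Your proposal is correct and follows essentially the same linear-algebraic route as the paper's proof (which itself adapts an argument of Acharya and Sun): represent the shared randomizer as a channel matrix, note that $2^b<\ab$ forces a nontrivial kernel, verify that any kernel vector sums to zero (using column-stochasticity), and add a rescaled such vector to the uniform distribution to get a perfectly indistinguishable alternative. The only cosmetic difference is the normalization: you fix $\|v\|_\infty=1/\ab$ and then argue $\|v\|_1\geq 2/\ab$, whereas the paper fixes $\|\mathbf{e}\|_1=2/\ab$ and deduces $\|\mathbf{e}\|_\infty\leq 1/\ab$; both yield a valid probability vector at distance (at least/exactly) $1/\ab$ from uniform. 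Your remark about the boundary case $\dst=1/\ab$ (where the far case is defined by strict inequality $>\dst$) is a fair observation, and in fact the paper's proof has the same imprecision there, so you are if anything being more careful than the source.
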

\begin{proof}
    The argument is nearly identical as that of~\cite[Theorem~2]{AcharyaS:19}. By contradiction, fix any $\numbits$-bit symmetric, private-coin algorithm $A$ for uniformity testing with $1\leq \numbits < \log_2 \ab$ bits of communication per used: and let $W\in \R^{\ab\times 2^{\numbits}}$ be the equivalent representation (of the algorithm used by each user) as a transition probability matrix, where
    \[
        W(x,y) = \probaCond{ A(X) = y }{ X = x}\,.
    \]
    On input distribution $\p$, the distribution of the output at each user is then the probability distribution $W^\top \p$, as, for every $y$,
    \[
    (W^\top \p)(y) = \sum_{x\in [\ab]} W(x,y) \p(x) = \shortexpect_{X\sim \p}[ W(X,y) ]\,.
    \]
    But since $2^{\numbits} < \ab$, the $2^\numbits$-by-$\ab$ matrix $W^\top$ is under-determined, and there exists a non-zero vector $\mathbf{e}$ such that $W^\top\mathbf{e} = \textbf{0}$; without loss of generality, up to rescaling we have $\normone{\mathbf{e}}=2/\ab$. One can further check that $\sum_{x\in [\ab]} \mathbf{e}_x = 0$, as, for every $y\in[2^\numbits]$,
    \[
        0 = (W^\top \mathbf{e})(y) = \sum_{x\in [\ab]} W(x,y) \mathbf{e}_x
    \]
    so that, summing over $y$,
    \[
        0 = \sum_{y\in[2^\numbits]}(W^\top \mathbf{e})(y) = \sum_{x\in [\ab]} \underbrace{\sum_{y\in[2^\numbits]} W(x,y)}_{=1} \mathbf{e}_x = \sum_{x\in [\ab]} \mathbf{e}_x\,.
    \]
    In particular, both the positive and negative parts of $\mathbf{e}$ have $\lp[1]$ norm $1/\ab$, and so $\lp[\infty]$ norm at most $1/\ab$. It then follows that the probability distribution defined by
    \[
    \p \eqdef \uniformOn{\ab} + \mathbf{e}
    \]
    is indeed a valid probability distribution (it sums to one, and is non-negative as $\norminf{\mathbf{e}} \leq 1/\ab$), and is at total variation distance exactly $1/\ab$ from $\uniformOn{\ab}$. As such, it must be distinguished (with high probability) from the uniform distribution $\uniformOn{\ab}$ by any uniformity testing algorithm with distance parameter $\dst \leq 1/\ab$; but $A$ cannot do so, as
    \[
        W^\top \p = W^\top \uniformOn{\ab} + \underbrace{W^\top \mathbf{e}}_{=\mathbf{0}} = W^\top \uniformOn{\ab}
    \]
    that is, the distribution of messages from the users is exactly the same under $\p$ and $\uniformOn{\ab}$.
\end{proof}

\end{document}